\newcounter{rnc}
\newcommand{\Rn}[1]{\setcounter{rnc}{#1}\Roman{rnc}}
\newtheorem{theorem}{Theorem}[section]
\newtheorem{lemma}[theorem]{Lemma}
\newtheorem{claim}[theorem]{Claim}
\newtheorem{observation}[theorem]{Observation}
\newtheorem{corollary}[theorem]{Corollary}
\theoremstyle{definition}
\newtheorem{definition}[theorem]{Definition}
\theoremstyle{remark}
\newcommand{\CC}{{\mathbb C}}
\newcommand{\QQ}{{\mathbb Q}}
\newcommand{\RR}{{\mathbb R}}
\newcommand{\ZZ}{{\mathbb Z}}
\newcommand{\bM}{{\bf M}}
\newcommand{\tbM}{\tilde{\bf M}}
\newcommand{\cF}{{\cal F}}
\newcommand{\cI}{{\cal I}}
\newcommand{\tcI}{\tilde{\cal I}}
\newcommand{\cL}{{\cal L}}
\newcommand{\cS}{{\cal S}}
\newcommand{\cX}{{\cal X}}
\newcommand{\cY}{{\cal Y}}
\newcommand{\cZ}{{\cal Z}}
\newcommand{\tm}{\tilde{m}}
\newcommand{\tE}{\tilde{E}}
\newcommand{\tF}{\tilde{F}}
\newcommand{\tG}{\tilde{G}}
\newcommand{\tM}{\tilde{M}}
\newcommand{\tV}{\tilde{V}}
\newcommand{\hE}{\hat{E}}
\newcommand{\hH}{\hat{H}}
\newcommand{\hU}{\hat{U}}
\title{Making Bipartite Graphs DM-irreducible}
\author{
  Krist\'{o}f B\'{e}rczi\thanks{
    E\"otv\"os Lor\'and University, 1117 Budapest, Hungary.
    Email: \texttt{berkri@cs.elte.hu}
  } \and Satoru Iwata\thanks{
    University of Tokyo, Tokyo 113-8656, Japan.
    Email: \texttt{iwata@mist.i.u-tokyo.ac.jp}
  } \and Jun Kato\thanks{
    Toyota Motor Corporation, Toyota 471-8571, Japan.
    Email: \texttt{jun\_kato\_aa@mail.toyota.co.jp}
  } \and Yutaro Yamaguchi\thanks{
    Osaka University, Osaka 565-0871, Japan.
    Email: \texttt{yutaro\_yamaguchi@ist.osaka-u.ac.jp}
  }}
\date{\empty}
\begin{document}
\maketitle
\thispagestyle{empty}

\begin{abstract}
The Dulmage--Mendelsohn decomposition (or the DM-decomposition)
gives a unique partition of the vertex set of a bipartite graph
reflecting the structure of all the maximum matchings therein.
A bipartite graph is said to be DM-irreducible if its DM-decomposition
consists of a single component.

In this paper, we focus on the problem of making a given bipartite graph
DM-irreducible by adding edges.
When the input bipartite graph is balanced
(i.e., both sides have the same number of vertices)
and has a perfect matching,
this problem is equivalent to making a directed graph
strongly connected by adding edges,
for which the minimum number of additional edges was characterized by Eswaran and Tarjan (1976).

We give a general solution to this problem,
which is divided into three parts.
We first show that our problem can be formulated as a special case 
of a general framework of covering supermodular functions,
which was introduced by Frank and Jord\'an (1995) to investigate
the directed connectivity augmentation problem.
Secondly, when the input graph is not balanced,
the problem is solved via matroid intersection.
This result can be extended to the minimum cost version
in which the addition of an edge gives rise to an individual cost.
Thirdly, for balanced input graphs,
we devise a combinatorial algorithm that
finds a minimum number of additional edges to attain the DM-irreducibility,
while the minimum cost version of this problem is NP-hard.
These results also lead to min-max characterizations of the minimum number,
which generalize the result of Eswaran and Tarjan.
\end{abstract}

\clearpage
\thispagestyle{empty}
\tableofcontents
\clearpage
\setcounter{page}{1}

\section{Introduction}\label{sec:Introduction}
The {\em Dulmage--Mendelsohn decomposition}~\cite{DM1958, DM1959}
(or the {\em DM-decomposition})
of a bipartite graph gives a unique partition of the vertex set,
which reflects the structure of all the maximum matchings therein
(see Section~\ref{sec:DM-decomposition} for the details).
A bipartite graph is said to be {\em DM-irreducible}
if its DM-decomposition consists of only one nonempty component.

In this paper, we focus on the following question:
how many additional edges are necessary
to make a given bipartite graph $G$ DM-irreducible?
\begin{description}
  \setlength{\itemsep}{0mm}
\item[\underline{Problem (DMI)}]

\item[Input:]\vspace{1mm}
  A bipartite graph $G = (V^+, V^-; E)$.

\item[Goal:]
  Find a minimum-cardinality set $F$ of additional edges
  such that $G + F$ is DM-irreducible.
\end{description}\vspace{1mm}

Throughout this paper, for an input bipartite graph $G = (V^+, V^-; E)$,
we define $n := \max\{|V^+|, |V^-|\}$, $\ell := \min\{|V^+|, |V^-|\}$, and $m := |E|$.
We say that $G$ is {\em balanced} if $n = \ell$, and {\em unbalanced} otherwise.
We denote by ${\rm opt}(G)$ the optimal value of Problem (DMI),
i.e., the minimum number of additional edges to make $G$ DM-irreducible.

\medskip
When $G$ is balanced and has a perfect matching,
Problem (DMI) is equivalent to the problem of making a directed graph strongly connected
by adding as few edges as possible (see Section~\ref{sec:Equivalence_SC}).
Eswaran and Tarjan \cite{ET1976} introduced the latter problem,
and gave a simple solution (Theorem~\ref{thm:ET}).

A natural generalization of the strong connectivity augmentation is to find
a smallest set of additional edges that make a given directed graph $k$-connected
(i.e., such that removing at least $k$ vertices is needed to violate strong connectivity).
In order to investigate this problem, Frank and Jord\'an
\cite{FJ1995} introduced a general framework of covering a crossing supermodular
function by directed edges. They provided a min-max duality theorem and
a polynomial-time algorithm relying on the ellipsoid method.
Later, V\'egh and Bencz\'ur \cite{VB2008} devised a combinatorial algorithm whose
running time bound is pseudopolynomial, depending polynomially on the function values.

In this paper, we give a general solution to Problem (DMI) summarized as follows.
\begin{itemize}
  \setlength{\itemsep}{.5mm}
\item
  In general, the problem is within the Frank--Jord\'an framework.
\item
  When $G$ is unbalanced, the problem is solved via matroid intersection.
\item
  When $G$ is balanced, the problem is directly solved by an efficient algorithm.
\end{itemize}

\subsection{Summary of main results}
We first show that Problem (DMI) is a special case of the Frank--Jord\'an framework in general.
To be precise, we reduce the unbalanced case to the balanced case,
and then formulate the balanced case in terms of the Frank--Jord\'an framework.
As a main consequence of this reduction,
we derive the following min-max duality on Problem (DMI)
from the min-max duality theorem of Frank and Jord\'an.

For a one-side vertex set $X \subseteq V^\pm$ in a bipartite graph $G = (V^+, V^-; E)$,
we denote by $\Gamma_G(X) \subseteq V^\mp$ the set of vertices in the other side
that are adjacent to some vertex in $X$.
For a set $S$, a {\em subpartition} of $S$
is a partition of some subset of $S$ (i.e., a family of disjoint nonempty subsets of $S$).
A subpartition $\cX$ of $S$ is said to be {\em proper} if $\cX \neq \{S\}$.
For a subpartition $\cX$ of $V^+$ or of $V^-$, we define
\begin{equation}\label{eq:duality}
  \tau_G(\cX) := \sum_{X \in \cX} \left(|X| - |\Gamma_G(X)| + 1\right).
\end{equation}
Recall that ${\rm opt}(G)$ denotes the optimal value of Problem {\rm (DMI)}.

\begin{theorem}\label{thm:duality}
  Let $G = (V^+, V^-; E)$ be a bipartite graph with $|V^+| = |V^-| \geq 2$.
  Then we have
  \[{\rm opt}(G) = \max_\cX\tau_G(\cX),\]
  where the maximum is taken over all proper subpartitions $\cX$ of $V^+$ and of $V^-$.
\end{theorem}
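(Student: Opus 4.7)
The plan is to establish the two inequalities separately.

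\textbf{Weak duality} $(\mathrm{opt}(G) \geq \max_\cX \tau_G(\cX))$. Let $F$ be any edge set with $G+F$ DM-irreducible, and let $\cX$ be a proper subpartition of $V^+$. Since $G+F$ is balanced with $|V^+|=|V^-|\geq 2$ and forms a single (square) DM-component, the classical characterization of DM-irreducibility forces the strict Hall condition $|\Gamma_{G+F}(X)|\geq |X|+1$ for every nonempty proper subset $X \subset V^+$, and symmetrically for $V^-$. For each $X \in \cX$ --- a nonempty proper subset of $V^+$ since $\cX \neq \{V^+\}$ --- this requires at least $|X|-|\Gamma_G(X)|+1$ newly acquired $V^-$-neighbours of $X$, and each such neighbour is supplied by at least one distinct edge of $F$ whose $V^+$-endpoint lies in $X$. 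Because the members of $\cX$ are pairwise disjoint, the corresponding edge subsets of $F$ are pairwise disjoint, so $|F|\geq \sum_{X\in\cX}(|X|-|\Gamma_G(X)|+1)=\tau_G(\cX)$. The argument for proper subpartitions of $V^-$ is symmetric.

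\textbf{Strong duality} $(\mathrm{opt}(G) \leq \max_\cX \tau_G(\cX))$. For the reverse inequality I would invoke the Frank--Jord\'an~\cite{FJ1995} min-max theorem for covering crossing supermodular functions by directed edges. The steps are: (i) encode $G$ as a digraph $D$ on $V^+ \cup V^-$ by a canonical orientation of its edges; (ii) define a set function $p: 2^{V^+ \cup V^-} \to \ZZ$ whose in-covering corresponds exactly to restoring the strict Hall condition on both sides of $G+F$; (iii) verify that $p$ is crossing supermodular; and (iv) apply the Frank--Jord\'an theorem to obtain $\mathrm{opt}(G) = \max \sum_i p(Z_i)$ over the admissible subpartitions $\{Z_i\}$ of $V^+ \cup V^-$. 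A final bookkeeping step identifies the maximizing $Z_i$ with subsets of $V^+$ (or of $V^-$) and rewrites $p(Z_i) = |Z_i| - |\Gamma_G(Z_i)| + 1$, reproducing $\tau_G(\cX)$.

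\textbf{Main obstacle.} The crux of the argument is the design of $p$ in step (ii). The function must simultaneously encode the $V^+$- and $V^-$-side strict Hall conditions in a single crossing-supermodular function, and it must respect that $F$ consists only of bipartite edges while the Frank--Jord\'an framework covers arbitrary arcs. I expect $p$ to be built as a combination of two one-sided deficiency functions, with crossing supermodularity proved by case analysis depending on whether a crossing pair of sets lies on one side of the bipartition or straddles it. Once $p$ is correctly set up, the identification of the Frank--Jord\'an subpartitions with proper subpartitions of $V^+$ or $V^-$, together with the evaluation of $p$, amount to careful bookkeeping.
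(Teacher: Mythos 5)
Your weak-duality argument is correct and matches the paper's (Lemma~\ref{lem:weakdual}): the strict Hall condition $|\Gamma_{G+F}(X)|\geq|X|+1$ for nonempty proper $X\subset V^+$ (Lemma~\ref{lem:DM-irreducibility}) forces $|F(X,\,V^-\setminus\Gamma_G(X))|\geq|X|-|\Gamma_G(X)|+1$, and disjointness of the members of $\cX$ makes these edge sets disjoint. The strong-duality direction, however, has a genuine gap in step (ii): you propose a single-set function $p\colon 2^{V^+\cup V^-}\to\ZZ$ whose \emph{in-covering} encodes the Hall conditions, but in-degree covering of a set $Z$ only counts arcs entering $Z$ from outside, whereas the deficiency of $X^+$ must be repaired by arcs whose tail lies in $X^+$ \emph{and} whose head lies in the specific set $V^-\setminus\Gamma_G(X^+)$. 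No single-set crossing supermodular function captures this joint condition on both endpoints; this is precisely why the Frank--Jord\'an theorem used here is the \emph{bi-set} version, on a crossing family $\cF\subseteq 2^{V^+}\times 2^{V^-}$, where a pair $(X^+,X^-)$ is covered by arcs from $X^+$ to $X^-$. The paper takes $\cF=\{(X^+,X^-):\emptyset\neq X^+\subseteq V^+,\ \emptyset\neq X^-\subseteq V^-,\ E(X^+,X^-)=\emptyset\}$ and $g(X^+,X^-)=\max\{0,|X^+|+|X^-|-n+1\}$, and proves (Claim~\ref{cl:reduction}) that covering $g$ is equivalent to DM-irreducibility of $G+F$; incidentally, this setup also disposes of your worry about restricting to bipartite edges, since the covering arcs live in $V^+\times V^-$ by construction.

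The second gap is that the ``final bookkeeping'' you defer is the bulk of the paper's proof. The Frank--Jord\'an maximum ranges over pairwise-\emph{independent} subfamilies $\cS\subseteq\cF$, and independence of $(X^+,X^-)$ and $(Y^+,Y^-)$ only requires $X^+\cap Y^+=\emptyset$ \emph{or} $X^-\cap Y^-=\emptyset$; a priori neither the $V^+$-sides nor the $V^-$-sides of $\cS$ form a subpartition, so the dual optimum could exceed $\max_\cX\tau_G(\cX)$. Closing this requires the uncrossing argument of Claims~\ref{cl:uncrossing} and~\ref{cl:existence} (replacing $(X^+,X^-),(Y^+,Y^-)$ by $(X^+\cap Y^+,X^-\cup Y^-)$ without decreasing $\eta$, and showing an uncrossable pair always exists), plus a separate treatment of degenerate cases such as $\cS=\{(V^+,V^-)\}$, where $g(V^+,V^-)=n+1$ must be dominated by the subpartition of $V^+$ into singletons with $\tau$-value $2n$. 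Without the correct bi-set formulation and the uncrossing step, the proof does not go through.
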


\begin{theorem}\label{thm:duality_unbalanced}
  Let $G = (V^+, V^-; E)$ be a bipartite graph with $|V^+| < |V^-|$.
  Then we have
  \[{\rm opt}(G) = \max_{\cX^+}\tau_G(\cX^+),\]
  where the maximum is taken over all subpartitions $\cX^+$ of $V^+$.
\end{theorem}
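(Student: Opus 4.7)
My plan is to reduce Theorem~\ref{thm:duality_unbalanced} to the already-established balanced case, Theorem~\ref{thm:duality}. Let $d := |V^-| - |V^+| \ge 1$, and construct a balanced auxiliary graph $\tilde G = (V^+ \cup D, V^-; \tilde E)$ by adjoining a set $D$ of $d$ new ``dummy'' vertices to the smaller side, where $\tilde E$ consists of $E$ together with every possible edge between $D$ and $V^-$. The first step is to verify, using the description of the DM-decomposition in Section~\ref{sec:DM-decomposition}, that $G + F$ is DM-irreducible if and only if $\tilde G + F$ is DM-irreducible for every $F \subseteq V^+ \times V^-$, and that an optimal augmentation of $\tilde G$ may always be chosen inside $V^+ \times V^-$. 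Intuitively, the dummies are already saturated by their adjacency to all of $V^-$, so they impose no additional obstruction; this yields ${\rm opt}(G) = {\rm opt}(\tilde G)$.

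Applying Theorem~\ref{thm:duality} to $\tilde G$ gives ${\rm opt}(\tilde G) = \max_{\tilde\cX} \tau_{\tilde G}(\tilde\cX)$, where $\tilde\cX$ ranges over proper subpartitions of $V^+ \cup D$ or of $V^-$. The remaining task is to show that this maximum equals $\max_{\cX^+} \tau_G(\cX^+)$ over all subpartitions of $V^+$. The direction $\ge$ is immediate: any subpartition $\cX^+$ of $V^+$ is a proper subpartition of $V^+ \cup D$ (since $D \neq \emptyset$), and $\tau_{\tilde G}(\cX^+) = \tau_G(\cX^+)$. For the converse, given a proper subpartition $\tilde\cX$ of $V^+ \cup D$, every part $X$ meeting $D$ has $\Gamma_{\tilde G}(X) = V^-$ and therefore contributes $|X| - |V^-| + 1 \le 0$ (with equality excluded since $X \subsetneq V^+ \cup D$ by properness), so dropping such parts produces a subpartition $\cX^+ \subseteq V^+$ with $\tau_G(\cX^+) \ge \tau_{\tilde G}(\tilde\cX)$.

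The principal obstacle is the case of a proper subpartition $\cY = \{Y_1, \dots, Y_k\}$ of $V^-$. For the natural candidate $X := V^+ \setminus \bigcup_i \Gamma_G(Y_i)$, I would use $\Gamma_G(X) \subseteq V^- \setminus \bigcup_i Y_i$ and the disjointness of the $Y_i$'s to obtain
\[
  \tau_G(\{X\}) \ge \sum_i \bigl(|Y_i| - |\Gamma_G(Y_i)|\bigr) - d + 1 = \tau_{\tilde G}(\cY) + (k-1)(d-1) \ge \tau_{\tilde G}(\cY),
\]
which settles the case $X \neq \emptyset$. When $X = \emptyset$, i.e., $\bigcup_i \Gamma_G(Y_i) = V^+$, I would instead start from the individual sets $X_i := V^+ \setminus \Gamma_G(Y_i)$, each of which satisfies $p(X_i) := |X_i| - |\Gamma_G(X_i)| + 1 \ge \tau_{\tilde G}(\{Y_i\})$, and apply an uncrossing argument exploiting the supermodularity of $p$ on subsets of $V^+$ to extract a genuine subpartition of $V^+$ whose total $\tau_G$-value is still at least $\tau_{\tilde G}(\cY)$. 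This uncrossing step, together with the DM-equivalence verification in the first step, is the main technical work of the proof.
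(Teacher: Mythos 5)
Your overall route---pad $V^+$ with $d$ dummy vertices joined to all of $V^-$, check that DM-irreducibility (hence ${\rm opt}$) is preserved, apply Theorem~\ref{thm:duality} to $\tilde G$, and convert the resulting dual certificate into a subpartition of $V^+$---is exactly the paper's alternative proof in Appendix~\ref{sec:app2} (the paper's primary proof instead goes through Edmonds' matroid intersection theorem and the two matroids of Section~\ref{sec:reduction_WMI}). Your first step is the paper's Lemma~\ref{lem:balanced}, your treatment of certificates on the $V^+\cup D$ side coincides with the paper's, and your computation for a subpartition $\cY$ of $V^-$ with $X := V^+\setminus\bigcup_i\Gamma_G(Y_i)\neq\emptyset$ is correct and in fact slightly slicker than the paper's, which first merges all parts of $\cY$ into one.

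The gap is the case $X=\emptyset$. The uncrossing you propose there does not deliver what you need: replacing crossing sets $X_i, X_j$ by $X_i\cup X_j$ and $X_i\cap X_j$ preserves $\sum_i p(X_i)$ by supermodularity but produces a \emph{laminar} family, not a disjoint one, so iterating it never terminates in a subpartition, and there is no general principle that turns an overlapping family into a subpartition without decreasing $\sum p$. Two clean repairs exist. (i) The paper's merging trick: since $D\subseteq\Gamma_{\tilde G}(Y_i)\cap\Gamma_{\tilde G}(Y_j)$ for all $i,j$, one has $|\Gamma_{\tilde G}(Y_i\cup Y_j)|\le|\Gamma_{\tilde G}(Y_i)|+|\Gamma_{\tilde G}(Y_j)|-1$, so all parts of $\cY$ may be merged into a single $Y^-$ without decreasing $\tau_{\tilde G}$; then $X=\emptyset$ forces $\tau_{\tilde G}(\{Y^-\})=|Y^-|-|V^-|+1\le 0$, handled by $\cX^+=\emptyset$. (ii) Observe directly that $\bigcup_i\Gamma_G(Y_i)=V^+$ forces $\tau_{\tilde G}(\cY)\le|V^-|-|V^+|-k(d-1)=d-k(d-1)\le 1$; the value $0$ is matched by $\cX^+=\emptyset$, and if ${\rm opt}(G)=1$ then $G$ is not DM-irreducible, so Lemma~\ref{lem:DM-irreducibility} supplies a nonempty $X^+\subseteq V^+$ with $|X^+|-|\Gamma_G(X^+)|+1\ge 1$. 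Either way the missing case is easy, but the argument you sketch for it would not go through as stated. (You should also dispose separately of the trivial case $|V^-|=1$, where Theorem~\ref{thm:duality} does not apply.)
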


Besides, the function values that appear in the reduction to the Frank--Jord\'an framework
are bounded by ${\rm O}(n)$,
and hence a direct application of the V\'egh--Bencz\'ur algorithm runs in polynomial time.
Although this reduction reveals the tractability of Problem (DMI),
the running time is not satisfactory.
Similarly to the directed connectivity augmentation, it requires ${\rm O}(n^7)$ time.
As seen below,
the Frank--Jord\'an framework is in fact excessively generalized to handle our problem,
and one can solve it much more simply and efficiently
(cf. Theorems~\ref{thm:algorithm_unbalanced} and \ref{thm:algorithm}).

\medskip
As the second result,
we show that the unbalanced case reduces to the matroid intersection problem.
Then, with the aid of a fast matroid intersection algorithm,
one can solve the unbalanced case in ${\rm O}(n + m\sqrt{\ell} \log \ell)$ time.

\begin{theorem}\label{thm:algorithm_unbalanced}
  For a bipartite graph $G = (V^+, V^-; E)$ with $\ell = |V^+| < |V^-| = n$ and $|E| = m$,
  one can find in ${\rm O}(n + m\sqrt{\ell} \log \ell)$ time a minimum number of additional edges
  to make $G$ DM-irreducible.
\end{theorem}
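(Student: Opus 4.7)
The plan is to reduce Problem (DMI) in the unbalanced case to a single matroid intersection instance of rank $O(\ell)$ on a ground set of size $O(m+n)$, and then to apply a Hopcroft--Karp-style augmenting path algorithm to achieve the stated running time.

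First, I would use Theorem~\ref{thm:duality_unbalanced} to obtain a clean feasibility criterion for an augmenting edge set $F$. A standard uncrossing argument based on the supermodularity of $X \mapsto |X| - |\Gamma_G(X)|$ on subsets of $V^+$ should show that $G+F$ is DM-irreducible if and only if, for every nonempty $X \subsetneq V^+$ with $|X| \geq |\Gamma_G(X)|$, the set $F$ contains at least $|X| - |\Gamma_G(X)| + 1$ edges going from $X$ to $V^- \setminus \Gamma_G(X)$. Moreover, the maximal tight sets that appear in Theorem~\ref{thm:duality_unbalanced} form a well-structured family --- I expect them to be pairwise disjoint or at least laminar --- whose members are the only constraints that need to be checked.

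Next, I would model the covering requirement as matroid intersection on the ground set of candidate new edges, defining (i) a partition matroid $M_1$ on the $V^+$ side whose rank inside each tight set $X$ equals $|X| - |\Gamma_G(X)| + 1$, thereby enforcing the lower-bound demand there, and (ii) a partition matroid $M_2$ on the $V^-$ side that prevents distinct new edges from sharing $V^-$-endpoints (ensuring they actually enlarge $|\Gamma_{G+F}(X)|$). A common independent set of size $k := \max_{\cX^+} \tau_G(\cX^+)$ in $M_1 \cap M_2$ corresponds to an optimal augmenting set $F$, and Theorem~\ref{thm:duality_unbalanced} guarantees its existence.

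Finally, since both matroids are partition-type with constant oracle cost and have rank $O(\ell)$, and since the ground set has size $O(m+n)$ after removing obviously useless edges, a Cunningham-style augmenting-path algorithm with the $\sqrt{r}$ Hopcroft--Karp speed-up (applicable because both matroids are of partition type, which essentially reduces the intersection to bipartite matching) runs in $O(m\sqrt{\ell}\log\ell)$ time. The main obstacle is the first step --- translating DM-irreducibility into the deficiency-covering condition and establishing the laminarity of the maximal tight sets --- after which both the matroid formulation and the algorithmic analysis follow along standard lines.
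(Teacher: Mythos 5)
There is a genuine gap in the first and second steps of your plan. First, it does not suffice to enforce the covering constraints only for the maximal deficient sets. Take $V^+=\{u_1,u_2\}$, $V^-=\{w_1,w_2,w_3\}$, $E=\{u_1w_1,\,u_2w_1\}$: the unique maximal deficient set is $V^+$ with demand $2$, and $F=\{u_1w_2,\,u_1w_3\}$ supplies two edges from $V^+$ to $V^-\setminus\Gamma_G(V^+)$ with distinct heads, yet $G+F$ is not DM-irreducible because $|\Gamma_{G+F}(\{u_2\})|=1$. By Lemma~\ref{lem:DM-irreducibility} the condition must hold for \emph{every} nonempty $X\subseteq V^+$, and what must be large is $|\Gamma_{G+F}(X)|$, i.e.\ the number of \emph{distinct} new neighbours of $X$ lying outside $\Gamma_G(X)$ --- a quantity that couples the two sides and all deficient sets (which need not be laminar) simultaneously. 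Your pair of partition matroids cannot express this: the $V^-$-side matroid both over-constrains (two new edges sharing a head may legitimately serve two disjoint deficient sets) and under-constrains (global distinctness of heads does not deliver the demand of a non-maximal deficient set, as the example shows), while a partition matroid on the $V^+$ side can impose only one capacity per part, whereas nested deficient sets impose nested demands; moreover a partition matroid encodes upper bounds on independence, not the lower-bound covering requirements you need.

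The paper takes a different route that sidesteps these difficulties. Instead of covering residual demands by new edges, it characterizes the \emph{minimally DM-irreducible} unbalanced graphs (Lemma~\ref{lem:minimally_DM-irreducible}) as forests in which every vertex of $V^+$ has degree exactly $2$. Hence the entire edge set of a minimal solution, old edges included, is a common independent set of size $2\ell$ in the cycle (graphic) matroid of the complete bipartite graph and a partition matroid with capacity $2$ on each $u\in V^+$, and Claim~\ref{cl:augment} reduces the resulting minimum-weight problem to a maximum-cardinality matroid intersection on the original edge set $E$. The graphic matroid --- which has no analogue in your two-partition-matroid formulation --- is essential, and it is the Gabow--Xu algorithm for \emph{graphic} matroid intersection (after duplicating the ground set to make the capacity-$2$ partition matroid graphic) that yields the ${\rm O}(m\sqrt{\ell}\log\ell)$ bound, rather than a Hopcroft--Karp argument for two partition matroids.
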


Our reduction to matroid intersection can be utilized
even when the addition of each edge gives rise to an individual cost
and we are required to minimize the total cost. 
By using a weighted matroid intersection algorithm,
one can solve the minimum-cost augmentation problem in ${\rm O}(n^2\ell)$ time.
In contrast, in the balanced case, the minimum-cost augmentation is NP-hard
even when $G$ has a perfect matching and the number of different cost values is at most two
(which was shown in \cite{ET1976} for the strong connectivity augmentation).
These facts imply that there is a significant gap of the difficulty of the weighted versions
between the balanced and unbalanced cases.

In addition,
we derive the min-max duality for the unbalanced case (Theorem~\ref{thm:duality_unbalanced})
from Edmonds' matroid intersection theorem,
while it can be shown via the reduction 
to the balanced case by using Theorem~\ref{thm:duality} (see Appendix~\ref{sec:app2}).

\medskip
The third result is a direct combinatorial algorithm
for the balanced case of Problem (DMI), which runs in ${\rm O}(nm)$ time.
While the unbalanced case is efficiently solved via matroid intersection,
one can also use this algorithm to solve the unbalanced case
with the aid of the reduction to the balanced case (see Appendix~\ref{sec:unbalanced}).

\begin{theorem}\label{thm:algorithm}
  For a bipartite graph $G = (V^+, V^-; E)$ with $|V^+| \leq |V^-| = n$ and $|E| = m$,
  one can find in ${\rm O}(nm)$ time a minimum number of additional edges
  to make $G$ DM-irreducible.
\end{theorem}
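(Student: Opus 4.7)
The plan is to give a direct combinatorial algorithm that generalizes the Eswaran--Tarjan construction from the perfect-matching case to an arbitrary balanced bipartite graph, using the min-max formula of Theorem~\ref{thm:duality} as the optimality certificate. The algorithm maintains a current graph, adds one edge at a time so that each addition decreases $\max_\cX \tau_G(\cX)$ by exactly one, and terminates with a DM-irreducible graph after $\mathrm{opt}(G)$ edges. The running-time target $\mathrm{O}(nm)$ comes from doing at most $\mathrm{O}(n)$ iterations and spending $\mathrm{O}(m)$ time per iteration.

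First I would compute the DM-decomposition of $G$ in $\mathrm{O}(m\sqrt{n})$ time. From the resulting component DAG one reads off a canonical dual certificate: a proper subpartition $\cX^{+*}$ of $V^+$ and a proper subpartition $\cX^{-*}$ of $V^-$ whose $\tau_G$-values realize $\mathrm{opt}(G)$. Intuitively, the members of $\cX^{+*}$ are unions of ``horizontally minimal'' DM-components along with the consistent-part excess on $V^+$, and symmetrically for $\cX^{-*}$; each such $X$ is tight in the sense that $|X|-|\Gamma_G(X)|+1$ equals its exact contribution to the defect.

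The core iteration picks an edge $e = (u^+, u^-)$ that joins a deficient $V^+$-block of $\cX^{+*}$ to a deficient $V^-$-block of $\cX^{-*}$. The crux is to choose $e$ so that $\tau_G(\cX)-\tau_{G+e}(\cX) = 1$ for \emph{every} currently optimal subpartition, simultaneously on both sides. This generalizes the Eswaran--Tarjan source-to-sink pairing: one sets up a suitable pairing on the DM-components of rank $0$ and $\infty$ together with the ``exterior'' blocks (vertices uncovered by some maximum matching), and threads an edge from one class into the other. After adding $e$, the DM-decomposition is updated locally in $\mathrm{O}(m)$ time by augmenting-path-type surgery along the newly exposed cycle, which merges the two endpoint components and re-propagates ranks; no full recomputation is needed. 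Termination is detected when both $\cX^{+*}$ and $\cX^{-*}$ become improper (i.e., $\tau_G \le 1$ everywhere), which by Theorem~\ref{thm:duality} is equivalent to DM-irreducibility.

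The hardest part will be the existence and selection of a ``doubly reducing'' edge in each iteration: a naively chosen edge can drop the $V^+$-side dual value while leaving some $V^-$-side subpartition still tight, which would force extra edges and destroy optimality. Proving that such an edge always exists whenever $G$ is not yet DM-irreducible requires a structural lemma relating the optimal $V^+$-subpartitions to the optimal $V^-$-subpartitions through the DM-component DAG, and an explicit pairing rule that realizes this correspondence. A secondary technical point is the amortized $\mathrm{O}(m)$-per-iteration update of the DM-decomposition and the maintained certificates $\cX^{+*}$, $\cX^{-*}$; this should follow from the fact that each addition only merges a prefix of components along a path in the condensation DAG, and that the certificates change by deleting or merging the affected blocks.
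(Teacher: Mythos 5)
Your overall strategy (incremental primal--dual: add one edge per iteration, each dropping $\max_\cX \tau_G(\cX)$ by exactly one) is not the paper's route, and as written it has two genuine gaps, both of which you partly flag yourself. First, the existence of a ``doubly reducing'' edge at every iteration is exactly the crux of the whole theorem, and you leave it as an unproved ``structural lemma.'' This is not a routine extension of Eswaran--Tarjan: when $G$ has no perfect matching, the quantities to be reduced are not plain source/sink counts but mixed terms such as $|V_\infty^-|-|V_\infty^+|+s(\cdot)$, and a single edge must simultaneously make progress against every optimal subpartition of $V^+$ and every optimal subpartition of $V^-$. Without that lemma the algorithm may stall at a suboptimal solution, so the proof is incomplete at its central point. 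Second, your claim that optimal dual certificates $\cX^{+*},\cX^{-*}$ can be ``read off'' the DM-component DAG is unjustified and, as stated, false. The tight subpartitions are not determined by the DM-decomposition alone: for the component $G[V_\infty]$ the relevant quantity is $|V_\infty^-|-|V_\infty^+|+s(G[V_\infty](M_\infty))$, and the number of source components of the auxiliary graph depends on which perfect matching $M_\infty$ of $G[V_\infty]$ is chosen. The paper's entire Section~\ref{sec:EPM} (the notion of an \emph{eligible} perfect matching and Procedure EPM, which runs Menger-type edge-disjoint path computations in an augmented auxiliary graph) exists precisely to produce a matching for which a matching dual subpartition exists; an arbitrary maximum matching yields only a feasible, possibly suboptimal, certificate, as the paper itself remarks.

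For comparison, the paper's proof is one-shot rather than iterative: it finds a maximum matching $M$ whose restrictions to $G[V_0]$ and $G[V_\infty]$ are eligible, adds a single perfect matching $N$ between the exposed vertices, and then applies Eswaran--Tarjan once to $\tG(\tM)$; optimality is certified by explicitly constructing a proper subpartition $\cX$ with $\tau_G(\cX)=|N|+|\tF|$, which also reproves Theorem~\ref{thm:duality} constructively. To salvage your incremental scheme you would need, at minimum, the structural lemma about doubly reducing edges and an actual procedure for maintaining optimal dual subpartitions (the paper's Appendix~\ref{sec:app1} obtains such subpartitions only via submodular function minimization, which does not fit your per-iteration time budget). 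A smaller slip: for a balanced DM-irreducible graph every nonempty $X^+\subsetneq V^+$ satisfies $|X^+|-|\Gamma_G(X^+)|+1\le 0$, so your termination test should be blockwise $\le 0$, not $\le 1$.
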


Our algorithm also gives an alternative proof of Theorem~\ref{thm:duality},
which is constructive in the sense that
one can easily construct a maximizer of $\tau_G$ as an optimality certificate
when the algorithm halts.
It is worth mentioning that one can maximize $\tau_G$-like functions
in polynomial time in a more general situation (see Appendix \ref{sec:app1}).

\subsection{Related work}
For the directed $k$-connectivity augmentation,
which is also within the Frank--Jord\'an framework,
Frank and V\'egh \cite{FV2008} gave a much simpler combinatorial algorithm
when a given directed graph is already $(k-1)$-connected.
Since ``$0$-connected'' enforces no constraint and ``$1$-connected'' is equivalent to ``strongly connected,''
this special case also generalizes the strong connectivity augmentation.
The direction of generalization is, however, different from our problem.
The Frank--V\'egh setting is translated in terms of bipartite graphs as follows:
for a given {\em $(k-1)$-elementary} balanced bipartite graph $G$,
to make $G$ {\em $k$-elementary} by adding a minimum number of edges,
where ``$0$-elementary'' and ``$1$-elementary'' are equivalent to ``perfectly matchable''
and to ``DM-irreducible,'' respectively,
and ``$k$-elementary'' is strictly stronger than ``DM-irreducible'' when $k \geq 2$.
In our problem, we are required to make a balanced bipartite graph $G$ $1$-elementary
even when $G$ is not $0$-elementary.

The DM-decomposition is known to be a useful tool
in numerical linear algebra (see, e.g., \cite{DER1986}).
A bipartite graph associated with a matrix is naturally defined by its nonzero entries,
and its DM-decomposition 
gives the finest block-triangularization,
which helps us to solve the system of linear equations efficiently. 
The finer decomposed, the finer from computational point of view.
Hence the DM-irreducibility is not a desirable property in this context.

There are, however, certain situations in which DM-irreducibility is rather preferable. 
For example, in game theory, the uniqueness of the utility profile
in a subgame perfect equilibrium
in a bargaining game is characterized by DM-irreducibility. In control theory, 
the structural controllability is characterized in terms of DM-irreducibility. 
We explicate these situations and possible applications of our result 
in Section~\ref{sec:Applications}.

\subsection{Organization}
The rest of this paper is organized as follows.
In Section~\ref{sec:Preliminaries},
we describe necessary definitions and known results
on the DM-decomposition of bipartite graphs
and on the strong connectivity of directed graphs.
In Section~\ref{sec:FJ}, we reduce the general case of Problem (DMI)
to supermodular covering framework of Frank and Jord\'an,
and apply their result to prove Theorem~\ref{thm:duality}.
In Section~\ref{sec:WMI},
we solve the unbalanced case via matroid intersection.
Section~\ref{sec:Algorithm} is devoted to presenting our direct algorithm for the balanced case.
The correctness of the algorithm
also gives an alternative, constructive proof of Theorem~\ref{thm:duality}.
A key procedure in our algorithm is shown separately in Section~\ref{sec:EPM}.
Finally, in Section~\ref{sec:Applications},
we discuss possible applications of our result in game theory 
and in control theory.

\section{Preliminaries}\label{sec:Preliminaries}
\subsection{Strong connectivity of directed graphs}\label{sec:Strong_connectivity}
Let $G = (V, E)$ be a directed graph.
A sequence $P = (v_0, e_1, v_1, e_2, v_2, \ldots, e_l, v_l)$
is called a {\em path} (or, in particular, a {\em $v_0$--$v_l$ path}) in $G$
if $v_0, v_1, \ldots, v_l \in V$ are distinct
and $e_i = v_{i-1}v_i \in E$ for each $i \in \{1, 2, \ldots, l\}$.
For two vertices $u, w \in V$ (possibly $u = w$),
we say that $u$ is {\em reachable to $w$}
(or, equivalently, $w$ is {\em reachable from $u$}) in $G$
and denote by $u \xrightarrow{G} w$
if there exists a $u$--$w$ path in $G$.
A directed graph is said to be {\em strongly connected} if
every two vertices are reachable to each other (also from each other).
A {\em strongly connected component} of $G$
is a maximal induced subgraph of $G$ that is strongly connected.
The strongly connected components of a directed graph
can be found in linear time with the aid of the depth first search \cite{Tarjan1972}.

Let $\cS = \{V_1, V_2, \ldots, V_k\}$ be the partition of $V$
according to the strongly connected components of $G$,
i.e., for any two vertices $u, w \in V$,
we have $u \xrightarrow{G} w$ and $w \xrightarrow{G} u$
if and only if $\{u, w\} \subseteq V_i$ for some $i$.
For $V_i, V_j \in \cS$,
we denote by $V_i \succeq_G V_j$
if $u \xrightarrow{G} w$
for every pair of $u \in V_i$ and $w \in V_j$.
Then the binary relation $\succeq_G$ is a partial order on $\cS$.
A strongly connected component of $G$
is called a {\em source component}
if its vertex set $V_i$ is maximal with respect to $\succeq_G$
(i.e., there is no $V_j \in \cS \setminus \{V_i\}$ with $V_j \succeq_G V_i$),
and a {\em sink component} if minimal.
Note that a strongly connected component is a source or sink component
if and only if no edge enters or leaves it, respectively.
The numbers of source and sink components of $G$
are denoted by $s(G)$ and $t(G)$, respectively.

Eswaran and Tarjan \cite{ET1976} characterized the minimum number of additional edges
to make a directed graph strongly connected, and proposed a linear-time algorithm
for finding such additional edges as follows.

\begin{theorem}[Eswaran--Tarjan {\cite[Section~2]{ET1976}}]\label{thm:ET}
  Let $G = (V, E)$ be a directed graph that is not strongly connected.
  Then the minimum number of additional edges
  to make $G$ strongly connected is equal to $\max\{s(G), t(G)\}$.
  Moreover, one can find such additional edges in ${\rm O}(|V| + |E|)$ time.
\end{theorem}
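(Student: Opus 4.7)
The plan is to prove the theorem in two parts: a trivial lower bound by counting and a matching upper bound by explicit construction on the condensation DAG.

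For the lower bound, I would observe that if $F$ is a set of arcs such that $G + F$ is strongly connected, then each source component $U$ of $G$ must receive at least one arc of $F$ as an incoming edge, since $U$ has no entering arc in $G$. Hence $|F| \geq s(G)$, and symmetrically $|F| \geq t(G)$, giving $|F| \geq \max\{s(G), t(G)\}$.

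For the upper bound, I would reduce to the condensation DAG $H$ of $G$, whose sources and sinks correspond one-to-one to the source and sink components of $G$; it suffices to make $H$ strongly connected by adding $\max\{s, t\}$ arcs, where $s := s(G)$ and $t := t(G)$. Assume WLOG $s \geq t$ (otherwise reverse all arc directions). The construction proceeds in three stages. First, iteratively walk forward paths from untouched sources to untouched sinks in $H$, pairing the endpoints; this yields pairs $(U_1, W_1), \dots, (U_k, W_k)$ with $U_i$ reaching $W_i$ in $H$, plus $s - k$ leftover sources and $t - k$ leftover sinks. Second, close the paired spine into a single directed cycle by adding arcs $W_i \to U_{i+1 \bmod k}$, merging all paired sources and sinks into one strongly connected chunk $C$. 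Third, splice each leftover sink to a leftover source by one new arc (using $t - k$ arcs), and add $s - t$ further arcs from a fixed vertex of $C$ to each remaining leftover source. The total count is $k + (t - k) + (s - t) = s = \max\{s, t\}$, matching the lower bound.

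The main obstacle, I expect, lies in guaranteeing correctness of the splicing stage: a careless pairing in the first stage combined with arbitrary splicing in the third can leave a floating strongly connected block of leftover sinks and sources disconnected from $C$, since the spliced arc alone does not force reachability in either direction between the leftover pair and $C$. Compatibility between the pairing and the splicing is therefore essential, and Eswaran and Tarjan address this with a specific DFS-based ordering on $H$ that maintains invariants ensuring every leftover vertex is re-absorbed into $C$. For the linear-time bound, the condensation $H$ is computed in ${\rm O}(|V| + |E|)$ by Tarjan's algorithm \cite{Tarjan1972}, and the path walking, spine formation, and splicing each reduce to standard DFS passes over $H$, yielding ${\rm O}(|V| + |E|)$ overall.
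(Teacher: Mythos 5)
The paper does not prove Theorem~\ref{thm:ET}: it is imported from Eswaran and Tarjan \cite{ET1976} with a citation and used as a black box (in Section~\ref{sec:Equivalence_SC} and in Step~4 of Algorithm DMI), so there is no in-paper proof to compare yours against. Judged on its own, your proposal is the standard Eswaran--Tarjan argument and its skeleton is sound: the lower bound is correct because each added arc enters at most one source component and leaves at most one sink component, and (since $G$ is not strongly connected) every source component needs an entering arc and every sink component a leaving arc; the three-stage construction on the condensation adds exactly $k + (t-k) + (s-t) = s = \max\{s,t\}$ arcs.

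The one step you defer rather than prove is precisely the one you flag as the obstacle, and it is worth closing because it is easier than you suggest. What the splicing stage needs is only that the stage-one pairing be \emph{maximal}, i.e., that no leftover source reach a leftover sink. Your greedy forward-walk already guarantees this: when a source is processed, any sink it reaches that is still unpaired would have been found and paired with it, so a source is left over only if every sink it reaches is already paired. Combining maximality with the elementary facts that in a finite DAG every vertex reaches some sink and is reached from some source, you get that every leftover source reaches a \emph{paired} sink (hence reaches the cycle $C$) and every leftover sink is reached from a \emph{paired} source (hence from $C$). Therefore the splicing arcs $W'_j \to U'_j$ may be chosen arbitrarily: $U'_j$ becomes reachable from $C$ through $W'_j$, and $W'_j$ reaches $C$ through $U'_j$, so no floating block can arise and no special DFS-based ordering is needed beyond the greedy pairing itself. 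Adding that observation (plus the one-line remark that every remaining vertex of the condensation lies on a path from a source to a sink, both of which end up in the single strongly connected component) completes the proof; the linear-time claim is fine as stated.
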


\subsection{DM-decomposition of bipartite graphs}\label{sec:DM-decomposition}
Let $G = (V^+, V^-; E)$ be a bipartite graph
with the vertex set $V$ partitioned into
the {\em left side} $V^+$ and the {\em right side} $V^-$.
Throughout this paper, a bipartite graph is dealt with as a directed graph
in which each edge is directed from left to right, i.e., $E \subseteq V^+ \times V^-$.
An edge set $M \subseteq E$ is called a {\em matching} in $G$
if $|\partial^+M| = |\partial^-M| = |M|$,
where $\partial^+M := \{\, u \mid uw \in M \,\} \subseteq V^+$
and $\partial^-M := \{\, w \mid uw \in M \,\} \subseteq V^-$.
A matching $M$ is said to be {\em maximum} if $|M|$ is maximum,
and {\bf perfect} if $|M| = \min\{|V^+|, |V^-|\}$
(this definition of ``perfect matchings'' is unusual,
where it extends a usual definition for the balanced bipartite graphs
to all the bipartite graphs).
A bipartite graph is said to be {\bf perfectly matchable}
if it has a perfect matching,
and {\bf matching covered} if every edge is contained in some perfect matching.

The DM-decomposition of a bipartite graph gives a unique partition of the vertex set,
which reflects the structure of all the maximum matchings therein as follows.
For a nonnegative integer $k$, we define $[k] := \{1, 2, \ldots, k\}$.
For a vertex set $X \subseteq V$,
we define $X^+ := X \cap V^+$ and $X^- := X \cap V^-$,
and denote by $G[X]$ the subgraph of $G$ induced by $X$.

\begin{theorem}[Dulmage--Mendelsohn~\cite{DM1958, DM1959}]\label{thm:DM}
  Let $G = (V^+, V^-; E)$ be a bipartite graph.
  Then there exists a partition $(V_0; V_1, V_2, \ldots, V_k; V_\infty)$ of
  $V$ such that\vspace{-.5mm}
  \begin{enumerate}
    \renewcommand{\labelenumi}{{\rm \arabic{enumi}.}}
    \setlength{\itemsep}{.5mm}
  \item
    either $|V_0^+| > |V_0^-|$ or $V_0 = \emptyset$,
  \item
    $G[V_i]$ is balanced $($i.e., $|V_i^+| = |V_i^-| > 0$$)$ and connected for each $i \in [k]$,
  \item
    either $|V_\infty^+| < |V_\infty^-|$ or $V_\infty = \emptyset$,
  \item
    $G[V_i]$ is matching covered for each $i \in [k] \cup \{0, \infty\}$, and
  \item
    every maximum matching in $G$ is a union of perfect matchings in $G[V_i]$.
  \end{enumerate}
\end{theorem}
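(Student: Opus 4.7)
The plan is to construct the partition from a fixed maximum matching $M$ of $G$ and then establish independence of this choice. Form the auxiliary digraph $D_M$ on $V$ consisting of every edge of $E$ oriented from $V^+$ to $V^-$, together with the reverse orientation of every edge of $M$; directed walks in $D_M$ correspond exactly to $M$-alternating walks in $G$. Let $U^\pm \subseteq V^\pm$ denote the vertices uncovered by $M$, set $V_0$ to be the set of vertices reachable in $D_M$ from $U^+$, $V_\infty$ the set of vertices that can reach $U^-$ in $D_M$, and $V_{*} := V \setminus (V_0 \cup V_\infty)$. Maximality of $M$ prevents $M$-augmenting paths, so $V_0 \cap V_\infty = \emptyset$, and every vertex of $V_{*}$ is $M$-saturated. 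I would take $V_1, \dots, V_k$ to be the vertex sets of the strongly connected components of $D_M[V_{*}]$; because each edge of $M$ and its reverse both appear in $D_M$, every matched pair lies in a single SCC, so each $V_i$ is balanced, which together with the connectedness of SCCs yields property~2. Properties~1 and 3 follow from a simple count: the only way to enter a $V^+$-vertex in $D_M$ is along the reverse of an $M$-edge, so $M$ induces a bijection $V_0^- \to V_0^+ \setminus U^+$ (every $w \in V_0^-$ is matched, or else it would realise an augmenting path), giving $|V_0^+| - |V_0^-| = |U^+|$, positive exactly when $V_0 \neq \emptyset$.

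The heart of the argument is property~5, from which the canonicity of the partition and property~4 both follow. For an arbitrary maximum matching $M'$, decompose $M \triangle M'$ into alternating cycles and paths; since no component can be an $M$-augmenting path, every path has both endpoints on the same side of the bipartition. I would then show that each component lies in a single block: an alternating cycle traces a directed cycle in $D_M$, hence lies in one SCC of $D_M$, which belongs to exactly one of $V_0, V_\infty, V_1, \ldots, V_k$ (each of these sets is a union of SCCs, since $V_0$ and $V_\infty$ are closed under the appropriate reachability); an alternating path with both endpoints in $V^+$ starts, at its $M$-unsaturated endpoint, with a non-$M$-edge, and by repeatedly following non-$M$-edges forward and $M$-edges backward it traces a directed walk in $D_M$ from $U^+$, so the whole path lies in $V_0$, and symmetrically for paths in $V^-$ and $V_\infty$. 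This confines $M'$ to a union of perfect matchings on each $G[V_i]$ for $i \in [k]$ and maximum matchings on $G[V_0]$ and $G[V_\infty]$ of the same cardinality as $M|_{V_0}$ and $M|_{V_\infty}$, yielding property~5 together with canonicity. Property~4 is then established by producing, through any prescribed edge of $G[V_i]$ (resp.\ $G[V_0]$, $G[V_\infty]$), a directed cycle in $D_M$ (resp.\ an alternating path from $U^+$ or to $U^-$) and swapping $M$ along it to obtain a maximum matching containing the edge.

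The main obstacle will be the confinement step for property~5: ruling out a component of $M \triangle M'$ that straddles two distinct blocks requires carefully exploiting the condensation order of $D_M$ together with the reachability definitions of $V_0$ and $V_\infty$, and verifying that the sides of the bipartition alternate along any component in a manner compatible with the arc orientations of $D_M$. A secondary technical point is property~4 for the unbalanced blocks $V_0$ and $V_\infty$: here one must check that every edge extends to a matching covering the smaller side, which amounts to finding and splicing together a suitable alternating path together with part of $M$ restricted to $V_0$ (resp.\ $V_\infty$).
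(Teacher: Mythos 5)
Your proposal is correct and follows essentially the same route the paper itself indicates: the paper does not reprove this classical theorem but describes exactly your construction (the auxiliary graph $G(M)$, taking $V_0$ and $V_\infty$ as the sets reachable from $V^+\setminus\partial^+M$ and to $V^-\setminus\partial^-M$, and the strongly connected components of the remainder), deferring details to the cited references. The steps you flag as delicate (confinement of the components of $M\triangle M'$ to single blocks, and matching-coveredness of the unbalanced blocks) go through by the standard arguments you sketch, since a directed walk in $G(M)$ starting at an $M$-exposed vertex of $V^+$ is automatically alternating and stays in $V_0$, and symmetrically for $V_\infty$.
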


We here define the {\em DM-decomposition} $(V_0; V_1, V_2, \ldots, V_k; V_\infty)$
of a bipartite graph $G = (V^+, V^-; E)$,
which satisfies the conditions in Theorem~\ref{thm:DM}
(see also, e.g., \cite{LP1986, Murota2000}).
Define a set function $f_G \colon 2^{V^+} \to \ZZ$ by
\begin{equation}\label{eq:deficiency}
  f_G(X^+) := |\Gamma_G(X^+)| - |X^+| \quad (X^+ \subseteq V^+),
\end{equation}
where recall $\Gamma_G(X^+) = \{\, w \mid \exists e = uw \in E \colon u \in X^+ \,\} \subseteq V^-$.
It is well-known that $f_G$ is submodular,
and hence all the minimizers of $f_G$ form a distributive lattice $\cL(f_G)$
with respect to the set union and intersection (see, e.g., \cite[Lemma 2.1]{Fujishige2005}).
For a maximal monotonically increasing sequence (called a maximal {\em chain})
$X_0^+ \subsetneq X_1^+ \subsetneq \cdots \subsetneq X_k^+$ in $\cL(f_G)$,
define $V_i := V_i^+ \cup V_i^-$ for each $i \in [k] \cup \{0, \infty\}$ as follows:
\begin{align*}
  &V_0^+ := X^+_0, &&V_0^- := \Gamma_G(X_0^+),\\[1mm]
  &V_i^+ := X_i^+ \setminus X_{i-1}^+, &&V_i^- := \Gamma_G(X_i^+) \setminus \Gamma_G(X_{i-1}^+) &(i \in [k]),\\[1mm]
  &V_\infty^+ := V^+ \setminus X_k^+, &&V_\infty^- := V^- \setminus \Gamma_G(X_k^+).
\end{align*}

It is known
that the resulting partition of $V$ with the following partial order $\sqsubseteq$
is unique (i.e., does not depend on the choice of a maximal chain in $\cL(f_G)$):
\[V_i \sqsubseteq V_j \iff \left[V_j^+ \subseteq X^+ \in \cL(f_G) \implies V_i^+ \subseteq X^+\right] \quad (i, j \in [k] \cup \{0, \infty\}).\]
Moreover, while $V^+$ and $V^-$ do not seem symmetric in the above definition,
it is also known that essentially the same partially-ordered partition is obtained by interchanging
the roles of $V^+$ and of $V^-$, in which, e.g., $V_0$ and $V_\infty$ are interchanged and
the direction of $\sqsubseteq$ is reversed.

\medskip
The DM-decomposition is known to be obtained as follows (cf. \cite[Section 2.2.3]{Murota2000}).
Take an arbitrary maximum matching $M \subseteq E$ in $G$.
Construct the {\em auxiliary graph} $G(M) := G + \overline{M}$ with respect to $M$,
where $\overline{M} := \{\, \bar{e} := wu \mid e = uw \in M \,\} \subseteq V^- \times V^+$
denotes the set of reverse edges.
The set of vertices reachable from some vertex in $V^+ \setminus \partial^+M$
in $G(M)$ is $V_0$,
and the set of vertices reachable
to some vertex in $V^- \setminus \partial^-M$ in $G(M)$ is $V_\infty$.
The rest $V_\ast := V \setminus (V_0 \cup V_\infty)$ is partitioned according to
the strongly connected components of $G_\ast := G(M)[V_\ast]$.
The partial order $\sqsubseteq$ is defined by $\preceq_{G_\ast}$
on $\{\, V_i \mid i \in [k] \,\}$ and so that $V_0$ and $V_\infty$
are minimum and maximum elements, respectively.
By this computation,
one can easily see the following properties.

\begin{observation}\label{obs:source_sink}
  Let $(V_0; V_1, V_2, \ldots, V_k; V_\infty)$ be the DM-decomposition
  of a bipartite graph $G = (V^+, V^-; E)$.
  Then, for any maximum matching $M \subseteq E$ in $G$,
  the auxiliary graph $G(M)$ satisfies the following conditions.\vspace{-.5mm}
  \begin{itemize}
    \setlength{\itemsep}{.5mm}
  \item
    No edge leaves $V_0$.
  \item
    No edge enters $V_\infty$.
  \item
    Each source component of $G(M)[V_0]$ is 
    a single vertex in $V^+ \setminus \partial^+ M$, and vice versa. 
    Hence, $s(G(M)[V_0]) = |V^+| - |M|$.
  \item
    Each sink component of $G(M)[V_\infty]$ 
    is a single vertex in $V^- \setminus \partial^- M$,
    and vice versa. Hence, $t(G(M)[V_\infty]) = |V^-| - |M|$.
  \end{itemize}
\end{observation}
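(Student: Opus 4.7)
The plan is to exploit the characterization of the DM-decomposition via reachability in the auxiliary graph $G(M)$ recalled just before the statement: $V_0$ is the set of vertices reachable from $V^+ \setminus \partial^+M$ in $G(M)$, and $V_\infty$ is the set of vertices reachable to $V^- \setminus \partial^-M$ in $G(M)$. Since $M$ is a maximum matching, there is no directed path in $G(M)$ from $V^+ \setminus \partial^+M$ to $V^- \setminus \partial^-M$ (such a path would alternate between original and reverse edges and thus yield an $M$-augmenting path in $G$, contradicting the maximality of $M$), so $V_0$ and $V_\infty$ are disjoint.

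The first two bullets follow by a direct closure argument. If some edge of $G(M)$ left $V_0$, its head would be reachable from $V^+ \setminus \partial^+M$ through its tail and would therefore lie in $V_0$, contradicting the assumption that the edge leaves $V_0$; the claim for $V_\infty$ is symmetric, using reachability to $V^- \setminus \partial^-M$ in place of reachability from $V^+ \setminus \partial^+M$.

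For the third bullet I would exploit the asymmetry of $G(M)$ at unmatched left vertices. Any $v \in V^+ \setminus \partial^+M$ has no in-edge in $G(M)$, because the only edges entering a vertex of $V^+$ in $G(M)$ are reverse edges of matching pairs. Hence $v$ lies in $V_0$ trivially, and its strongly connected component in $G(M)[V_0]$ is $\{v\}$, which forms a source component. Conversely, let $C$ be any source component of $G(M)[V_0]$, pick $u \in C$, and choose a path $P$ in $G(M)$ from some $v \in V^+ \setminus \partial^+M$ to $u$; by the first bullet, $P$ stays inside $V_0$. If $v \notin C$, then $P$ would use an edge of $G(M)[V_0]$ entering $C$ from $V_0 \setminus C$, contradicting that $C$ is a source component. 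Therefore $v \in C$, and since $v$ has no in-edge in $G(M)$ at all, the strongly connected component of $v$ reduces to $\{v\}$, forcing $C = \{v\}$. This yields a bijection between source components of $G(M)[V_0]$ and vertices in $V^+ \setminus \partial^+M$, hence $s(G(M)[V_0]) = |V^+| - |M|$.

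The fourth bullet follows by an entirely symmetric argument: each $w \in V^- \setminus \partial^-M$ has no out-edge in $G(M)$, so it forms a singleton sink component of $G(M)[V_\infty]$, and conversely each sink component must contain such a $w$ by running the reachability-to-$V^- \setminus \partial^-M$ version of the argument. I do not anticipate any serious obstacle; the only point that needs care is ensuring that the reaching paths used in the bijection argument stay within $V_0$ (respectively $V_\infty$), which is exactly the content of the first two bullets and guarantees that the source/sink structure is determined entirely inside $G(M)[V_0]$ (respectively $G(M)[V_\infty]$).
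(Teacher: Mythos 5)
Your proof is correct and follows exactly the route the paper intends: the paper states this observation without proof, remarking only that it follows ``by this computation'' from the reachability-based construction of $V_0$ and $V_\infty$ in $G(M)$, and your closure argument for the first two bullets together with the fact that vertices of $V^+\setminus\partial^+M$ (resp.\ $V^-\setminus\partial^-M$) have no in-edges (resp.\ out-edges) in $G(M)$ is precisely the intended elaboration. No gaps.
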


A bipartite graph $G = (V^+, V^-; E)$ is said to be {\em DM-irreducible}
if its DM-decomposition consists of only one nonempty component,
i.e., either $V_0 = V$, $V_1 = V$, or $V_\infty = V$.
By the symmetry, we always assume $|V^+| \leq |V^-|$ without notice.
That is, if $G$ is unbalanced, then $|V^+| < |V^-|$.

\subsection{Relation to strong connectivity augmentation}\label{sec:Equivalence_SC}
From the computation of the DM-decomposition,
a balanced bipartite graph $G = (V^+, V^-; E)$ is DM-irreducible
if and only if $G$ has a perfect matching $M \subseteq E$ and
the auxiliary directed graph $G(M) = G + \overline{M}$ is strongly connected.
In addition,
a directed graph $G = (V, E)$ is strongly connected
if and only if the balanced bipartite graph $\tG = (\tV^+, \tV^-; \tE)$ defined as follows
is DM-irreducible:
\begin{align*}
  &\tV^+ := \{\, v^+ \mid v \in V \,\}, \quad \tV^- := \{\, v^- \mid v \in V \,\},\\[1mm]
  &\tE := \{\, u^+w^- \mid uw \in E \,\} \cup \{\, v^+v^- \mid v \in V \,\}.
\end{align*}
Note that $\tG$ has a perfect matching $\tM := \{\, v^+v^- \mid v \in V \,\} \subseteq \tE$,
and the DM-irreducibility of $\tG$ is equivalent to the strong connectivity of $\tG(\tM)$,
in which the two vertices $v^+ \in \tV^+$ and $v^- \in \tV^-$
derived from each vertex $v \in V$ must be contained in a single strongly connected component.

Hence, Problem (DMI) with the input bipartite graph balanced and perfectly matchable
is equivalent to making a directed graph strongly connected by adding a minimum number of edges,
which was solved by Eswaran and Tarjan \cite{ET1976} (cf. Theorem~\ref{thm:ET}).
Note that every strongly connected component of the auxiliary directed graph
intersects both $V^+$ and $V^-$ in this case,
and one can choose, freely in each strongly connected component,
the heads and tails of additional edges in the strong connectivity augmentation.
This equivalence is utilized in our algorithm
for the balanced case presented in Section~\ref{sec:Algorithm}.

\section{Reduction to Supermodular Covering}\label{sec:FJ}
In this section, we show that Problem (DMI) is a special case of 
supermodular covering introduced by Frank and Jord\'an \cite{FJ1995}.
We first describe necessary definitions and the min-max duality theorem on supermodular covering
in Section~\ref{sec:Covering_bisupermodular}.
Next, in Section~\ref{sec:unbalanced_to_balanced},
we show a reduction of the unbalanced case of Problem (DMI) to the balanced case.
In Section \ref{sec:Formulation},
we then formulate the balanced case in terms of the Frank--Jord\'an framework.
Finally, via the reduction to supermodular covering,
we give a proof of our min-max duality theorem 
(Theorem~\ref{thm:duality}) in Section~\ref{sec:Proof_via_FJ}.

\subsection{Supermodular covering problem and min-max duality}\label{sec:Covering_bisupermodular}
Let $V^+$ and $V^-$ be finite sets.
Two ordered pairs $(X^+, X^-), (Y^+, Y^-) \in 2^{V^+} \times 2^{V^-}$
are said to be {\em dependent} if both $X^+ \cap Y^+$ and $X^- \cap Y^-$ are nonempty,
and {\em independent} otherwise.
A family $\cF \subseteq 2^{V^+} \times 2^{V^-}$ is called {\em crossing}
if, for every pair of dependent members $(X^+, X^-), (Y^+, Y^-) \in \cF$,
both $(X^+ \cap Y^+, X^- \cup Y^-)$ and $(X^+ \cup Y^+, X^- \cap Y^-)$ are also in $\cF$.

A function $g \colon \cF \to \ZZ_{\geq 0}$ on a crossing family
$\cF \subseteq 2^{V^+} \times 2^{V^-}$ is said to be
{\em crossing supermodular} if, for every pair of dependent members $(X^+, X^-), (Y^+, Y^-) \in \cF$
with $g(X^+, X^-) > 0$ and $g(Y^+, Y^-) > 0$, we have
\[g(X^+ \cap Y^+, X^- \cup Y^-) + g(X^+ \cup Y^+, X^- \cap Y^-) \geq g(X^+, X^-) + g(Y^+, Y^-).\]
We say that a multiset $F$ of directed edges in $V^+ \times V^-$ {\em covers}
a crossing supermodular function $g \colon \cF \to \ZZ_{\geq 0}$
if $|F(X^+, X^-)| \geq g(X^+, X^-)$ holds for every $(X^+, X^-) \in \cF$,
where $F(X^+, X^-)$ denotes the multiset obtained by restricting $F$ into $X^+ \times X^-$.
\begin{description}
  \setlength{\itemsep}{0mm}
\item[\underline{Problem (FJ)}]

\item[Input:]\vspace{1mm}
  A crossing supermodular function $g \colon \cF \to \ZZ_{\geq 0}$
  on a crossing family $\cF \subseteq 2^{V^+} \times 2^{V^-}$.
\item[Goal:]
  Find a minimum-cardinality multiset $F$ of directed edges in $V^+ \times V^-$ such that $F$ covers $g$.
\end{description}\vspace{1mm}

Frank and Jord\'{a}n \cite{FJ1995} showed a min-max duality on this problem as follows.

\begin{theorem}[Frank--Jord\'{a}n~{\cite[Theorem 2.3]{FJ1995}}]\label{thm:FJ1995}
  The minimum cardinality of a multiset $F$ of directed edges in $V^+ \times V^-$
  such that $F$ covers a crossing supermodular function $g \colon \cF \to \ZZ_{\geq 0}$ is equal to
  the maximum value of
  \begin{equation}\label{eq:FJ1995}
    \eta(\cS) := \sum_{(X^+, X^-) \in \cS} g(X^+, X^-),\nonumber
  \end{equation}
  taken over all subfamilies $\cS \subseteq \cF$
  whose members are pairwise independent.
\end{theorem}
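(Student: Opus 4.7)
The plan splits along the expected weak/strong separation, with all the substance concentrated in the second half. I would prove the easy direction $\min \geq \max$ by a direct counting argument, and the hard direction $\min \leq \max$ by induction on $\nu(g) := \max_{\cS} \eta(\cS)$, each inductive step contributing one carefully chosen edge to the cover.

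For weak duality, suppose $F$ covers $g$ and $\cS \subseteq \cF$ is pairwise independent. Any edge $e = uw \in F$ lies in the rectangle $X^+ \times X^-$ for at most one member of $\cS$, because two such members would share $u \in V^+$ and $w \in V^-$ and hence be dependent. Summing over $\cS$,
\[
  |F| \;\geq\; \sum_{(X^+, X^-) \in \cS} |F(X^+, X^-)| \;\geq\; \sum_{(X^+, X^-) \in \cS} g(X^+, X^-) \;=\; \eta(\cS).
\]

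For the reverse inequality, the base case $\nu(g) = 0$ forces $g \equiv 0$ and $F = \emptyset$ works. In the inductive step, the aim is to exhibit an edge $e = uw \in V^+ \times V^-$ such that the truncated function
\[
  g^e(X^+, X^-) \;:=\; \max\bigl\{\,g(X^+, X^-) - \mathbf{1}[u \in X^+,\; w \in X^-],\; 0\,\bigr\}
\]
is again crossing supermodular on $\cF$ and satisfies $\nu(g^e) = \nu(g) - 1$. Granting this, the inductive hypothesis yields a multiset $F'$ covering $g^e$ with $|F'| = \nu(g) - 1$, so that $F := F' \cup \{e\}$ covers $g$ and has cardinality $\nu(g)$.

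The main obstacle, and the technical heart of the proof, is producing such a good edge $e$. The plan is to fix a pairwise-independent subfamily $\cS^* \subseteq \cF$ attaining $\nu(g)$ and, among all such families, to choose one minimizing a spread quantity such as $\sum_{(X^+, X^-) \in \cS^*}(|X^+|^2 + |X^-|^2)$. An uncrossing argument, applying the crossing supermodular inequality to any two dependent pairs $(X^+, X^-), (Y^+, Y^-)$ with $g(X^+, X^-), g(Y^+, Y^-) > 0$, then forces $\cS^*$ into a laminar-like configuration; the fact that the supermodular inequality is required only when both $g$-values are positive is exactly what keeps uncrossing compatible with the later truncation $\max\{\cdot - 1, 0\}$, and this is the subtle feature of the hypothesis. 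At an extreme member $(X^+, X^-) \in \cS^*$ of this uncrossed family one picks any $u \in X^+$ and $w \in X^-$ and sets $e := uw$; verifying that $g^e$ is still crossing supermodular (the dangerous cases are dependent pairs $(Y^+, Y^-), (Z^+, Z^-)$ both containing the coordinate pair $(u,w)$, ruled out by extremality and by a second uncrossing against $(X^+, X^-)$) and that $\nu(g^e) = \nu(g) - 1$ (an easy upper bound from $\cS^*$ combined with weak duality applied to any smaller cover) closes the induction.
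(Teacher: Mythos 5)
The paper does not prove this statement: it is imported verbatim as Theorem~2.3 of Frank and Jord\'an \cite{FJ1995} and used as a black box, so there is no internal proof to compare against. Your weak-duality argument is correct and standard, and your inductive skeleton for the hard direction --- find an edge $e = uw$ such that the truncated function $g^e$ is again crossing supermodular and $\nu(g^e) = \nu(g) - 1$ --- is indeed the shape of the actual proof. But the sketch misplaces the difficulty and leaves the genuinely hard step unproved.

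First, the step you flag as dangerous is not: crossing supermodularity of $g^e = \max\{g - \mathbf{1}[u \in \cdot\,,\, w \in \cdot\,],\, 0\}$ holds for \emph{every} edge $e$, with no extremality needed. For dependent $(X^+,X^-), (Y^+,Y^-)$ with $g^e>0$ on both, the number of the two uncrossed pairs containing $(u,w)$ never exceeds the number of $(X^+,X^-),(Y^+,Y^-)$ containing $(u,w)$, and the bound $g^e \geq g - \mathbf{1}$ absorbs the zero-truncation on the left-hand side. The real obstacle is $\nu(g^e) = \nu(g)-1$. Since a single edge meets at most one member of a pairwise independent family, $\nu(g^e) \geq \nu(g)-1$ is automatic; what must be shown is the existence of one edge that strictly decreases $\eta(\cS)$ for \emph{every} optimal family $\cS$ simultaneously, not just for your chosen $\cS^*$. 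Uncrossing one optimal family and picking $u \in X^+$, $w \in X^-$ in an extreme member only gives $\eta_{g^e}(\cS^*) = \nu(g)-1$; a different optimal family whose positive-valued members all avoid the coordinate pair $(u,w)$ would still achieve $\nu(g)$ under $g^e$, and the induction stalls. Establishing that some edge hits all optimal families is exactly the content of the Frank--Jord\'an theorem and occupies the bulk of their paper (a delicate exchange argument on the structure of all maximum independent families, not a single uncrossing). Your parenthetical justification --- ``weak duality applied to any smaller cover'' --- is circular here: a cover of $g^e$ of size $\nu(g)-1$ is what the induction hypothesis is supposed to produce, and it produces one of size $\nu(g^e)$, which helps only if $\nu(g^e) = \nu(g)-1$ has already been proved.
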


\subsection{Reduction of the unbalanced case to the balanced case}\label{sec:unbalanced_to_balanced}
As mentioned several times,
the unbalanced case of Problem (DMI) can be reduced to the balanced case.
To show such a reduction,
we give a useful rephrasement of DM-irreducibility.

\begin{lemma}\label{lem:DM-irreducibility}
  A bipartite graph $G = (V^+, V^-; E)$ with $|V^+| \leq |V^-|$ 
  and $|V^-| \geq 2$ is DM-irreducible if and only if
  $|\Gamma_G(X^+)| \geq |X^+| + 1$ for every nonempty $X^+ \subseteq V^+$ with $|X^+| < |V^-|$.
\end{lemma}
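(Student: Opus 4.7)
The plan is to reason in terms of the deficiency function $f_G(X^+) = |\Gamma_G(X^+)| - |X^+|$ from~(2.2) and the distributive lattice $\cL(f_G)$ of its minimizers, which drives the definition of the DM-decomposition. The condition to be verified rewrites as $f_G(X^+) \geq 1$ for every nonempty $X^+ \subseteq V^+$ with $|X^+| < |V^-|$, and since $f_G(\emptyset) = 0$, this is equivalent to saying that no such $X^+$ lies in $\cL(f_G)$. So the whole lemma becomes a statement about which members $\cL(f_G)$ can have.

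First I would rule out the possibility that $V_0 = V$ in the DM-decomposition under our assumptions: by condition~1 of Theorem~\ref{thm:DM}, $V_0 = V \neq \emptyset$ would force $|V^+| > |V^-|$, contradicting $|V^+| \leq |V^-|$. Hence $G$ is DM-irreducible exactly when $V_1 = V$ (the balanced case) or $V_\infty = V$ (the unbalanced case). Recalling from the construction of the DM-decomposition via a maximal chain $X_0^+ \subsetneq \cdots \subsetneq X_k^+$ in $\cL(f_G)$ that $V_0^+ = X_0^+$ is the minimum element of $\cL(f_G)$ and $V^+ \setminus V_\infty^+ = X_k^+$ is its maximum element, I would conclude that DM-irreducibility is equivalent to the lattice being trivial in the following precise sense: $V_1 = V$ iff $\cL(f_G) = \{\emptyset, V^+\}$, and $V_\infty = V$ iff $\cL(f_G) = \{\emptyset\}$.

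For the forward direction, in either of the two possibilities the only members of $\cL(f_G)$ are $\emptyset$ and possibly $V^+$ (the latter only in the balanced case). Since $|X^+| < |V^-|$ excludes $V^+$ in the balanced case ($|V^+| = |V^-|$) and is automatically satisfied in the unbalanced case ($|V^+| < |V^-|$), no $X^+$ permitted by the hypothesis lies in $\cL(f_G)$, so $f_G(X^+) \geq 1$ for all such $X^+$. For the converse, in the unbalanced case the hypothesis directly gives $f_G \geq 1$ on every nonempty subset of $V^+$, so $\cL(f_G) = \{\emptyset\}$, i.e., the chain is trivial and $V_\infty = V$. In the balanced case the hypothesis gives $f_G \geq 1$ on every nonempty proper $X^+ \subsetneq V^+$, and I would then deduce $f_G(V^+) = 0$ (so $V^+ \in \cL(f_G)$ too, yielding the chain $\emptyset \subsetneq V^+$ and $V_1 = V$) by the following small argument: since $|V^-| \geq 2$ and $|V^+| = |V^-|$, for every $u \in V^+$ the set $V^+ \setminus \{u\}$ is a nonempty proper subset, so $|\Gamma_G(V^+ \setminus \{u\})| \geq |V^+| = |V^-|$, forcing $\Gamma_G(V^+) = V^-$.

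The main subtlety, which I would verify carefully, is making sure the numerical cutoff $|X^+| < |V^-|$ matches exactly the set-theoretic ``endpoints'' of $V^+$ excluded in each of the balanced and unbalanced cases, and in particular that in the balanced case the hypothesis is strong enough to force $f_G(V^+) = 0$ on its own rather than needing it as a separate assumption. Everything else is bookkeeping around the standard lattice structure of minimizers of the submodular function $f_G$ and Theorem~\ref{thm:DM}.
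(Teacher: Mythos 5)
Your proposal is correct and follows essentially the same route as the paper: both arguments recast the condition as $f_G(X^+)\geq 1$, split into the unbalanced case ($V_\infty = V$, i.e., $\emptyset$ is the unique minimizer of $f_G$) and the balanced case ($V_1 = V$, i.e., the minimizers are exactly $\emptyset$ and $V^+$), and derive $f_G(V^+)=0$ in the balanced case from the hypothesis applied to a subset of size $|V^+|-1$. The only cosmetic difference is that you phrase the case analysis explicitly in terms of the lattice $\cL(f_G)$ and check all sets $V^+\setminus\{u\}$ where one suffices.
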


\begin{proof}
By the definition \eqref{eq:deficiency} of $f_G \colon 2^{V^+} \to \ZZ$,
the condition $|\Gamma_G(X^+)| \geq |X^+| + 1$ is equivalent to $f_G(X^+) \geq 1$.
By Conditions 1--3 in Theorem~\ref{thm:DM},
the DM-irreducibility of $G$ is equivalent to $V_\infty = V$ when $|V^+| < |V^-|$,
and to $V_1 = V$ when $|V^+| = |V^-|$.
In both cases, $X_0^+ = V_0^+ = \emptyset$ minimizes $f_G$, and $f_G(\emptyset) = 0$.

Suppose that $|V^+| < |V^-|$.
Then, $G$ is DM-irreducible if and only if
$X_0^+ = \emptyset$ is a unique minimizer of $f_G$;
equivalently, $f_G(X^+) \geq 1$ for every nonempty $X^+ \subseteq V^+$,
which satisfies $|X^+| \leq |V^+| < |V^-|$.

Suppose that $|V^+| = |V^-| \geq 2$.
Then, $G$ is DM-irreducible if and only if
$f_G$ has exactly two minimizers $X_0^+ = \emptyset$ and $X_1^+ = V_1^+ = V^+$;
equivalently, $f_G(V^+) = 0$ and
$f_G(X^+) \geq 1$ for every nonempty $X^+ \subsetneq V^+$, which satisfies $|X^+| < |V^+| = |V^-|$.
Note that the former condition is automatically satisfied
by the latter condition as follows.
For any nonempty $X^+ \subsetneq V^+$
with $|X^+| = |V^+| - 1$ (such $X^+$ exists because $|V^+| = |V^-| \geq 2$),
the latter condition implies
\[1 \leq f_G(X^+) = |\Gamma_G(X^+)| - |X^+| \leq |V^-| - |X^+| = 1.\]
We then have $V^- \supseteq \Gamma_G(V^+) \supseteq \Gamma_G(X^+) = V^-$,
and hence $\Gamma_G(V^+) = V^-$, which leads to $f_G(V^+) = |V^-| - |V^+| = 0$.
\end{proof}

The next lemma gives a reduction of the unbalanced case to the balanced case.
That is, making an unbalanced bipartite graph $G$ DM-irreducible by adding edges
is equivalent to making the corresponding balanced bipartite graph $G'$
defined in Lemma~\ref{lem:balanced} DM-irreducible by adding edges,
where the set of usable additional edges is not changed.

\begin{lemma}\label{lem:balanced}
  For a bipartite graph $G = (V^+, V^-; E)$ with $|V^+| < |V^-|$,
  define a balanced bipartite graph $G' = (V^+ \cup Z^+, V^-; E')$ as follows$:$
  let $Z^+$ be a set of new vertices with $|Z^+| = |V^-| - |V^+|$ and
  $E' := E \cup (Z^+ \times V^-)$.
  Then, $G$ is DM-irreducible if and only if so is $G'$.
\end{lemma}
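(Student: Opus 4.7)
The plan is to derive both sides of the equivalence as restatements of the same Hall-type condition by invoking Lemma~\ref{lem:DM-irreducibility} on each graph. Since $G'$ is balanced with $|V^+ \cup Z^+| = |V^-|$, Lemma~\ref{lem:DM-irreducibility} characterizes its DM-irreducibility (when $|V^-| \geq 2$) by
\[
|\Gamma_{G'}(X)| \geq |X| + 1 \quad \text{for every nonempty } X \subsetneq V^+ \cup Z^+,
\]
while Lemma~\ref{lem:DM-irreducibility} applied to $G$ itself (noting that any $X \subseteq V^+$ satisfies $|X| \leq |V^+| < |V^-|$ automatically) characterizes DM-irreducibility of $G$ by the condition $|\Gamma_G(X)| \geq |X| + 1$ for every nonempty $X \subseteq V^+$. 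So the goal reduces to showing that the $G'$-condition is equivalent to the $G$-condition.

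Next, I would verify this equivalence by a case split on whether $X$ meets $Z^+$. If $X \cap Z^+ \neq \emptyset$, pick any $z \in X \cap Z^+$; since $E'$ contains all edges of $Z^+ \times V^-$, we get $\Gamma_{G'}(X) \supseteq \Gamma_{G'}(z) = V^-$, hence $|\Gamma_{G'}(X)| = |V^-| \geq |X| + 1$ because $|X| < |V^-|$. Thus the $G'$-condition is automatic on subsets that touch $Z^+$, and its content lies entirely in the subsets $X \subseteq V^+$. For such $X$, the new edges (which all originate in $Z^+$) contribute nothing to $\Gamma_{G'}(X)$, so $\Gamma_{G'}(X) = \Gamma_G(X)$, and the $G'$-inequality becomes exactly $|\Gamma_G(X)| \geq |X| + 1$.

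Combining the two steps, the $G'$-condition holds iff the $G$-condition holds, establishing the equivalence.

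The only subtlety is the trivial case $|V^-| = 1$ (whence $|V^+| = 0$), in which Lemma~\ref{lem:DM-irreducibility} does not directly apply; but in that case $G$ has no left vertices so it is DM-irreducible by definition ($V_\infty = V$), and $G'$ consists of a single edge, which is also DM-irreducible ($V_1 = V$). No serious obstacle is anticipated; the argument is essentially a rewriting using the Hall-type characterization.
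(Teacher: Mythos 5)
Your proof is correct and follows essentially the same route as the paper's: apply Lemma~\ref{lem:DM-irreducibility} to both $G$ and $G'$, observe that any $X$ meeting $Z^+$ has $\Gamma_{G'}(X) = V^-$ and hence satisfies the Hall-type inequality automatically, and note that $\Gamma_{G'}(X) = \Gamma_G(X)$ for $X \subseteq V^+$. The paper phrases the argument via the deficiency function $f_G$ rather than $\Gamma$ directly, but the content is identical, including the separate treatment of the trivial case $|V^-| \leq 1$.
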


\begin{proof}
When $|V^-| \leq 1$, both $G$ and $G'$ are DM-irreducible.
Assume $|V^-| \geq 2$ in what follows.

Consider the set functions $f_G \colon 2^{V^+} \to \ZZ$ and $f_{G'} \colon 2^{V^+ \cup Z^+} \to \ZZ$
defined in \eqref{eq:deficiency}.
By Lemma~\ref{lem:DM-irreducibility}, $G$ is DM-irreducible if and only if
$f_G(X^+) \geq 1$ for every nonempty $X^+ \subseteq V^+$,
and so is $G'$ if and only if
$f_{G'}(X^+) \geq 1$ for every nonempty $X^+ \subsetneq V^+ \cup Z^+$.
By the definition of $E'$, for every $X^+ \subseteq V^+ \cup Z^+$
with $X^+ \cap Z^+ \neq \emptyset$, we have $\Gamma_{G'}(X^+) = V^-$,
which implies $f_{G'}(X^+) = |V^-| - |X^+| = |V^+ \cup Z^+| - |X^+|$.
Hence,
$f_{G'}(X^+) \geq 1$ for every $X^+ \subsetneq V^+ \cup Z^+$ with $X^+ \cap Z^+ \neq \emptyset$.
Since $f_G(X^+) = f_{G'}(X^+)$ for every $X^+ \subseteq V^+$,
the above two conditions for the DM-irreducibility of $G$ and of $G'$ are equivalent.
\end{proof}

\subsection{Formulation of the balanced case as supermodular covering}\label{sec:Formulation}
We show that the balanced case of Problem (DMI) reduces to Problem (FJ).
Let $G = (V^+, V^-; E)$ be a bipartite graph with $|V^+| = |V^-| = n \geq 2$.
Define a family $\cF \subseteq 2^{V^+} \times 2^{V^-}$ and a function $g \colon \cF \to \ZZ_{\geq 0}$ by
\begin{align}
  \cF &:= \{\, (X^+, X^-) \mid \emptyset \neq X^+ \subseteq V^+,\ \emptyset \neq X^- \subseteq V^-,\ E(X^+, X^-) = \emptyset \,\},\nonumber\\[1mm]
  g(X^+, X^-) &:= \max\{0, |X^+| + |X^-| - n + 1\}. \label{eq:requirement}
\end{align}
Then, $\cF$ is crossing because $E(X^+ \cup Y^+, X^- \cap Y^-)$ and $E(X^+ \cap Y^+, X^- \cup Y^-)$
are included in $E(X^+, X^-) \cup E(Y^+, Y^-)$
for every $X^+, Y^+ \subseteq V^+$ and $X^-, Y^- \subseteq V^-$,
and $g$ is crossing supermodular because the second part in the maximum is modular.

\begin{claim}\label{cl:reduction}
  An edge set $F \subseteq V^+ \times V^-$ covers $g$
  if and only if $G + F$ is DM-irreducible.
\end{claim}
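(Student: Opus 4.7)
The proof I would give rests squarely on Lemma 2.6, which in this balanced setting ($|V^+|=|V^-|=n\ge 2$) characterizes DM-irreducibility of $G+F$ by the expansion inequality $|\Gamma_{G+F}(X^+)|\ge |X^+|+1$ for every nonempty $X^+\subsetneq V^+$. I also want to exploit the left/right symmetry of DM-irreducibility noted after Theorem 2.3, which in the balanced case means Lemma 2.6 applies equally to nonempty $X^-\subsetneq V^-$.

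For the $(\Leftarrow)$ direction, assume $G+F$ is DM-irreducible and fix $(X^+,X^-)\in\cF$; I may assume $|X^+|+|X^-|\ge n$ (otherwise $g=0$ and the bound is trivial). In the principal subcase $X^+\subsetneq V^+$, the idea is to decompose $\Gamma_{G+F}(X^+)$ into its parts inside and outside $X^-$. Because $(X^+,X^-)\in\cF$ means $E_G(X^+,X^-)=\emptyset$, every $(G+F)$-edge from $X^+$ to $X^-$ lies in $F$, so $\Gamma_{G+F}(X^+)\cap X^-=\partial^-F(X^+,X^-)$, while the other part is trivially bounded by $n-|X^-|$. Lemma 2.6 then yields $|\partial^-F(X^+,X^-)|\ge |X^+|+|X^-|-n+1$, and since $|F(X^+,X^-)|\ge|\partial^-F(X^+,X^-)|$, this is exactly $|F(X^+,X^-)|\ge g(X^+,X^-)$. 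The subcase $X^+=V^+$ with $X^-\subsetneq V^-$ is handled identically after swapping the two sides.

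The $(\Rightarrow)$ direction is the cleaner half. Assume $F$ covers $g$, fix nonempty $X^+\subsetneq V^+$, and set $Y^-:=V^-\setminus\Gamma_{G+F}(X^+)$. By construction no $G$-edge and no $F$-edge goes from $X^+$ to $Y^-$. If $Y^-=\emptyset$ then $|\Gamma_{G+F}(X^+)|=n\ge|X^+|+1$ and we are done. Otherwise $(X^+,Y^-)\in\cF$, and $|F(X^+,Y^-)|=0$ together with the covering hypothesis forces $g(X^+,Y^-)=0$, i.e.\ $|X^+|+|Y^-|\le n-1$. This rearranges to $|\Gamma_{G+F}(X^+)|=n-|Y^-|\ge|X^+|+1$, and Lemma 2.6 delivers DM-irreducibility.

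The one spot that requires real care is the boundary case $X^+=V^+$ in direction $(\Leftarrow)$, where Lemma 2.6's inequality is vacuous on the $V^+$ side; this is what the symmetric form of the lemma is for. The fully extremal instance $X^+=V^+$, $X^-=V^-$ lies in $\cF$ only when $E=\emptyset$, in which case both sides of Lemma 2.6 become vacuous too, but then $G+F=F$ must itself be DM-irreducible, and applying Lemma 2.6 to singletons gives $\deg_F(v)\ge 2$ for every $v\in V^+$, so $|F|\ge 2n\ge n+1=g(V^+,V^-)$. I expect the write-up to be short once these boundary cases are dispatched; the conceptual content is entirely the identity $\Gamma_{G+F}(X^+)\cap X^-=\partial^-F(X^+,X^-)$ combined with Lemma 2.6.
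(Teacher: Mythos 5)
Your proof is correct and takes essentially the same route as the paper's: both directions reduce to the expansion condition of Lemma~\ref{lem:DM-irreducibility}, with the complementation $X^- = V^-\setminus\Gamma_{G+F}(X^+)$ for one direction and the count $|F(X^+,X^-)|\ge|\Gamma_{G+F}(X^+)\cap X^-|\ge|\Gamma_{G+F}(X^+)|-(n-|X^-|)$ for the other. The only (welcome) difference is that you explicitly dispatch the boundary cases $X^+=V^+$ and $(V^+,V^-)\in\cF$ via the left/right symmetry of the lemma, a point the paper's write-up passes over silently since its inequality $|\Gamma_{G+F}(X^+)|\ge|X^+|+1$ is only licensed for $X^+\subsetneq V^+$.
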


\begin{proof}
{[``Only if'' part]}~
Suppose that $F \subseteq V^+ \times V^-$ covers $g$.
By Lemma~\ref{lem:DM-irreducibility}, to see the DM-irreducibility of $G + F$,
it suffices to show that $|\Gamma_{G + F}(X^+)| \geq |X^+| + 1$
for every nonempty $X^+ \subsetneq V^+$.
Fix such $X^+$, and let $X^- := V^- \setminus \Gamma_{G + F}(X^+) \subseteq V^- \setminus \Gamma_G(X^+)$.
If $X^- = \emptyset$, then $\Gamma_{G + F}(X^+) = V^-$,
which implies $|\Gamma_{G + F}(X^+)| = |V^-| = |V^+| \geq |X^+| + 1$.
Otherwise, $\emptyset \neq X^- \subseteq V^- \setminus \Gamma_G(X^+)$,
and hence $(X^+, X^-) \in \cF$.
Since $F$ covers $g$ and $F(X^+, X^-) = \emptyset$,
we have $0 \geq g(X^+, X^-) = \max\{0, |X^+| + |X^-| - n + 1\}$.
This means $0 \geq |X^+| + |X^-| - n + 1 = |X^+| - |\Gamma_{G + F}(X^+)| + 1$,
and hence $|\Gamma_{G + F}(X^+)| \geq |X^+| + 1$.

\medskip
[``If'' part]~
Suppose that $G + F$ is DM-irreducible for $F \subseteq V^+ \times V^-$.
Then, by Lemma~\ref{lem:DM-irreducibility},
we have $|\Gamma_{G + F}(X^+)| \geq |X^+| + 1$
for every nonempty $X^+ \subsetneq V^+$.
For any $(X^+, X^-) \in \cF$, since $\Gamma_G(X^+) \cap X^- = \emptyset$,
we have $|F(X^+, X^-)| \geq |\Gamma_{G + F}(X^+) \cap X^-|$.
It is easy to see that $|\Gamma_{G + F}(X^+) \cap X^-| \geq |\Gamma_{G + F}(X^+)| - |V^- \setminus X^-| \geq |X^+| + 1 + |X^-| - n$,
which coincides with $g(X^+, X^-)$ when $g(X^+, X^-) > 0$.
Thus $F$ covers $g$.
\end{proof}

Since parallel edges make no effect on the DM-decomposition,
which is defined only by the adjacency relation
(cf.~the definition \eqref{eq:deficiency} of $f_G$),
the minimum of $|F|$ for covering a crossing supermodular function $g$
defined by \eqref{eq:requirement}
is attained by an edge ``set'' $F \subseteq V^+ \times V^-$.
Thus, Problem (DMI) reduces to Problem (FJ).
Since the values of $g$ are bounded by $n + 1$,
this problem is solved in polynomial time
by the pseudopolynomial-time algorithm of V\'{e}gh and Bencz\'{u}r \cite{VB2008}.

\subsection{Proof of the min-max duality (Theorem~\ref{thm:duality})}\label{sec:Proof_via_FJ}
Now we are ready to derive Theorem~\ref{thm:duality} from Theorem~\ref{thm:FJ1995}.
We postpone to Appendix \ref{sec:app2} the proof of Theorem~\ref{thm:duality_unbalanced}
via the reduction to the balanced case,
and prove it via matroid intersection instead in Section \ref{sec:WMI}.

We show $\max_\cX \tau_G(\cX) = \max_\cS \eta(\cS)$,
where the maxima are taken over all proper subpartitions $\cX$ of $V^+$ and of $V^-$
and all pairwise-independent subfamilies $\cS \subseteq \cF$.
We first confirm $\max_\cX \tau_G(\cX) \leq \max_\cS \eta(\cS)$.

\begin{claim}
  For any proper subpartition $\cX$\! of $V^+$\! or of $V^-$\!,
  there exists a pairwise-independent subfamily $\cS$ of $\cF$ such that
  $\tau_G(\cX) \leq \eta(\cS)$.
\end{claim}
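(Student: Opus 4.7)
The plan is to produce, for each block $X$ of the given subpartition $\cX$, a single pair in $\cF$ whose $g$-value dominates the contribution of $X$ to $\tau_G(\cX)$, and to argue that the resulting collection is automatically pairwise independent because the blocks of a subpartition are disjoint. By the symmetry of the setup (and of $\cF$ and $g$), it suffices to treat the case where $\cX$ is a proper subpartition of $V^+$; the case of a subpartition of $V^-$ will be handled verbatim by swapping the roles of $V^+$ and $V^-$.

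Concretely, for each $X \in \cX$, I would set $X^+ := X$ and $X^- := V^- \setminus \Gamma_G(X)$. Whenever $X^- \neq \emptyset$ I include $(X^+, X^-)$ in $\cS$; otherwise I discard this block. The membership $(X^+, X^-) \in \cF$ is immediate: $X^+$ and $X^-$ are nonempty by construction and $E(X^+, X^-) = \emptyset$ since $X^-$ is disjoint from $\Gamma_G(X)$. A direct computation then gives
\[
g(X^+, X^-) \;=\; \max\{0,\, |X| + (n - |\Gamma_G(X)|) - n + 1\} \;=\; \max\{0,\, |X| - |\Gamma_G(X)| + 1\},
\]
so in particular $g(X^+, X^-) \geq |X| - |\Gamma_G(X)| + 1$ for every block kept in $\cS$.

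To bound $\eta(\cS)$ from below by $\tau_G(\cX)$, I need to handle the discarded blocks. For such a block $X$, we have $\Gamma_G(X) = V^-$, and because $\cX$ is proper (so $X \subsetneq V^+$ and $|X| \leq n - 1$), its contribution $|X| - |\Gamma_G(X)| + 1 = |X| - n + 1$ is at most $0$. Summing over all blocks therefore yields
\[
\tau_G(\cX) \;\leq\; \sum_{X \in \cX,\, \Gamma_G(X) \neq V^-} \bigl(|X| - |\Gamma_G(X)| + 1\bigr) \;\leq\; \sum_{(X^+, X^-) \in \cS} g(X^+, X^-) \;=\; \eta(\cS).
\]

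Finally, pairwise independence of $\cS$ is the easy part: for distinct $X, Y \in \cX$ the disjointness $X \cap Y = \emptyset$ forces $X^+ \cap Y^+ = \emptyset$, which suffices by the definition of independent pairs. The main conceptual point — and the only step that required care — is recognizing that the construction $X^- := V^- \setminus \Gamma_G(X)$ matches $g$ to the quantity $|X| - |\Gamma_G(X)| + 1$ on the nose, and that discarding blocks with $\Gamma_G(X) = V^-$ only discards nonpositive contributions thanks to the properness of $\cX$.
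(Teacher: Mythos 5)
Your proof is correct and follows essentially the same route as the paper: pair each block $X$ with $X^- := V^- \setminus \Gamma_G(X)$, observe that $g(X^+,X^-) = \max\{0, |X| - |\Gamma_G(X)| + 1\}$, and use disjointness of the blocks for pairwise independence. You are in fact slightly more careful than the paper in explicitly discarding blocks with $\Gamma_G(X) = V^-$ (whose pairs would fail the nonemptiness requirement of $\cF$) and using properness to check their contribution is nonpositive.
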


\begin{proof}
By the symmetry, we assume that $\cX$ is a proper subpartition of $V^+$,
and define $X^- := V^- \setminus \Gamma_G(X^+)$ for each $X^+ \in \cX$.
Then, $(X^+, X^-) \in \cF$ $(X^+ \in \cX)$ are pairwise independent
(since $\cX$ is a subpartition of $V^+$),
and $g(X^+, X^-) = \max\{0, |X^+| - |\Gamma_G(X^+)| + 1\}$ by \eqref{eq:requirement}.
Hence, for $\cS := \{\, (X^+, X^-) \mid X^+ \in \cX \,\}$, we have
\[\tau_G(\cX) = \sum_{X^+ \in \cX} \left(|X^+| - |\Gamma_G(X^+)| + 1\right) \leq \sum_{X^+ \in \cX} g(X^+, X^-) = \eta(\cS).\qedhere\]
\end{proof}

In order to show the equality,
it suffices to show that, for any pairwise-independent subfamily $\cS \subseteq \cF$,
there exists a proper subpartition $\cY$ of $V^+$ or of $V^-$ such that
$\tau_G(\cY) \geq \eta(\cS)$.
Since any pair $(X^+, X^-) \in \cF$ with $g(X^+, X^-) = 0$
does not contribute to $\eta(\cS)$,
we assume that $g(X^+, X^-) > 0$ for every $(X^+, X^-) \in \cS$
by removing redundant pairs if necessary.
We then have $g(X^+, X^-) = |X^+| + |X^-| - n + 1 \leq |X^+| - |\Gamma_G(X^+)| + 1$ for every $(X^+, X^-) \in \cS$.
Let $\cS^\ast := \{\, X^\ast \mid (X^+, X^-) \in \cS \,\}$ for $\ast = +~\text{and}~-$.

\medskip
\noindent\underline{{\bf Case 1.}~~When $\cS^\ast$ is a subpartition of $V^\ast$ for $\ast = +~{\rm or}~-$.}

\medskip
By the symmetry, suppose that $\cS^+$ is a subpartition of $V^+$.
If $V^+ \not\in \cS^+$, then $\cY := \cS^+$ is a desired proper subpartition of $V^+$.
Otherwise, we have $\cS^+ = \{V^+\}$.
If $\cS^- \neq \{V^-\}$, then $\Gamma_G(X^-) = \emptyset$
and $g(V^+, X^-) = |X^-| + 1$ for a unique element $X^- \in \cS^-$,
and hence it suffices to take $\cY := \cS^-$.
Otherwise, $\cS = \{(V^+, V^-)\}$, and hence $E = E(V^+, V^-) = \emptyset$.
Note that $g(V^+, V^-) = n + 1$, and recall that we assume $n \geq 2$.
In this case, if we take a proper partition $\cY := \{\, \{u\} \mid u \in V^+ \,\}$ of $V^+$,
then
\[\tau_G(\cY) = \sum_{u \in V^+} \left(|\{u\}| - |\emptyset| + 1\right) = 2n \geq n + 1 = g(V^+, V^-) = \eta(\cS).\]

\noindent\underline{{\bf Case 2.}~~When $\cS^\ast$ is not a subpartition of $V^\ast$ for $\ast = +~{\rm and}~-$.}

\medskip
Since $X^+ \cap Y^+ = \emptyset$ or $X^- \cap Y^- = \emptyset$
for every distinct pairs $(X^+, X^-), (Y^+, Y^-) \in \cS$,
we have $|\cS| \geq 3$.
We shall show by induction on $|\cS|$ that
this case reduces to Case 1 by an {\em uncrossing} procedure.

We first observe that $V^+ \not\in \cS^+$ or $V^- \not\in \cS^-$.
Suppose to the contrary that $V^+ \in \cS^+$ and $V^- \in \cS^-$.
We then have $(V^+, X^-), (Y^+, V^-) \in \cS$
for some $X^- \subseteq V^-$ and $Y^+ \subseteq V^+$.
If $X^- = V^-$ or $Y^+ = V^+$,
then $(V^+, V^-) \in \cS$ cannot be independent from any other pair in $\cS \subseteq \cF$,
which contradicts $|\cS| \geq 3$.
Otherwise (i.e., if $X^- \neq V^-$ and $Y^+ \neq V^+$),
since $X^- \neq \emptyset \neq Y^+$ by the definition of $\cF$,
the two pairs $(V^+, X^-), (Y^+, V^-) \in \cF$ cannot be independent, a contradiction.
By the symmetry, we assume that $V^+ \not\in \cS^+$.

The following claim shows a successful uncrossing procedure.

\begin{claim}\label{cl:uncrossing}
  If distinct $X^+, Y^+ \in \cS^+$ satisfy
  $X^+ \cap Y^+ \neq \emptyset$ and $X^+ \cup Y^+ \neq V^+$,
  then one can reduce $|\cS|$ by
  replacing $(X^+, X^-)$ and $(Y^+, Y^-)$
  with $(X^+ \cap Y^+, X^- \cup Y^-)$ without reducing the value of $\eta(\cS)$.
\end{claim}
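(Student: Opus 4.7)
The plan is to verify four things about the proposed replacement: that the new pair $(X^+ \cap Y^+, X^- \cup Y^-)$ belongs to $\cF$, that $g(X^+ \cap Y^+, X^- \cup Y^-) \geq g(X^+, X^-) + g(Y^+, Y^-)$, that the modified family remains pairwise independent, and that $|\cS|$ strictly decreases. The starting observation, used throughout, is that since $(X^+, X^-)$ and $(Y^+, Y^-)$ are independent members of $\cS$ while $X^+ \cap Y^+ \neq \emptyset$, their second coordinates must be disjoint, i.e., $X^- \cap Y^- = \emptyset$. Membership in $\cF$ then follows at once: $X^+ \cap Y^+$ is nonempty by hypothesis, $X^- \cup Y^-$ is nonempty, and any edge of $E(X^+ \cap Y^+, X^- \cup Y^-)$ would lie in $E(X^+, X^-) \cup E(Y^+, Y^-) = \emptyset$.

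The heart of the argument is the inequality on $g$. Since $g$ is positive on both original pairs, the sum $g(X^+, X^-) + g(Y^+, Y^-)$ evaluates to $|X^+| + |Y^+| + |X^-| + |Y^-| - 2n + 2$. Using $|X^- \cup Y^-| = |X^-| + |Y^-|$ from the disjointness just established, together with the identity $|X^+| + |Y^+| = |X^+ \cap Y^+| + |X^+ \cup Y^+|$, the desired inequality reduces to $|X^+ \cup Y^+| \leq n - 1$, which is precisely the hypothesis $X^+ \cup Y^+ \neq V^+$. This is the only place this second hypothesis is used, and it cannot be bypassed by invoking the crossing supermodularity of $g$, because the two pairs are \emph{independent}, not dependent; one must work directly with the explicit formula for $g$ together with the disjointness of $X^-$ and $Y^-$. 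This is the one step I expect to be delicate, and it is handed to us precisely by the hypotheses of the claim.

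The remaining verifications are routine. For pairwise independence, take any other $(Z^+, Z^-) \in \cS$: independence with $(X^+, X^-)$ and with $(Y^+, Y^-)$ forces either $Z^+$ to be disjoint from $X^+$ or from $Y^+$ (hence from $X^+ \cap Y^+$), or else $Z^-$ to be disjoint from both $X^-$ and $Y^-$ (hence from $X^- \cup Y^-$); in either case $(Z^+, Z^-)$ is independent from the new pair. The same dichotomy rules out the degenerate possibility that $(X^+ \cap Y^+, X^- \cup Y^-)$ already appears in $\cS \setminus \{(X^+, X^-), (Y^+, Y^-)\}$, since such a pair would share $X^+ \cap Y^+$ with $X^+$ in its first coordinate and contain the nonempty set $X^-$ in its second, violating independence with $(X^+, X^-)$. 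Consequently $|\cS|$ drops by exactly one, completing the proof.
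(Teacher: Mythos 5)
Your proof is correct and follows essentially the same route as the paper's: membership in $\cF$ via $E(X^+\cap Y^+, X^-\cup Y^-)\subseteq E(X^+,X^-)\cup E(Y^+,Y^-)=\emptyset$, pairwise independence via the same case dichotomy on $(Z^+,Z^-)$, and the key inequality by the same direct computation using $X^-\cap Y^-=\emptyset$, positivity of both $g$-values, and $|X^+\cup Y^+|\leq n-1$. Your additional observation that the new pair cannot already lie in $\cS$ (so $|\cS|$ drops by exactly one) is a small correct refinement the paper leaves implicit.
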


\begin{proof}
We first see that $(X^+ \cap Y^+, X^- \cup Y^-) \in \cF$.
This follows from $X^+ \cap Y^+ \neq \emptyset$ and
$E(X^+ \cap Y^+, X^- \cup Y^-) \subseteq E(X^+, X^-) \cup E(Y^+, Y^-) = \emptyset$.

Next, we confirm that $(X^+ \cap Y^+, X^- \cup Y^-)$ is independent from
each $(Z^+, Z^-) \in \cS \setminus \{(X^+, X^-), (Y^+, Y^-)\}$.
Since $(Z^+, Z^-)$ is independent from both $(X^+, X^-)$ and $(Y^+, Y^-)$,
at least one of $X^+ \cap Z^+$, $Y^+ \cap Z^+$, and $(X^- \cup Y^-) \cap Z^-$ is empty.
This implies that $(X^+ \cap Y^+) \cap Z^+ = \emptyset$ or $(X^- \cup Y^-) \cap Z^- = \emptyset$.

Finally, we show that the value of $\eta(\cS)$ does not decrease by this replacement.
Recall that $X^- \cap Y^- = \emptyset$ (since $X^+ \cap Y^+ \neq \emptyset$),
both $g(X^+, X^-)$ and $g(Y^+, Y^-)$ are positive, and $X^+ \cup Y^+ \subsetneq V^+$.
Thus we have the following inequalities, which complete the proof:
\begin{align*}
  &g(X^+ \cap Y^+, X^- \cup Y^-) \\[.5mm]
  &\quad \geq |X^+ \cap Y^+| + |X^- \cup Y^-| - n + 1\\[.5mm]
  &\quad = \left(|X^+| + |Y^+| - |X^+ \cup Y^+|\right) + \left(|X^-| + |Y^-|\right) - n + 1\\[.5mm]
  &\quad = \bigl(|X^+| + |X^-| - n + 1\bigr) + \bigl(|Y^+| + |Y^-| - n + 1\bigr) + \bigl(n - 1 - |X^+ \cup Y^+|\bigr)\\[.5mm]
  &\quad \geq g(X^+, X^-) + g(Y^+, Y^-). \qedhere
\end{align*}
\end{proof}

Some pair must be uncrossed by Claim~\ref{cl:uncrossing} as follows,
which completes the proof.

\begin{claim}\label{cl:existence}
  There exist distinct $X^+, Y^+ \in \cS^+$ such that
  $X^+ \cap Y^+ \neq \emptyset$ and $X^+ \cup Y^+ \neq V^+$.
\end{claim}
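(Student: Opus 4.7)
The plan is a case analysis that leverages the ``other side'' $\cS^-$ to produce a useful third element of $\cS^+$ whenever the obvious intersecting pair in $\cS^+$ covers all of $V^+$.

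First we choose an initial intersecting pair: since $\cS^+$ is not a subpartition, pick distinct $P, Q \in \cS^+$ with $P \cap Q \neq \emptyset$. If $P \cup Q \neq V^+$ then $(P, Q)$ itself witnesses the claim and we are done. Otherwise $P \cup Q = V^+$, and $V^+ \notin \cS^+$ forces $P, Q \subsetneq V^+$, a fact that will be essential at the end.

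Next we extract a third set $C \in \cS^+ \setminus \{P, Q\}$ from the failure of $\cS^-$ to be a subpartition. The key observation is that any two distinct pairs in $\cS$ whose $+$-parts both lie in $\{P, Q\}$ must have disjoint $-$-parts: two pairs sharing a $+$-part are forced disjoint on the $-$-side by pairwise independence, and two pairs with $+$-parts $P$ and $Q$ are forced disjoint on the $-$-side because $P \cap Q \neq \emptyset$. Consequently, any intersecting pair in $\cS^-$ must come from two pairs $(C', B'), (C, B) \in \cS$ with at least one of $C, C'$ outside $\{P, Q\}$; say $C \notin \{P, Q\}$. The same pairwise independence (applied to $B \cap B' \neq \emptyset$) yields $C \cap C' = \emptyset$, and $C' \in \cS^+$ is nonempty.

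Now we analyze $C \subseteq V^+ = P \cup Q$. If $C$ misses $P$ (or, symmetrically, $Q$), then $C \subseteq Q$, so $(C, Q)$ has intersection $C \neq \emptyset$ and union $Q \subsetneq V^+$, and we are done. Otherwise $C$ meets both $P$ and $Q$; if either $C \cup P \neq V^+$ or $C \cup Q \neq V^+$, we finish with $(C, P)$ or $(C, Q)$.

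The main obstacle is the remaining case $C \cup P = C \cup Q = V^+$, where $C$ behaves badly against both initial sets. Here we fall back on $C'$. The two union equalities force $V^+ \setminus C \subseteq P \cap Q$, so $C' \subseteq V^+ \setminus C \subseteq P$; since $C' \neq \emptyset$, the pair $(C', P)$ intersects in $C'$ and has union $P \subsetneq V^+$. The only loose end is $C' = P$, but that would yield $P \subseteq P \cap Q$, hence $P \subseteq Q$ and $V^+ = P \cup Q = Q$, contradicting $V^+ \notin \cS^+$. Thus $C' \neq P$, and $(C', P)$ completes the proof.
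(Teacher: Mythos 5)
Your proof is correct and rests on the same two pillars as the paper's: an intersecting pair $P, Q \in \cS^+$ that may be assumed to cover $V^+$, and a pair $C, C' \in \cS^+$ with $C \cap C' = \emptyset$ extracted from the failure of $\cS^-$ to be a subpartition. The paper runs the argument by contradiction via a trichotomy ($Z_i^+ = X^+$, $Z_i^+ = Y^+$, or $Z_i^+ \supseteq X^+ \triangle Y^+$) applied to both disjoint sets, whereas you unfold essentially the same case analysis constructively, exhibiting an explicit witness pair each time; your final case $C \cup P = C \cup Q = V^+$ is precisely the paper's condition (c) for $C$, resolved by the clean observation $C' \subseteq V^+ \setminus C \subseteq P \cap Q$ together with $C' \neq P$.
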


\begin{proof}
Suppose to the contrary that, for every distinct $X^+, Y^+ \in \cS^+$,
we have $X^+ \cap Y^+ = \emptyset$ or $X^+ \cup Y^+ = V^+$.
Take distinct elements $X^+, Y^+ \in \cS^+$ with $X^+ \cap Y^+ \neq \emptyset$,
and distinct pairs $(Z_1^+, Z_1^-), (Z_2^+, Z_2^-) \in \cS$ with $Z_1^- \cap Z_2^- \neq \emptyset$
(recall the case assumption that $\cS^*$ is not a subpartition of $V^\ast$ for $\ast = +~{\rm and}~-$).
Then, $X^+ \cup Y^+ = V^+$ and $Z_1^+ \cap Z_2^+ = \emptyset$.
We show that, for each $i \in \{1, 2\}$, exactly one of the following statements holds:
\begin{itemize}\vspace{-.5mm}
  \setlength{\itemsep}{.5mm}
\item[(a)]
  $Z_i^+ = X^+$;
\item[(b)]
  $Z_i^+ = Y^+$;
\item[(c)]
  $Z_i^+ \supseteq X^+ \triangle Y^+ := (X^+ \setminus Y^+) \cup (Y^+ \setminus X^+)$.
\end{itemize}\vspace{-.5mm}
Since $X^+ \setminus Y^+ \neq \emptyset \neq Y^+ \setminus X^+$
(otherwise, $X^+ = V^+$ or $Y^+ = V^+$, which contradicts that $V^+ \not\in \cS^+$),
every possible pair of (a)--(c) leads to $Z_1^+ \cap Z_2^+ \neq \emptyset$, a contradiction.

Suppose that $Z_i \neq X^+$ and $Z_i \neq Y^+$, and we derive Condition (c).
Since $X^+ \cup Y^+ = V^+$,
we assume $Z_i^+ \cap X^+ \neq \emptyset$ without loss of generality.
This implies $Z_i^+ \cup X^+ = V^+$, and hence
$Z_i^+ \supseteq V^+ \setminus X^+ = Y^+ \setminus X^+$.
Since $Y^+ \setminus X^+ \neq \emptyset$,
we also have $Z_i^+ \cap Y^+ \neq \emptyset$.
We then similarly see $Z_i^+ \supseteq X^+ \setminus Y^+$,
which concludes that $Z_i^+ \supseteq X^+ \triangle Y^+$.
\end{proof}

\section{Solving Unbalanced Case via Matroid Intersection}\label{sec:WMI}
In this section, we discuss a reduction of the unbalanced case of Problem (DMI)
to matroid intersection.
The readers are referred to \cite{Frank2011, Schrijver2003}
for basics on matroids and matroid intersection.

First, in Section~\ref{sec:minimal},
we introduce the concept of minimal DM-irreducibility
and give a simple characterization.
With the aid of the characterization,
we reduce the unbalanced case to matroid intersection in Section~\ref{sec:reduction_WMI}.
We also discuss the tractability of the minimum-cost augmentation problem in Section~\ref{sec:cost}.
In Section~\ref{sec:SCBG},
we show that our reduction can be derived also from a general framework of
covering supermodular functions by bipartite graphs.
Finally, in Section~\ref{sec:duality_MI},
we give a proof of the min-max duality (Theorem~\ref{thm:duality_unbalanced})
with the aid of Edmonds' matroid intersection theorem \cite{Edmonds1970}.

\subsection{Minimal DM-irreducibility}\label{sec:minimal}
We say that a subgraph $G'$ of a graph $G$ is {\bf spanning}
if $G'$ contains all the vertices in $G$ (some of which may be isolated), i.e.,
if $G'$ is obtained just by removing some edges from $G$.
Since the DM-irreducibility is not violated by adding edges,
a bipartite graph is DM-irreducible if and only if
it includes a minimal DM-irreducible spanning subgraph,
from which removing any edge violates the DM-irreducibility. 
We say that such a bipartite graph $G$ is {\em minimally DM-irreducible}, i.e.,
if $G$ itself is DM-irreducible but is no longer after removing an arbitrary edge.

To characterize the minimal DM-irreducibility, we use the following property of DM-irreducible graphs,
which immediately follows from the ``only if'' part of Lemma~\ref{lem:DM-irreducibility} with $X^+ = \{u\}$.

\begin{corollary}\label{cor:DM-irreducibility}
  If a bipartite graph $G = (V^+, V^-; E)$ with $|V^+| \leq |V^-|$ and $|V^-| \geq 2$ is DM-irreducible,
  then $|\Gamma_G(\{u\})| \geq 2$ for every $u \in V^+$.
\end{corollary}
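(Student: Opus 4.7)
The plan is to obtain this as a direct specialization of Lemma~\ref{lem:DM-irreducibility}, as the remark preceding the statement already hints. Specifically, I would assume that $G$ is DM-irreducible and that $|V^+| \leq |V^-|$ with $|V^-| \geq 2$, and then apply the ``only if'' direction of Lemma~\ref{lem:DM-irreducibility} to the singleton set $X^+ = \{u\}$ for an arbitrary $u \in V^+$.

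The single thing to verify before invoking the lemma is that $X^+ = \{u\}$ meets the hypothesis $|X^+| < |V^-|$. This is immediate from $|\{u\}| = 1$ together with the standing assumption $|V^-| \geq 2$. The lemma then yields
\[
  |\Gamma_G(\{u\})| \;\geq\; |\{u\}| + 1 \;=\; 2,
\]
which is exactly the desired conclusion. Since $u \in V^+$ was arbitrary, the statement holds for every $u \in V^+$.

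There is no real obstacle here: the corollary is a packaging of the $|X^+| = 1$ case of Lemma~\ref{lem:DM-irreducibility}, isolated for later use in the discussion of minimal DM-irreducibility. The only conceptual point worth flagging is that the hypothesis $|V^-| \geq 2$ is used precisely to ensure the strict inequality $|X^+| < |V^-|$ required by the lemma; without it, a singleton $X^+$ could equal $V^+ = V^-$ in size and fall outside the range of subsets to which the lemma applies. Thus the proof is essentially a one-liner, and I would present it as such rather than reproving anything from scratch.
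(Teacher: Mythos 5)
Your proof is correct and is exactly the paper's argument: the paper derives the corollary by applying the ``only if'' part of Lemma~\ref{lem:DM-irreducibility} to $X^+ = \{u\}$, with $|V^-| \geq 2$ guaranteeing $|X^+| = 1 < |V^-|$. Nothing further is needed.
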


The next lemma gives a simple characterization of
the minimally DM-irreducible unbalanced bipartite graphs,
which implies their matroidal structure.

\begin{lemma}\label{lem:minimally_DM-irreducible}
  A bipartite graph $G = (V^+, V^-; E)$ with $|V^+| < |V^-|$
  is minimally DM-irreducible if and only if $|\Gamma_G(\{u\})| = 2$ for every $u \in V^+$
  and $G$ is a forest as an undirected graph $($i.e., contains no undirected cycle$)$.
\end{lemma}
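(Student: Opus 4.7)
The plan is to use the characterization from Lemma~\ref{lem:DM-irreducibility} as the working criterion throughout: because $|V^+| < |V^-|$, it states that $G$ is DM-irreducible if and only if $|\Gamma_G(X^+)| \geq |X^+| + 1$ for every nonempty $X^+ \subseteq V^+$. The \emph{sufficiency} direction is direct: if every $u \in V^+$ has exactly two neighbors and $G$ is a forest, then for any nonempty $X^+ \subseteq V^+$ the subgraph induced on $X^+ \cup \Gamma_G(X^+)$ contains $2|X^+|$ edges (each $u \in X^+$ contributes both of its incident edges), yet being a sub-forest it has strictly fewer edges than vertices, so $|\Gamma_G(X^+)| \geq |X^+| + 1$. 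Minimality is immediate since deleting any edge $uw$ leaves $u$ with a single neighbor, violating the criterion at $X^+ = \{u\}$.

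For the \emph{necessity} direction, assume $G$ is minimally DM-irreducible; Corollary~\ref{cor:DM-irreducibility} already gives $|\Gamma_G(\{u\})| \geq 2$ for every $u \in V^+$. To upgrade this to equality, call $X^+ \subseteq V^+$ \emph{tight} when $|\Gamma_G(X^+)| = |X^+| + 1$; submodularity of $f_G$ (cf.~\eqref{eq:deficiency}) implies that tight sets sharing a common vertex are closed under intersection, so whenever $u$ lies in some tight set there is a unique smallest one, $T_u$. Suppose toward a contradiction that $|\Gamma_G(\{u\})| \geq 3$. If no tight set contains $u$, then $f_G(X^+) \geq 2$ for every $X^+ \ni u$, and removing any edge $uw$ preserves DM-irreducibility, contradicting minimality. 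Otherwise $T_u$ exists. For each $w \in \Gamma_G(\{u\})$, by minimality there is a tight $X^+ \ni u$ in which $u$ is $w$'s only neighbor; since $T_u \subseteq X^+$, the same holds in $T_u$. Aggregating over all $w \in \Gamma_G(\{u\})$ yields $\Gamma_G(\{u\}) = \Gamma_G(T_u) \setminus \Gamma_G(T_u \setminus \{u\})$, whence
\[
|\Gamma_G(T_u \setminus \{u\})| = (|T_u| + 1) - |\Gamma_G(\{u\})| \leq |T_u| - 2.
\]
If $T_u = \{u\}$ this already forces $|\Gamma_G(\{u\})| = 2$, a contradiction; otherwise $f_G(T_u \setminus \{u\}) \leq -1$, contradicting DM-irreducibility.

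It remains to show that $G$ is a forest. If $G$ contained an undirected cycle $C$, let $X^+ := V^+ \cap V(C)$; each $u \in X^+$ has total degree $2$ with both neighbors on $C$, so $\Gamma_G(X^+) = V^- \cap V(C)$, and since $C$ alternates between the two sides we get $|V^+ \cap V(C)| = |V^- \cap V(C)|$, giving $|\Gamma_G(X^+)| = |X^+|$ and contradicting DM-irreducibility. The main obstacle is the degree-two step: the uncrossing argument needed to single out the smallest tight set $T_u$ and then derive a deficiency at $T_u \setminus \{u\}$ is the only place where submodularity is genuinely used; everything else reduces to direct edge-counting.
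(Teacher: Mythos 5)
Your proof is correct, but the necessity direction takes a genuinely different route from the paper's. The paper first proves that a minimally DM-irreducible graph is a forest (Claim~\ref{cl:3}) by a structural argument: it augments the auxiliary graph $G(M)$ with a root $r$ reachable from the exposed vertices of $V^-$, extracts a spanning $r$-in-arborescence, observes that the corresponding edge set $E_T \subseteq E$ is a DM-irreducible forest, and invokes minimality to conclude $E_T = E$. Only then does it upgrade $|\Gamma_G(\{u\})| \geq 2$ (Corollary~\ref{cor:DM-irreducibility}) to equality, and it does so cheaply: a forest with all left-degrees at least $2$ is already DM-irreducible by Claim~\ref{cl:1}, so a left vertex of degree at least $3$ would admit a removable edge. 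You reverse the order of deduction: you first pin down the degrees exactly, via uncrossing of tight sets (the nonempty minimizers of $f_G$), the unique smallest tight set $T_u$ containing $u$, the observation that minimality forces, for each neighbor $w$ of $u$, a tight set in which $u$ is $w$'s only neighbor, and a deficiency count at $T_u \setminus \{u\}$ --- and only then derive the forest property from the exact degree condition by a direct cycle count. Your route trades the paper's arborescence construction for submodular uncrossing; it stays entirely at the level of the set function $f_G$ and never touches matchings or auxiliary graphs, at the cost of a noticeably heavier argument for the degree bound. The sufficiency direction is the same edge-versus-vertex count as Claim~\ref{cl:1}. One cosmetic omission: as the paper does explicitly, the trivial case $|V^+| = 0$ (where $|V^-|$ may equal $1$ and Lemma~\ref{lem:DM-irreducibility} is not applicable) should be dispatched separately before assuming $|V^-| \geq 2$ throughout.
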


\begin{proof}
When $|V^+| = 0$, since $G = (\emptyset, V^-; \emptyset)$ is DM-irreducible,
the statement is trivial. Suppose that $|V^+| \geq 1$ and hence $|V^-| \geq 2$.

\medskip
{[``If'' part]}~
The DM-irreducibility follows from Claim~\ref{cl:1},
and the minimality is guaranteed by Corollary~\ref{cor:DM-irreducibility}.

\begin{claim}\label{cl:1}
  If $G$ is a forest such that $|\Gamma_G(\{u\})| \geq 2$ for every $u \in V^+$,
  then $G$ is DM-irreducible.
\end{claim}

\begin{proof}
Suppose to the contrary that
$G$ is a forest such that $|\Gamma_G(\{u\})| \geq 2$ for every $u \in V^+$
but $G$ is not DM-irreducible.
Then, by Lemma~\ref{lem:DM-irreducibility},
we have $|\Gamma_G(X^+)| \leq |X^+|$ for some nonempty $X^+ \subseteq V^+$.
Let $X^- := \Gamma_G(X^+)$ and $X := X^+ \cup X^-$.
Then, $G[X]$ contains $\sum_{u \in X^+} |\Gamma_G(\{u\})|$ edges
and $|X| = |X^+| + |X^-|$ vertices.
Since $\sum_{u \in X^+} |\Gamma_G(\{u\})| \geq 2|X^+| \geq |X^+| + |X^-|$,
there exists an undirected cycle in $G[X]$,
which is included in the forest $G$, a contradiction.
\end{proof}

[``Only if'' part]~
We first see that $G$ must be a forest.

\begin{claim}\label{cl:3}
  If $G$ is minimally DM-irreducible, then $G$ is a forest.
\end{claim}

\begin{proof}
By the DM-irreducibility,
$G$ has a perfect matching $M \subseteq E$,
and every vertex can reach some vertex in $V^- \setminus \partial^-M$ in $G(M) = G + \overline{M}$.
Let $H$ be the directed graph obtained from $G(M)$
by adding a new vertex $r$ and an edge $wr$ for each $w \in V^- \setminus \partial^-M$.
Then, every vertex is reachable to $r$ in $H$,
and hence $H$ contains a spanning {\em $r$-in-arborescence}
(a directed tree in which all edges are oriented toward $r$), say $T$,
which is obtained, e.g., by the depth first search from $r$
(where we traverse each edge in the backward direction).
Let $E_T \subseteq E$ be the set of edges
which or whose reverse edges appear in $T$.
Then, $E_T$ forms a forest that is also DM-irreducible,
and hence $E_T = E$ by the minimality.
\end{proof}

Combined with Corollary~\ref{cor:DM-irreducibility},
$G$ is a forest with $|\Gamma_G(\{u\})| \geq 2$ for every $u \in V^+$.
The equality in every inequality is guaranteed by Claim~\ref{cl:1} and the minimality.
\end{proof}

While Lemma \ref{lem:minimally_DM-irreducible}
provides a complete characterization of the minimal DM-irreducibility in the unbalanced case,
it is rather difficult to do so in the balanced case in the same manner.
One can, however, characterize at least the minimal DM-irreducibility
with the minimum number of edges as follows,
which is useful to show the NP-hardness of the minimum-cost augmentation
(see Section~\ref{sec:cost}).

\begin{lemma}\label{lem:minimally_DM-irreducible_balanced}
  Let $G = (V^+, V^-; E)$ be a bipartite graph with $|V^+| = |V^-| = n \geq 2$ and $|E| = 2n$.
  Then, $G$ is minimally DM-irreducible
  if and only if $G$ is connected and $|\Gamma_G(\{v\})| = 2$ for every $v \in V$,
  i.e., $G$ is isomorphic to a Hamiltonian cycle by ignoring the edge direction.
\end{lemma}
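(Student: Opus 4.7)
The plan is to handle both directions of the equivalence through a combination of degree counting (invoking Corollary~\ref{cor:DM-irreducibility} symmetrically on both sides of the balanced bipartition) and a direct neighborhood calculation via Lemma~\ref{lem:DM-irreducibility}.

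For the ``only if'' direction, I would first argue that $G$ is connected: Conditions 1 and 3 of Theorem~\ref{thm:DM} preclude $V_0 = V$ and $V_\infty = V$ in the balanced regime with $n \geq 2$, so DM-irreducibility forces $V_1 = V$, and then Condition 2 makes $G$ connected. Next, Corollary~\ref{cor:DM-irreducibility} gives $|\Gamma_G(\{u\})| \geq 2$ for every $u \in V^+$, and by the symmetry of the DM-decomposition under interchange of $V^+$ and $V^-$ noted in Section~\ref{sec:DM-decomposition}, the same bound holds for every $w \in V^-$. Summing the neighborhood sizes gives $\sum_{v \in V} |\Gamma_G(\{v\})| = 2|E| = 4n$, while the combined lower bounds contribute at least $2 \cdot 2n = 4n$, so equality must hold at every vertex, i.e., $G$ is $2$-regular. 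A connected $2$-regular bipartite graph on $2n$ vertices is a single cycle of length $2n$, hence a Hamiltonian cycle.

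For the ``if'' direction, suppose $G$ is a Hamiltonian cycle, labelled cyclically as $u_1, w_1, u_2, w_2, \ldots, u_n, w_n$ so that each $u_i$ is adjacent precisely to $w_{i-1}$ and $w_i$ (indices modulo $n$). Minimality is immediate: deleting any edge leaves one of its endpoints with only one remaining neighbor, contradicting Corollary~\ref{cor:DM-irreducibility} applied to the side containing that endpoint. For DM-irreducibility, I would invoke Lemma~\ref{lem:DM-irreducibility} and verify $|\Gamma_G(X^+)| \geq |X^+| + 1$ for every nonempty $X^+ \subsetneq V^+$. The argument decomposes $X^+$ into its $s \geq 1$ maximal arcs around the cycle, of sizes $k_1, \ldots, k_s$ summing to $k := |X^+| < n$. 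Each arc of size $k_j$ contributes $k_j + 1$ consecutive $w$-vertices to $\Gamma_G(X^+)$, and two distinct maximal arcs are separated by at least one omitted $u$-vertex, which forces their $V^-$-neighborhoods to be disjoint. Hence $|\Gamma_G(X^+)| = k + s \geq k + 1$.

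The only delicate step is the disjointness claim above, particularly when two maximal $V^+$-arcs are separated by exactly one missing $u$-vertex: a short index comparison (the trailing $w$-index of one arc falls strictly below the leading $w$-index of the next) settles it. Beyond that, the argument reduces to routine degree counting together with the structural fact that a connected $2$-regular simple graph is a cycle, plus the symmetric application of Corollary~\ref{cor:DM-irreducibility}, which is legitimate because in the balanced case both sides satisfy the hypotheses of Lemma~\ref{lem:DM-irreducibility} with their roles swapped. I anticipate no serious obstacle beyond these elementary verifications.
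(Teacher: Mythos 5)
Your proof is correct, and the ``only if'' direction coincides with the paper's: connectivity from DM-irreducibility, the symmetric application of Corollary~\ref{cor:DM-irreducibility} to both sides, and the counting argument $\sum_{v}|\Gamma_G(\{v\})| = 2|E| = 4n$ forcing equality everywhere (the paper phrases this as a pigeonhole argument but it is the same computation). The ``if'' direction, however, takes a genuinely different route. The paper observes that a Hamiltonian cycle's edge set splits into two disjoint perfect matchings, so $G$ is matching covered, and then invokes the fact that for a connected balanced bipartite graph, matching covered is equivalent to DM-irreducible (Theorem~\ref{thm:DM}, Condition~4, together with connectivity forcing a single component $V_1 = V$). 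You instead verify the Hall-surplus condition of Lemma~\ref{lem:DM-irreducibility} directly, decomposing $X^+$ into maximal arcs along the cycle and checking $|\Gamma_G(X^+)| = |X^+| + s \geq |X^+| + 1$; your disjointness claim for the arc neighborhoods is sound, since an arc ending at $u_b$ has neighborhood ending at $w_b$ while the next arc, starting no earlier than $u_{b+2}$, has neighborhood starting no earlier than $w_{b+1}$. The paper's argument is shorter and leans on the structure theory already developed; yours is more self-contained and elementary, requiring only Lemma~\ref{lem:DM-irreducibility} rather than the matching-covered characterization, at the cost of a slightly more delicate combinatorial verification.
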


\begin{proof}
{[``If'' part]}~
Since $E$ can be partitioned into two disjoint perfect matchings,
$G$ is matching covered, which is equivalent to the DM-irreducibility under the connectivity.
The minimality immediately follows from Corollary~\ref{cor:DM-irreducibility}.

\medskip
[``Only if'' part]~
By the DM-irreducibility,
$G$ has a perfect matching $M \subseteq E$,
for which $G(M) = G + \overline{M}$ is strongly connected.
Hence, $G$ must be connected.
In addition, by Corollary~\ref{cor:DM-irreducibility},
we have $|\Gamma_G(\{v\})| \geq 2$ for every $v \in V$.
By the pigeonhole principle with $|E| = 2n = |V|$,
we conclude that $|\Gamma_G(\{v\})| = 2$ for every $v \in V$.
\end{proof}

\subsection{Reduction to matroid intersection}\label{sec:reduction_WMI}
We are now ready to reduce the unbalanced case to the matroid intersection problem.

First, Problem (DMI) is generally reformulated
as finding a minimum-weight minimally DM-irreducible spanning subgraph as follows.
For a given bipartite graph $G = (V^+, V^-; E)$,
define $\tE := V^+ \times V^-$, $\tG := (V^+, V^-; \tE)$,
and a weight function $\gamma \colon \tE \to \RR_{\geq 0}$ by
\begin{equation}
  \gamma(e) := \begin{cases}
    0 & (e \in E),\\
    1 & (e \in \tE \setminus E).
  \end{cases}\label{eq:w}
\end{equation}
For $\tF \subseteq \tE$,
we define its weight as $\gamma(\tF) := \sum_{e \in \tF} \gamma(e)$.
Then, making $G$ DM-irreducible by adding a smallest set $F \subseteq \tE \setminus E$
is equivalent to finding a minimum-weight edge set $\tF \subseteq \tE$
such that the spanning subgraph $(V^+, V^-; \tF)$ is minimally DM-irreducible
(recall that $G + F$ is DM-irreducible if and only if
$G + F$ includes a minimally DM-irreducible spanning subgraph).

Suppose that $\ell = |V^+| < |V^-| = n$.
Then, by Lemma~\ref{lem:minimally_DM-irreducible},
the set of minimally DM-irreducible spanning subgraphs of $\tG$
can be identified with the family of common independent sets of size $2|V^+| = 2\ell$
in the following two matroids on $\tE$:
\begin{itemize}\vspace{-.5mm}
  \setlength{\itemsep}{.5mm}
\item
  the cycle matroid $\tbM_1 = (\tE, \tcI_1)$ of $\tG$, i.e.,
  $\tF \in \tcI_1$ if and only if $\tF \subseteq \tE$ forms a forest;
\item
  a partition matroid $\tbM_2 = (\tE, \tcI_2)$ such that $\tF \in \tcI_2$
  if and only if at most two edges in $\tF \subseteq \tE$ leave each $u \in V^+$.
\end{itemize}\vspace{-.5mm}
Thus the unbalanced case reduces to finding a minimum-weight common independent set
of size $2\ell$ in the two matroids on $\tE$.

We show that this can be achieved by finding a maximum-cardinality common independent set
in the restrictions $\bM_i = (E, \cI_i)$ of $\tbM_i$ to $E \subseteq \tE$ for $i = 1, 2$,
which completes a reduction to matroid intersection.
The following claim gives a key observation.

\begin{claim}\label{cl:augment}
  For any $F \in \tcI_1 \cap \tcI_2$,
  there exists $\tF \in \tcI_1 \cap \tcI_2$ with $|\tF| = 2\ell$ and $F \subseteq \tF$.
  Moreover, such $\tF$ can be found in ${\rm O}(n)$ time.
\end{claim}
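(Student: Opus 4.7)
My plan is to establish the existence of $\tF$ via a greedy augmentation argument and then to argue that each augmentation step can be carried out in amortized constant time. Because every edge of $\tE$ has exactly one endpoint in $V^+$ and the partition matroid $\tbM_2$ forces $\deg_F(u) \leq 2$ for all $u \in V^+$, we have $|F| = \sum_{u \in V^+} \deg_F(u) \leq 2\ell$, so the ``deficit'' $\delta(F) := 2\ell - |F|$ is nonnegative. I would induct on $\delta(F)$: the base case $\delta(F) = 0$ gives $\tF := F$, and for the inductive step it suffices to exhibit a single edge $uw \in \tE \setminus F$ with $F \cup \{uw\} \in \tcI_1 \cap \tcI_2$, since this strictly decreases the deficit.

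The heart of the argument is the following lemma: whenever $\delta(F) > 0$, any $u \in V^+$ with $\deg_F(u) \leq 1$ admits some $w \in V^-$ lying in a different connected component of $F$, and adding $uw$ then evidently respects both matroid constraints. I would prove this by contradiction: suppose $V^-$ lies entirely within the component $T_u$ of $F$ containing $u$, and let $A := T_u \cap V^+$. Then $T_u$ is a tree on $|A| + n$ vertices with exactly $|A| + n - 1$ edges, and since each edge is incident to a unique vertex of $A$, the degree sum yields $|A| + n - 1 = \sum_{v \in A} \deg_F(v) \leq 2|A|$, whence $|A| \geq n - 1$. Combined with $|A| \leq \ell \leq n - 1$, equality must hold throughout, forcing $\ell = n - 1$ and $T_u = V^+ \cup V^-$; but then $|F| \geq |A| + n - 1 = 2\ell$, contradicting $\delta(F) > 0$. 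This existence lemma is the main obstacle, and it crucially exploits the strict inequality $\ell < n$ at its final step.

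For the $\mathrm{O}(n)$ running time, I would compute the connected components of $F$ by a single depth-first search in $\mathrm{O}(|V^+| + |V^-| + |F|) = \mathrm{O}(n)$ time, using $|F| \leq 2\ell \leq 2n$. Then I would iterate over the deficit vertices, maintaining a linked list of components that still contain at least one $V^-$ vertex together with pointers into it; each lookup of an admissible $w$ (guaranteed by the existence lemma) and each subsequent merge of two components can be executed in amortized constant time. Since at most $2\ell - |F| = \mathrm{O}(n)$ edges are added overall, the entire procedure runs in $\mathrm{O}(n)$ time.
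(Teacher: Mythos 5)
Your proof is correct and follows essentially the same route as the paper's: when $|F| < 2\ell$ there is a vertex $u \in V^+$ of degree at most one, and a vertex/edge count in the tree component containing $u$ (using $\deg \le 2$ on $V^+$ and $\ell < n$) shows that some $w \in V^-$ lies in a different component, so $uw$ can be added; the $O(n)$ bound likewise comes from precomputing connected components. The only cosmetic difference is that you close the counting argument by contradiction via $|F| \ge 2\ell$, whereas the paper bounds $|V_u^-| \le |V_u^+| < |V^-|$ directly.
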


\begin{proof}
Since $F$ is a common independent set in $\tbM_1$ and $\tbM_2$,
the spanning subgraph $H := (V^+, V^-; F)$ of $\tG$ is a forest such that
$|\Gamma_{H}(\{u\})| \leq 2$ for every $u \in V^+$, and hence $|F| \leq 2\ell$.
It suffices to show that, when $|F| < 2\ell$,
there exists an edge $e \in \tE \setminus F$
such that $F \cup \{e\} \in \tcI_1 \cap \tcI_2$.

Suppose that $|F| < 2\ell$.
Then there exists a vertex $u \in V^+$ such that $|\Gamma_{H}(\{u\})| \leq 1$.
Let $H_u = (V_u^+, V_u^-; F_u)$ be the connected component of $H$ that contains $u$.
Since $H_u$ is a tree such that $|\Gamma_{H}(\{u\})| \leq 2$ for every $u' \in V_u^+ \setminus \{u\}$,
we have
\[|V_u^+| + |V_u^-| - 1 = |F_u| = \sum_{u' \in V_u^+} |\Gamma_{H}(u')| \leq 2|V_u^+| - 1,\]
which implies $|V_u^-| \leq |V_u^+| \leq |V^+| < |V^-|$.
Hence, there exists a vertex $w \in V^- \setminus V_u^-$,
for which the edge $e = uw \in \tE \setminus F$ can be added to $H$
so that the resulting spanning graph remains a forest with the degree constraint,
i.e, $F \cup \{e\} \in \tcI_1 \cap \tcI_2$.

One can add such edges $e \in \tE \setminus F$ simultaneously
by computing all the connected components of $H$ in advance,
which requires ${\rm O}(n)$ time in total.
\end{proof}

Let $\gamma^\ast := \min\{\, \gamma(\tF) \mid \tF \in \tcI_1 \cap \tcI_2 \text{ and } |\tF| = 2\ell \,\}$
and $q := 2\ell - \gamma^\ast$.

\begin{claim}\label{cl:q}
  The maximum cardinality of a common independent set in $\bM_1$ and $\bM_2$ is $q$.
\end{claim}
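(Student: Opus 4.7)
The plan is to prove the equality by two matching inequalities, using the weight function $\gamma$ and Claim~\ref{cl:augment} as the two pivots.

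First I would establish the lower bound $q \leq \max\{|F| : F \in \cI_1 \cap \cI_2\}$ by extracting a common independent set from a cheapest common basis of the enlarged matroids. Take $\tF \in \tcI_1 \cap \tcI_2$ with $|\tF| = 2\ell$ realizing the minimum $\gamma(\tF) = \gamma^\ast$, and set $F := \tF \cap E$. Because both $\tcI_1$ and $\tcI_2$ are downward closed, $F$ inherits independence in $\tbM_1$ and $\tbM_2$; since $F \subseteq E$, this means $F \in \cI_1 \cap \cI_2$. By the definition of $\gamma$ in \eqref{eq:w}, $\gamma(\tF)$ counts exactly the edges of $\tF$ outside $E$, so $|F| = |\tF| - \gamma(\tF) = 2\ell - \gamma^\ast = q$.

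For the reverse inequality, I would start from a maximum common independent set $F \in \cI_1 \cap \cI_2$ and invoke Claim~\ref{cl:augment} to extend it to some $\tF \in \tcI_1 \cap \tcI_2$ with $F \subseteq \tF$ and $|\tF| = 2\ell$. Since every edge of $F$ lies in $E$ and therefore has zero $\gamma$-weight, we obtain
\[
  \gamma^\ast \leq \gamma(\tF) = \gamma(\tF \setminus F) \leq |\tF \setminus F| = 2\ell - |F|,
\]
which rearranges to $|F| \leq 2\ell - \gamma^\ast = q$, as required.

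There is no real obstacle here: both directions are essentially one-line arguments once Claim~\ref{cl:augment} is in hand, and the only thing that needs to be checked carefully is that the augmentation claim genuinely applies starting from an element of $\cI_1 \cap \cI_2$ viewed inside the larger matroids $\tbM_1, \tbM_2$, which is immediate from $\cI_i \subseteq \tcI_i$. The crucial structural content has already been packed into Lemma~\ref{lem:minimally_DM-irreducible} and Claim~\ref{cl:augment}; this claim is just the bookkeeping statement that turns the minimum-weight common basis problem on $\tE$ into a plain maximum-cardinality matroid intersection on $E$.
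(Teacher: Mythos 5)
Your proof is correct and follows essentially the same route as the paper: the lower bound by restricting a minimum-weight common independent set of size $2\ell$ to $E$, and the upper bound by extending a maximum common independent set of $\bM_1,\bM_2$ via Claim~\ref{cl:augment} and comparing $\gamma(\tF)$ with $\gamma^\ast$. The only cosmetic difference is that you spell out the inequality $\gamma(\tF) = \gamma(\tF \setminus F) \leq |\tF \setminus F|$ explicitly, which the paper leaves implicit.
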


\begin{proof}
By the definition \eqref{eq:w} of the weight function $\gamma$,
for any $\tF \in \tcI_1 \cap \tcI_2$ with $|\tF| = 2\ell$ and $\gamma(\tF) = \gamma^\ast$,
the restriction $F := \tF \cap E \in \cI_1 \cap \cI_2$
satisfies $|F| = |\tF| - \gamma(\tF) = 2\ell - \gamma^\ast = q$.
To the contrary, by Claim~\ref{cl:augment},
for any $F \in \cI_1 \cap \cI_2 \subseteq \tcI_1 \cap \tcI_2$,
there exists $\tF \in \tcI_1 \cap \tcI_2$ with $|\tF| = 2\ell$ and $F \subseteq \tF$,
which implies $|F| \leq 2\ell - \gamma(\tF) \leq 2\ell - \gamma^\ast = q$.
\end{proof}

Finally, we confirm that a minimum-weight common independent set $\tF \in \tcI_1 \cap \tcI_2$
is obtained from a maximum-cardinality common independent set $F \in \cI_1 \cap \cI_2$, i.e., $|F| = q$.
By Claim~\ref{cl:augment},
one can find $\tF \in \tcI_1 \cap \tcI_2$ with $|\tF| = 2\ell$ and $F \subseteq \tF$,
which implies $\gamma(\tF) \leq 2\ell - |F| = 2\ell - q = \gamma^\ast$.
By the minimality of $\gamma^\ast$, indeed $\gamma(\tF) = \gamma^\ast$.

\medskip
In the resulting matroid intersection instance,
the ground set is of size $|E| = m$ and the optimal value (i.e., the maximum size of a common independent set) is at most $2|V^+| = {\rm O}(\ell)$.
With the aid of a fast ``graphic'' matroid intersection algorithm
due to Gabow and Xu \cite{GX1989, GX1996},
one can solve it in ${\rm O}(m\sqrt{\ell} \log \ell)$ time in general and
in ${\rm O}(m\sqrt{\ell})$ time when $m = \Omega(\ell^{1 + \epsilon})$ for some $\epsilon > 0$.

While $\bM_1$ is the cycle matroid of $G$ and hence is indeed graphic,
the other $\bM_2$, a partition matroid such that each upper bound is $2$,
is not graphic in general.
To use the graphic matroid intersection algorithm,
we duplicate the ground set $E$ by creating a copy $e' = uw$ of each element $e = uw \in E$,
and let $E'$ be the set of those copies.
Let $\bM_1' = (E \cup E', \cI_1')$ be the cycle matroid of the duplicated graph with the edge set $E \cup E'$,
in which each $e \in E$ and its copy $e' \in E'$ are parallel (i.e., $\{e, e'\} \not\in \cI_1'$).
Let $\bM_2' = (E \cup E', \cI_2')$ be the partition matroid
such that, for two subsets $F \subseteq E$ and $F' \subseteq E'$, we have $F \cup F' \in \cI_2'$
if and only if $F$ and $F'$ respectively have at most one edge leaving each $u \in V^+$.
Since each upper bound is $1$,
this $\bM_2'$ has a graphic representation as disconnected parallel edges according to the partition of $E \cup E'$.
The intersection of these two graphic matroids $\bM_1'$ and $\bM_2'$ is essentially the same as
the intersection of $\bM_1$ and $\bM_2$ by identifying each original element $e \in E$ and its copy $e' \in E'$,
where recall that $\{e, e'\} \not\in \cI_1'$.

\subsection{Minimum-cost augmentation}\label{sec:cost}
Our reduction technique can be utilized
even when, for each potential edge $e \in \tE \setminus E$,
the addition of $e$ gives rise to a cost of $c(e) \in \RR_{> 0}$
(note that, when $c(e) \leq 0$ for some $e$, we can add such $e$ to $G$ in advance).
We just need to modify the definition \eqref{eq:w}
of the weight function $\gamma \colon \tE \to \RR_{\geq 0}$ so that
$\gamma(e) = c(e)$ for each $e \in \tE \setminus E$.
Note that the original minimum-cardinality augmentation problem
is regarded as the case when $c(e) = 1$ for all $e \in \tE \setminus E$.
For this modified weight function $\gamma$,
we can no longer obtain a minimum-weight common independent set of size $2\ell$
by finding a maximum-cardinality common independent set in the restricted matroids,
but one can do in polynomial time by using weighted matroid intersection algorithms.

While we can reduce the ground set $\tE = V^+ \times V^-$ of two matroids to the original edge set $E$
in the minimum-cardinality augmentation case, we here need to use $\tE$ itself,
whose size $\tm := \ell n$ no longer depends on the number $m$ of original edges.
In general (when the cost values are arbitrary),
a weighted matroid intersection algorithm \cite{BCG1988}
for a partition matroid and a graphic matroid
leads to a bound on the computational time by
${\rm O}(\tm n + n^2 \ell + n\ell^2) = {\rm O}(n^2\ell)$.
Furthermore,
when the cost values are integers that is bounded by a constant,
weighted matroid intersection can be solved
by solving unweighted instances repeatedly 
in the same asymptotic running time bound \cite{HKK2016}.
Hence, by using the Gabow--Xu algorithm \cite{GX1989, GX1996}
for unweighted graphic matroid intersection,
one can obtain a better bound ${\rm O}(\tm\sqrt{\ell}) = {\rm O}(n\ell^{1.5})$,
where note that $\tm = \ell n = \Omega(\ell^2)$.

In contrast, the minimum-cost augmentation is NP-hard in the balanced case
(note that it was already shown in \cite{ET1976} for the strong connectivity augmentation,
which is equivalent to making a perfectly-matchable balanced bipartite graph DM-irreducible
as seen in Section~\ref{sec:Equivalence_SC}).
Consider testing whether a given bipartite graph $G_1 = (V^+, V^-; E_1)$ with $|V^+| = |V^-| = n \geq 2$
contains an undirected Hamiltonian cycle or not, which is NP-hard \cite{Krishnamoorthy1975}.
Define $G := (V^+, V^-; \emptyset)$, $\tE := V^+ \times V^-$, $E_2 := \tE \setminus E_1$,
and $c \colon \tE \to \RR_{> 0}$ by $c(e) := i$ for each $e \in E_i$ $(i \in \{1, 2\})$.
Then, by Lemma~\ref{lem:minimally_DM-irreducible_balanced},
$G_1$ contains an undirected Hamiltonian cycle
if and only if one can make $G$ DM-irreducible by adding edges with the total cost at most $2n$.

\subsection{Connection to supermodular covering by bipartite graphs}\label{sec:SCBG}
We can derive a matroid intersection formulation also from a general framework
of covering supermodular functions by bipartite graphs (cf. \cite[Section 13.4]{Frank2011}).

For a finite set $S$,
a set function $g \colon 2^S \to \ZZ_{\geq 0}$ is said to be {\em intersecting supermodular} if
\[g(X \cup Y) + g(X \cap Y) \geq g(X) + g(Y)\]
holds for every pair of subsets $X, Y \subseteq S$ with $X \cap Y \neq \emptyset$.
In addition, $g$ is {\em element-subadditive} if
\[g(X) + g(\{e\}) \geq g(X \cup \{e\})\]
holds for every pair of a subset $X \subseteq S$ and an element $e \in S \setminus X$.

Let $G = (V^+, V^-; E)$ be a bipartite graph.
We say that an edge set $F \subseteq E$ {\em covers}
a set function $g \colon 2^{V^+} \to \ZZ_{\geq 0}$ if
$|\Gamma_{F}(X^+)| \geq g(X^+)$ for every $X^+ \subseteq V^+$,
where we define $\Gamma_F(X^+) := \{\, w \mid \exists e = uw \in F \colon u \in X^+ \,\}$.
The following theorem gives a matroid intersection formulation
of covering an element-subadditive intersecting supermodular function by a bipartite graph.

\begin{theorem}[Frank {\cite[Theorem 13.4.11]{Frank2011}}]\label{thm:Frank}
  Let $G = (V^+, V^-; E)$ be a bipartite graph,
  and $g \colon 2^{V^+} \to \ZZ_{\geq 0}$ an element-subadditive intersecting supermodular function.
  If $E$ covers $g$,
  then all the minimal edge sets that cover $g$ form
  a family of all common independent sets of size $\sum_{u \in V^+} g(\{u\})$ in two matroids on $E$.
\end{theorem}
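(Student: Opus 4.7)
The plan is to construct two matroids $\bM_1, \bM_2$ on $E$ whose common independent sets of size $q := \sum_{u \in V^+} g(\{u\})$ are precisely the minimal covers of $g$.

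The lower bound is immediate: applying the covering inequality at $X^+ = \{u\}$, and using that $E \subseteq V^+ \times V^-$ contains no parallel edges, gives $|\Gamma_F(\{u\})| \geq g(\{u\})$ for every cover $F$, whence $|F| \geq q$. This suggests defining $\bM_1 = (E, \cI_1)$ as the partition matroid in which $F \in \cI_1$ iff at most $g(\{u\})$ edges of $F$ leave each $u \in V^+$; its bases then have cardinality exactly $q$.

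For the second matroid $\bM_2 = (E, \cI_2)$ a natural choice is the polymatroid truncation of the deficiency, with rank function
\[
r_2(F) := q - \max_{\cX} \sum_{X \in \cX} \bigl(g(X) - |\Gamma_F(X)|\bigr),
\]
where $\cX$ ranges over subpartitions of $V^+$ whose members all satisfy $g(X) > |\Gamma_F(X)|$. Since $E$ itself covers $g$, the maximum at $F = E$ is zero, giving $r_2(E) = q$, and monotonicity is routine. The key step is to verify submodularity of $r_2$ via an uncrossing argument on maximizing subpartitions: when two members $X, Y$ of an optimal subpartition meet, we replace them by $X \cup Y$ and $X \cap Y$, and intersecting supermodularity of $g$ bounds the change in $\sum g(X)$, while element-subadditivity is invoked whenever the exchange produces overlaps with singleton members $\{u\}$ that arise during the argument. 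I expect this uncrossing to be the main technical obstacle, and it is exactly where both defining hypotheses on $g$ are essential, since neither property alone suffices to control the interaction between a large member and a singleton.

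With the two matroids in hand, the equivalence ``minimal cover iff common independent set of size $q$'' follows in two directions. Any $F \in \cI_1 \cap \cI_2$ with $|F| = q$ satisfies $r_2(F) = q$, which by the defining formula forces $|\Gamma_F(X)| \geq g(X)$ for every $X \subseteq V^+$, so $F$ covers $g$; minimality is then automatic because $|F| = q$ is the global lower bound. Conversely, every minimal cover $F$ lies in $\cI_1$: if $|\Gamma_F(\{u\})| > g(\{u\})$ for some $u$, then for each edge $uw \in F$ whose removal would destroy the cover there must exist a tight witness $X^+ \ni u$ with $|\Gamma_F(X^+)| = g(X^+)$ and $w$ reached from $u$ alone; uncrossing the witnesses of two distinct such edges $uw, uw'$ at the shared vertex $u$ (using intersecting supermodularity, with element-subadditivity absorbing $\{u\}$) yields a contradiction. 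Hence tightness holds at every $u$, so $|F| = q$ and $F \in \cI_1$, and the deficiency formula then puts $F$ into $\cI_2$ as well, completing the identification.
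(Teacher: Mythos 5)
First, a point of reference: the paper itself gives no proof of this statement. Theorem~\ref{thm:Frank} is imported verbatim from Frank's book \cite[Theorem 13.4.11]{Frank2011} and used as a black box in Section~\ref{sec:SCBG}, so there is no in-paper argument to compare yours against; your proposal has to stand on its own. As it stands it is a plausible plan rather than a proof, and the two places where it stops are exactly the places that carry the entire content of the theorem. The first is the submodularity of $r_2$. Everything else in your outline --- the lower bound $|F| \geq q$, the values $r_2(\emptyset) = 0$ and $r_2(E) = q$, monotonicity, unit increments, and both directions of the equivalence once the two matroids exist --- is routine; the assertion that the deficiency-truncated function $r_2$ is a matroid rank function \emph{is} the theorem, and you explicitly defer it (``I expect this uncrossing to be the main technical obstacle''). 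Note that $g(X) - |\Gamma_F(X)|$ is supermodular in $F$ for each fixed $X$, and a pointwise maximum of supermodular functions is not supermodular in general, so this step genuinely requires the full uncrossing argument. Your construction does at least specialize correctly: for $g(X) = |X| + 1$ one checks that $r_2$ reproduces the rank function of the cycle matroid used in Section~\ref{sec:reduction_WMI}, which is a good sanity check, but without the submodularity proof nothing has been established.

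The second gap is in the claim that every minimal cover $F$ uses exactly $g(\{u\})$ edges at each $u \in V^+$. Your sketch uncrosses the tight witnesses of only \emph{two} edges $uw, uw'$ incident to $u$; that yields a tight set $Z \ni u$ from which $w$ and $w'$ are reached only through $u$, and comparing $Z$ with $Z \setminus \{u\}$ via element-subadditivity gives $g(Z \setminus \{u\}) \geq g(Z) - g(\{u\})$ against $|\Gamma_F(Z \setminus \{u\})| \leq |\Gamma_F(Z)| - 2 = g(Z) - 2$, which is a contradiction only when $g(\{u\}) < 2$. To make the argument close in general you must intersect the tight witnesses of \emph{all} $\deg_F(u)$ edges incident to $u$: each witness contains $u$, so repeated uncrossing (intersecting supermodularity of $g$ against submodularity of $|\Gamma_F(\cdot)|$ and the covering inequalities) keeps every set tight, and produces a single tight set $Z \ni u$ in which every $F$-neighbour of $u$ is reached only via $u$. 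Then $|\Gamma_F(Z \setminus \{u\})| = g(Z) - \deg_F(u) < g(Z) - g(\{u\}) \leq g(Z \setminus \{u\})$ contradicts the covering inequality at $Z \setminus \{u\}$, and the degenerate case $Z = \{u\}$ contradicts tightness directly. So the skeleton of your construction looks right, but both load-bearing steps need to be carried out before this counts as a proof.
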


In order to apply Theorem \ref{thm:Frank} to our setting,
we define a set function $g \colon 2^{V^+} \to \ZZ_{\geq 0}$ by
\[g(X^+) := \begin{cases}
  0 & (X^+ = \emptyset),\\
  |X^+| + 1 & ({\rm otherwise}).
\end{cases}\]
As an easy observation,
this $g$ is indeed intersecting supermodular (the equality always holds) and element-subadditive
(since $g(\{u\}) = 2$ for every $u \in V^+$).
In addition, when $\ell = |V^+| < |V^-|$,
Lemma~\ref{lem:DM-irreducibility} implies that
an edge set $\tF \subseteq \tE = V^+ \times V^-$ covers $g$
if and only if the spanning subgraph $(V^+, V^-; \tF)$ of $\tG = (V^+, V^-; \tE)$ is DM-irreducible.
Hence, by Theorem~\ref{thm:Frank} (note that $\tE$ covers $g$),
all the minimally DM-irreducible spanning subgraphs of $\tG$
form a family of all common independent sets of size $2\ell$ in two matroids on $\tE$
(which indeed coincide with $\tbM_1$ and $\tbM_2$ defined in Section~\ref{sec:reduction_WMI}).

\subsection{Proof of the min-max duality (Theorem~\ref{thm:duality_unbalanced})}\label{sec:duality_MI}
In this section, we prove the min-max duality (Theorem~\ref{thm:duality_unbalanced})
through Edmonds' matroid intersection theorem \cite{Edmonds1970}.
We here adopt the definition of matroids by the rank functions.

\begin{theorem}[Edmonds {\cite[Theorem~(69)]{Edmonds1970}}]\label{thm:Edmonds}
  Let $\bM_1 = (E, \rho_1)$ and $\bM_2 = (E, \rho_2)$ be two matroids on the same ground set $E$.
  Then, the maximum cardinality of a common independent set in $\bM_1$ and $\bM_2$ is equal to
  the minimum value of
  \begin{equation}\label{eq:Edmonds}
    \rho_1(Z) + \rho_2(E \setminus Z),\nonumber
  \end{equation}
  taken over all subsets $Z \subseteq E$.
\end{theorem}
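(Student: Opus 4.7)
The plan is to prove both inequalities of Edmonds' theorem separately, with the nontrivial direction established via the standard exchange-graph / augmenting-path method, which simultaneously produces an algorithm constructing a maximum common independent set together with a minimizing subset $Z$.

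For the easy inequality $\max |I| \leq \min_Z \bigl(\rho_1(Z) + \rho_2(E \setminus Z)\bigr)$, I would fix an arbitrary common independent set $I \in \cI_1 \cap \cI_2$ and an arbitrary subset $Z \subseteq E$. Since $I \cap Z \subseteq I$ is independent in $\bM_1$, we have $|I \cap Z| \leq \rho_1(Z)$; symmetrically $|I \setminus Z| \leq \rho_2(E \setminus Z)$. Adding these and using the partition $I = (I \cap Z) \cup (I \setminus Z)$ yields $|I| \leq \rho_1(Z) + \rho_2(E \setminus Z)$, so the max side is bounded above by the min side.

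For the reverse direction, given a common independent set $I$, I would define the \emph{exchange digraph} $D_I$ on vertex set $E$ with arc set
\begin{align*}
A_I &:= \bigl\{(y,x) : y \in I,\ x \in E \setminus I,\ (I \setminus \{y\}) \cup \{x\} \in \cI_1\bigr\} \\
&\quad \cup \bigl\{(x,y) : y \in I,\ x \in E \setminus I,\ (I \setminus \{y\}) \cup \{x\} \in \cI_2\bigr\},
\end{align*}
together with the source and sink sets $X_1 := \{x \in E \setminus I : I \cup \{x\} \in \cI_1\}$ and $X_2 := \{x \in E \setminus I : I \cup \{x\} \in \cI_2\}$. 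The heart of the argument consists of two lemmas. Lemma A (augmentation): if $D_I$ contains a directed $X_1$-to-$X_2$ path, then choosing a shortest such path $P$ and setting $I' := I \triangle V(P)$ gives a common independent set with $|I'| = |I| + 1$. Lemma B (certificate): if no such path exists, then letting $U \subseteq E$ be the set of vertices from which $X_2$ is reachable in $D_I$ and setting $Z := E \setminus U$ yields $\rho_1(Z) = |I \cap Z|$ and $\rho_2(E \setminus Z) = |I \cap U|$, so $\rho_1(Z) + \rho_2(E \setminus Z) = |I|$, matching the upper bound and certifying optimality.

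The main obstacle is Lemma A: verifying that the symmetric-difference augmentation really yields a common independent set requires a delicate analysis of how the fundamental circuits in $\bM_1$ and $\bM_2$ along $P$ interact, invoking the shortest-path property of $P$ to rule out chords that would spoil independence. The standard tool is the unique-circuit axiom combined with a no-shortcut argument showing separately that $I' \in \cI_1$ and $I' \in \cI_2$. Lemma B is then relatively routine: for any $x \in Z \setminus I$, the fundamental $\bM_1$-circuit of $x$ with respect to $I$ must lie entirely in $I \cap Z$, because otherwise some $y \in I \cap U$ in this circuit would give an arc $(y,x) \in A_I$, forcing $x \in U$ and contradicting $x \in Z$; hence $I \cap Z$ spans $Z$ in $\bM_1$. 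The symmetric statement for $\bM_2$ is obtained using the arcs of the second type. Iterating Lemma A until no $X_1$-to-$X_2$ path exists and then applying Lemma B completes the proof.
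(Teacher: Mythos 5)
First, a contextual remark: the paper does not prove this statement at all --- it is quoted as a classical result of Edmonds \cite{Edmonds1970} and used as a black box in Section~\ref{sec:duality_MI} --- so your proposal can only be judged on its own terms. The overall strategy (the weak inequality by splitting $I$ along $Z$, plus the exchange-digraph augmenting-path argument) is indeed the standard proof, and your easy direction is correct.

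However, your Lemma~B is false as stated, and the error is not cosmetic. With $U$ defined as the set of vertices \emph{from which} $X_2$ is reachable, $U$ is closed under taking predecessors (no arc enters $U$ from outside), not successors; an arc $(y,x)$ with $y \in I \cap U$ therefore does \emph{not} force $x \in U$, so the reachability step in your justification runs in the wrong direction. Moreover, since no $X_1$--$X_2$ path exists we have $X_1 \subseteq E \setminus U = Z$, and any $x \in X_1 \setminus I$ has no fundamental $\bM_1$-circuit and satisfies $(I \cap Z) \cup \{x\} \in \cI_1$, which already contradicts the claimed identity $\rho_1(Z) = |I \cap Z|$. A concrete counterexample: $E = \{a,b\}$, $\cI_1 = \{\emptyset, \{a\}\}$, $\cI_2 = \{\emptyset, \{b\}\}$, $I = \emptyset$; then $X_1 = \{a\}$, $X_2 = \{b\}$, there are no arcs, $U = \{b\}$, $Z = \{a\}$, and $\rho_1(Z) = 1 \neq 0 = |I \cap Z|$. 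The repair is either to take $Z := U$ (which yields $\rho_1(U) = |I\cap U|$ and $\rho_2(E\setminus U) = |I \setminus U|$ by exactly the circuit argument you sketch, applied with the correct orientation), or to redefine $U$ as the set of vertices reachable \emph{from} $X_1$ and keep $Z := E \setminus U$. Finally, Lemma~A --- the only genuinely nontrivial step of this proof --- is not actually proved: you correctly identify the obstacle (chords and the no-shortcut argument) but defer it, so even after fixing Lemma~B the proposal remains an outline rather than a complete proof.
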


For a bipartite graph $G = (V^+, V^-; E)$ with $\ell = |V^+| < |V^-| = n$,
let $\bM_1 = (E, \rho_1)$ and $\bM_2 = (E, \rho_2)$ be the two matroids
defined in Section~\ref{sec:reduction_WMI},
i.e., $\bM_1$ is the cycle matroid of $G$ and $\bM_2$ is a partition matroid.
We denote by $q$ the maximum cardinality of a common independent set
in $\bM_1$ and $\bM_2$ (cf. Claim~\ref{cl:q} in Section~\ref{sec:reduction_WMI}).

We now start the proof of Theorem~\ref{thm:duality_unbalanced},
i.e., ${\rm opt}(G) = \max_{\cX^+} \tau_G(\cX^+)$,
where the maximum is taken over all subpartitions $\cX^+$ of $V^+$.
Since we have already seen ${\rm opt}(G) = \gamma^\ast = 2\ell - q$ in Section~\ref{sec:reduction_WMI}
and $q = \min_{Z \subseteq E} \left(\rho_1(Z) + \rho_2(E \setminus Z)\right)$
by Theorem~\ref{thm:Edmonds},
it suffices to confirm
\[\min_{Z \subseteq E} \left(\rho_1(Z) + \rho_2(E \setminus Z)\right) = 2\ell - \max_{\cX^+} \tau_G(\cX^+),\]
which is completed by Claims~\ref{cl:Edmonds1} and \ref{cl:Edmonds2}.

\begin{claim}\label{cl:Edmonds1}
  For any subpartition $\cX^+$ of $V^+$,
  there exists a subset $Z \subseteq E$ with
  \begin{equation}\nonumber
    \rho_1(Z) + \rho_2(E \setminus Z) \leq 2\ell - \tau_G(\cX^+).
  \end{equation}
\end{claim}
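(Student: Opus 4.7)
The plan is to exhibit the desired $Z$ explicitly. Given $\cX^+ = \{X_1, \ldots, X_k\}$, set $Y_i := \Gamma_G(X_i)$ and take $Z$ to be the set of all edges of $G$ whose tail lies in $\bigcup_i X_i$; equivalently, $Z$ is the disjoint union $\bigsqcup_i E[X_i, Y_i]$. Every edge of $E \setminus Z$ then has its tail in $V^+ \setminus \bigcup_i X_i$, which contains $\ell - \sum_i |X_i|$ vertices, so the partition-matroid bound $\rho_2(E \setminus Z) \le 2\bigl(\ell - \sum_i |X_i|\bigr)$ follows directly from the definition of $\bM_2$. The whole argument thus reduces to proving $\rho_1(Z) \le \sum_i |X_i| + \sum_i |Y_i| - k$, since summing the two bounds yields exactly $2\ell - \tau_G(\cX^+)$.

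To bound the graphic-matroid rank, I would write $\rho_1(Z) = |V(Z)| - c(H)$ where $H := (V(Z), Z)$. Here $V(Z) \cap V^- = \bigcup_i Y_i$ and $V(Z) \cap V^+ \subseteq \bigcup_i X_i$, so $|V(Z)| \le \sum_i |X_i| + |\bigcup_i Y_i|$. Restricting attention to $I^+ := \{i : Y_i \ne \emptyset\}$ and contracting each $X_i \cap V(Z)$ to a single vertex, one sees that two indices $i, j \in I^+$ lie in the same component of $H$ exactly when their $Y$-sets can be chained by consecutive pairwise intersections. Consequently, $c(H) = |\pi|$, where $\pi$ is the induced partition of $I^+$ into such ``intersection clusters''.

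The key set-theoretic step is the inequality $\sum_{i \in P} |Y_i| \ge \bigl|\bigcup_{i \in P} Y_i\bigr| + (|P| - 1)$ for each cluster $P \in \pi$, which I would prove by induction on $|P|$: a leaf of a spanning tree of the intersection graph of $\{Y_i\}_{i \in P}$ must share at least one element with the union of the remaining sets, so reinserting it enlarges that union by at most $|Y_{\text{leaf}}| - 1$. Because the unions across distinct clusters are disjoint (otherwise the clusters would merge), summing yields $|\bigcup_{I^+} Y_i| - |\pi| \le \sum_{I^+} |Y_i| - |I^+|$, which together with the two rank bounds closes the inequality in the clean case $I^+ = \{1, \ldots, k\}$.

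The main obstacle I anticipate is the bookkeeping for ``degenerate'' blocks $X_i$ with $Y_i = \emptyset$, i.e., blocks consisting entirely of vertices isolated in $G$: these are invisible to $H$ but still contribute $|X_i| + 1$ to $\tau_G(\cX^+)$. The inequality nevertheless closes because each such block costs $2|X_i|$ of slack in the $\rho_2$ bound while adding only $|X_i| + 1$ to $\tau_G$; since $|X_i| \ge 1$ by nonemptiness of subpartition members, one gets $\sum_{i \in I^0} |X_i| \ge |I^0| = k - |I^+|$, which exactly compensates the $k$-versus-$|I^+|$ mismatch in the cluster accounting.
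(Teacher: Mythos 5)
Your proposal is correct, and its skeleton is the same as the paper's: you pick the identical set $Z$ (the edges whose tails lie in $\bigcup_i X_i^+$, equivalently the edges induced by $\bigcup_i (X_i^+ \cup \Gamma_G(X_i^+))$), and you split the bound into the same two pieces, with $\rho_2(E \setminus Z) \leq 2(\ell - \sum_i |X_i^+|)$ handled identically. The difference is entirely in how the graphic-matroid bound is obtained. The paper first \emph{uncrosses}: if $\Gamma_G(X_i^+) \cap \Gamma_G(X_j^+) \neq \emptyset$ it merges $X_i^+$ and $X_j^+$, observing that this cannot decrease $\tau_G(\cX^+)$ because $|\Gamma_G(X_i^+ \cup X_j^+)| \leq |\Gamma_G(X_i^+)| + |\Gamma_G(X_j^+)| - 1$; after this preprocessing the sets $X_i^+ \cup \Gamma_G(X_i^+)$ are pairwise disjoint and $\rho_1(Z) \leq \sum_i (|X_i| - 1)$ is immediate. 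You instead prove the same rank bound for the unmerged, possibly overlapping configuration by counting components via intersection clusters of the $Y_i$'s, using the spanning-tree induction $\sum_{i \in P}|Y_i| \geq |\bigcup_{i \in P} Y_i| + |P| - 1$. This works, and your handling of the blocks with $Y_i = \emptyset$ (trading $2|X_i|$ of slack in the $\rho_2$ bound against $|X_i|+1$ in $\tau_G$) is the right accounting; note that the paper's merged version absorbs these blocks with no extra effort since $|X_i| - 1 \geq 0$. Two small points of precision: your assertion $c(H) = |\pi|$ is in general only $c(H) \geq |\pi|$ (a block $X_i$ whose induced subgraph is disconnected inflates $c(H)$ beyond the cluster count), but that is the direction you need since a larger $c(H)$ only lowers $\rho_1(Z)$; and the well-definedness of the cluster associated with a component deserves the one-line remark that along any path in $H$ the block index can change only at a $V^-$-vertex lying in two of the $Y_i$'s. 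Overall, the paper's uncrossing step buys a shorter argument at the cost of modifying the subpartition; yours keeps $\cX^+$ fixed at the cost of a more laborious count.
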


\begin{proof}
Let $\cX^+ = \{X_1^+, X_2^+, \ldots, X_k^+\}$ be a subpartition of $V^+$.
For each $i \in [k]$, define $X_i^- := \Gamma_G(X_i^+)$ and $X_i := X_i^+ \cup X_i^-$.
If $X_i^- \cap X_j^- \neq \emptyset$ for some distinct $i, j \in [k]$,
then replacing $X_i^+$ and $X_j^+$ with $X_i^+ \cup X_j^+$
does not decrease the value of $\tau_G(\cX^+)$ because
\[|\Gamma_G(X_i^+ \cup X_j^+)| = |X_i^- \cup X_j^-| = |X_i^-| + |X_j^-| - |X_i^- \cap X_j^-| \leq |X_i^-| + |X_j^-| - 1.\]
Hence, we can assume that $X_i \cap X_j = \emptyset$ for every distinct $i, j \in [k]$.

Let $Z \subseteq E$ be the set of edges induced by $X := \bigcup_{i \in [k]} X_i$.
Then, $\rho_1(Z) \leq \sum_{i = 1}^k \left(|X_i| - 1\right)$
and $\rho_2(E \setminus Z) \leq 2|X_0^+|$,
where $X_0^+ := V^+ \setminus X^+$. Thus we have
\begin{align*}
  \rho_1(Z) + \rho_2(E \setminus Z) \ &\leq \ \sum_{i = 1}^k \left(|X_i| - 1\right) + 2|X_0^+|\\
  &= \ \sum_{i = 1}^k \left(|X_i^+| + |X_i^-| - 1\right) + 2|X_0^+|\\
  &= \ 2\sum_{i = 0}^k |X_i^+| - \sum_{i = 1}^k \left(|X_i^+| - |X_i^-| + 1\right)\\
  &= \ 2\ell - \tau_G(\cX^+). \qedhere
\end{align*}
\end{proof}

\begin{claim}\label{cl:Edmonds2}
  For any subset $Z \subseteq E$,
  there exists a subpartition $\cX^+$ of $V^+$ with
  \begin{equation}\nonumber
    \rho_1(Z) + \rho_2(E \setminus Z) \geq 2\ell - \tau_G(\cX^+).
  \end{equation}
\end{claim}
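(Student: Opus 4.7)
The plan is the following. Given $Z \subseteq E$, I will construct $\cX^+$ from the connected components of the spanning subgraph on edge set $Z$. Let $C_1, \ldots, C_c$ be those components, and write $C_i^+ := C_i \cap V^+$ and $C_i^- := C_i \cap V^-$. Since $\rho_1(Z) = |V^+| + |V^-| - c$ and $\rho_2(E \setminus Z) = \sum_{u \in V^+} \min\{2,\,d_u\}$, where $d_u := \deg_{E \setminus Z}(u)$, the inequality in the claim is equivalent to
\[
\tau_G(\cX^+) \;\geq\; c + \sum_{u \in V^+} \max\{0,\, 2 - d_u\} - \ell - n.
\]

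My first candidate is the natural subpartition $\cX^+_0 := \{C_i^+ : C_i^+ \neq \emptyset\}$. Splitting $\Gamma_G(C_i^+) = C_i^- \cup N_i$ with $N_i := \Gamma_G(C_i^+) \setminus C_i^-$ capturing the external $V^-$-vertices reached from $C_i^+$ through $E \setminus Z$-edges, a direct computation reduces the target bound, for this choice, to the cleaner inequality $\sum_i |N_i| \leq \rho_2(E \setminus Z)$. When this holds the naive candidate already suffices. When it fails---which can happen if some $u \in V^+$ has many $E \setminus Z$-edges leaving its $Z$-component---I refine: for each component with $C_i^+ \neq \emptyset$, I replace $C_i^+$ by a nonempty subset $X_i^+ \subseteq C_i^+$ minimising $f_G(X^+) = |\Gamma_G(X^+)| - |X^+|$ over nonempty subsets of $C_i^+$ (which exists by the submodularity of $f_G$ noted in Section~\ref{sec:DM-decomposition}), and keep $X_i^+$ in $\cX^+$ only when $f_G(X_i^+) \leq 0$. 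Each retained block then contributes $1 - f_G(X_i^+) \geq 1$ to $\tau_G$, and in the spirit of Lemma~\ref{lem:DM-irreducibility} this refinement keeps precisely the deficient parts of each component.

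The main obstacle is verifying the inequality for the refined $\cX^+$. The plan is a charging argument that pairs external $V^-$-reach against partition-matroid slack. Each $E \setminus Z$-edge at $u \in V^+$ accounts for at most $\min\{2, d_u\}$ units of $\rho_2(E \setminus Z)$, so any surplus in $|N_i|$ beyond this total budget must be absorbed by the negative value of $f_G(X_i^+)$ inside the over-spilled component. The hard step is showing, component by component, that the submodular minimum within $C_i^+$ is large enough to cover the overflow. I expect this step to require matroid-exchange-style bookkeeping that simultaneously tracks the cycle-matroid structure (through the $Z$-components $C_i$) and the partition-matroid structure (through the degrees $d_u$), together with a careful case analysis on how $Z$-edges and $E \setminus Z$-edges interact across the boundary of each component.
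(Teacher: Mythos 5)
Your setup is correct (the rank formulas, the reduction to $\tau_G(\cX^+) \geq c - \ell - n + \sum_u\max\{0,2-d_u\}$, and the observation that the naive candidate $\{C_i^+\}$ fails exactly when $\sum_i|N_i|$ exceeds $\rho_2(E\setminus Z)$), but the proof has a genuine gap: the entire content of the claim is the step you defer with ``I expect this step to require matroid-exchange-style bookkeeping \ldots together with a careful case analysis.'' That charging argument is not routine. Two concrete difficulties: first, a single vertex $u$ with many $E\setminus Z$-edges contributes at most $2$ to $\rho_2(E\setminus Z)$ but can inflate $|N_i|$ arbitrarily, and you give no mechanism that converts this overflow into deficiency of a subset of $C_i^+$; second, keeping only \emph{one} $f_G$-minimizer per $Z$-component is lossy in principle, since for disjoint $X,Y\subseteq C_i^+$ with $f_G\leq 0$ submodularity only gives $f_G(X\cup Y)\leq f_G(X)+f_G(Y)$, so merging two deficient sets into one can cost a unit of $\tau_G$ per merge --- and the true dual optimum can indeed consist of several disjoint sets inside one $Z$-component. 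Until the component-by-component bound is actually proved, the claim is not established.

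The paper avoids all of this by working on the \emph{other} side of the inequality: instead of fixing $Z$ and searching for a clever subpartition, it first modifies $Z$ without increasing $\rho_1(Z)+\rho_2(E\setminus Z)$ --- if some $u\in V^+$ has edges in both $Z$ and $E\setminus Z$, move all of $u$'s $Z$-edges to $E\setminus Z$ (this drops $\rho_1$ by at least $1$ and raises $\rho_2$ by at most $1$); and if $u$ has a unique edge and it lies in $E\setminus Z$, move it to $Z$ (a net wash). After this normalization every $u\in V^+$ is isolated in one of the two edge sets, so letting $Y^+$ be the vertices served by the partition matroid and taking $\cX^+$ to be the components of $G-Y^+$ restricted to $V^+\setminus Y^+$, one gets $\rho_1(Z)+\rho_2(E\setminus Z)=2\ell-\tau_G(\cX^+)$ as an exact identity with no charging needed. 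I would recommend restructuring your argument around this preprocessing of $Z$ rather than trying to repair the subpartition after the fact.
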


\begin{proof}
For an edge set $Z \subseteq E$,
let $E_1 := Z$, $E_2 := E \setminus Z$,
and $H_i := (V^+, V^-; E_i)$ $(i = 1, 2)$.
We first show that we can assume the following two conditions:
\begin{itemize}\vspace{-.5mm}
  \setlength{\itemsep}{.5mm}
\item
  each vertex $u \in V^+$ is isolated in $H_1$ or in $H_2$;
\item
  if exactly one edge $e \in E$ leaves $u \in V^+$, then $e \in E_1$.
\end{itemize}

To see the first condition,
suppose to the contrary that, for some $u \in V^+$,
at least one edge leaves $u$ both in $H_1$ and in $H_2$.
Then, by transfering all the edges leaving $u$ in $H_1$ from $E_1$ to $E_2$,
the rank $\rho_1(E_1)$ decreases by at least 1 (since $u$ will be isolated in $H_1$)
and $\rho_2(E_2)$ increases by at most 1
(since $H_2$ already has at least one edge leaving $u$),
and hence the value of $\rho_1(Z) + \rho_2(E \setminus Z)$ does not increase.

To see the second condition,
suppose to the contrary that, for some $u \in V^+$,
exactly one edge $e \in E$ leaves $u \in V^+$ and $e \in E_2$.
Then, by transfering $e$ from $E_2$ to $E_1$,
the rank $\rho_1(E_1)$ increases by 1 (since $u$ is isolated in $H_1$)
and $\rho_2(E_2)$ decreases by 1 (since only $e$ leaves $u$ in $H_2$),
and hence the value of $\rho_1(Z) + \rho_2(E \setminus Z)$ does not change.

Let $Y^+ \subseteq V^+$ be the set of vertices that are not isolated in $H_2$,
and $\cX^+ := \{X_1^+, X_2^+, \ldots, X_k^+\}$ the partition of $X^+ := V^+ \setminus Y^+$
according to the connected components of $H_1 - Y^+ = G - Y^+$.
Then we have
\begin{align*}
  2\ell - \tau_G(\cX^+) \ &= \ 2|V^+| - \sum_{i = 1}^k \left(|X_i^+| - |\Gamma_G(X_i^+)| + 1\right)\\
  &= \ \sum_{i = 1}^k \left(|X_i^+| + |\Gamma_G(X_i^+)| - 1\right) + 2(|V^+| - |X^+|)\\
  &= \ \sum_{i = 1}^k \left(|X_i^+| + |\Gamma_{H_1}(X_i^+)| - 1\right) + 2|Y^+|\\
  &= \ \rho_1(Z) + \rho_2(E \setminus Z). \qedhere
\end{align*}
\end{proof}

\section{Algorithm for Balanced Case}\label{sec:Algorithm}
In this section, we present a direct algorithm for Problem (DMI) that only requires ${\rm O}(nm)$ time,
where the input bipartite graph $G = (V^+, V^-; E)$ is assumed to be balanced
with $|V^+| = |V^-| = n$ and $|E| = m$.
It should be remarked that our algorithm can solve the unbalanced case
through a reduction to the balanced case shown in Section~\ref{sec:unbalanced_to_balanced}
with the same computational time bound (see Appendix~\ref{sec:unbalanced}).

We describe our algorithm in Section~\ref{sec:Description}.
Next, in Section~\ref{sec:Optimality}, we show the optimality of the output,
which also gives an alternative, constructive proof of the min-max duality (Theorem~\ref{thm:duality}).
Finally, we analyze the running time of our algorithm in Section~\ref{sec:Time}.

\subsection{Algorithm description}\label{sec:Description}
We first compute the DM-decomposition of $G$,
say $(V_0; V_1, V_2, \ldots, V_k; V_\infty)$.
If $V_0 = V_\infty = \emptyset$, then $G$ has a perfect matching $M \subseteq E$.
In this case, it suffices to find a minimum number of additional edges
to make the auxiliary graph $G(M) = G + \overline{M}$ strongly connected
(as seen in Section~\ref{sec:Equivalence_SC}),
which can be done in linear time by Theorem~\ref{thm:ET}.

Otherwise, since $|V^+| = |V^-|$, both $V_0$ and $V_\infty$ are nonempty,
and hence $G$ has no perfect matching.
A possible strategy is to make $G$ perfectly matchable
by adding a perfect matching
$N \subseteq (V^+ \setminus \partial^+M) \times (V^- \setminus \partial^-M) 
\subseteq (V^+ \times V^-) \setminus E$ between the vertices exposed
by some maximum matching $M \subseteq E$ in $G$. 
The resulting graph $\tG := G + N$ has a perfect matching $\tM := M \cup N$,
and hence a minimum number of
further additional edges to make $\tG$ DM-irreducible can be found in linear time.
Thus we obtain a feasible solution, which may fail to be optimal.

We adopt a maximum matching $M \subseteq E$ in $G$
whose restrictions to $G[V_0]$ and to $G[V_\infty]$ 
are both {\em eligible} perfect matchings defined as follows.
This modification enables us to guarantee the optimality of the output
with the aid of the weak duality (Lemma~\ref{lem:weakdual}).

\begin{definition}\label{def:EPM}
  Let $H = (U^+, U^-; E)$ be a DM-irreducible unbalanced bipartite graph,
  and $M \subseteq E$ a perfect matching in $H$.
  When $|U^+| < |U^-|$, we say that $M$ is {\em eligible}
  if there exists a subpartition $\cX^-$ of $U^-$ such that
  $\tau_H(\cX^-) = |U^-| - |U^+| + s(H(M))$.
  Similarly, when $|U^+| > |U^-|$, we say so
  if there is a subpartition $\cX^+$ of $U^+$ such that
  $\tau_H(\cX^+) = |U^+| - |U^-| + t(H(M))$.
\end{definition}

Note that this definition is symmetric, i.e.,
the eligibility of $M$ when $|U^+| > |U^-|$ is
equivalent to the eligibility of $\overline{M}$
in the interchanged bipartite graph $(U^-, U^+; \overline{E})$.

Procedure EPM for finding an eligible perfect matching will be described in Section~\ref{sec:FEPM}.
A formal description of the entire algorithm is now given as follows.\clearpage
\begin{description}
  \setlength{\itemsep}{.5mm}
\item[\underline{Algorithm DMI$(G)$}]

\item[Input:]\vspace{.5mm}
  A bipartite graph $G = (V^+, V^-; E)$ with $|V^+| = |V^-| = n$.

\item[Output:]\vspace{-.5mm}
  An edge set $F \subseteq (V^+ \times V^-) \setminus E$
  with $|F| = {\rm opt}(G)$ such that $G + F$ is DM-irreducible.\vspace{1.5mm}

\item[Step 0.]
  Compute the DM-decomposition $(V_0; V_1, V_2, \ldots, V_k; V_\infty)$ of $G$.

\item[Step 1.]
  If $V_0 = V_\infty = \emptyset$,
  then set $N \leftarrow \emptyset$ and go to Step 4.

\item[Step 2.]
  Otherwise (i.e., if $V_0 \neq \emptyset \neq V_\infty$),
  find eligible perfect matchings $M_0 \subseteq E \cap (V_0^+ \times V_0^-)$ in $G[V_0]$
  and $M_\infty \subseteq E \cap (V_\infty^+ \times V_\infty^-)$ in $G[V_\infty]$
  by Procedure EPM.

\item[Step 3.]
  Take an arbitrary perfect matching 
  $N \subseteq (V_0^+ \setminus \partial^+M_0) \times (V_\infty^- \setminus \partial^-M_\infty)$.

\item[Step 4.]
  Let $\tG := G + N$, which has a perfect matching $\tM \subseteq E \cup N$.
  Using the Eswaran--Tarjan algorithm,
  find an edge set $\tF \subseteq (V^+ \times V^-) \setminus (E \cup N)$
  with $|\tF| = {\rm opt}(\tG)$
  such that $\tG(\tM) + \tF$ is strongly connected,
  and return $F \leftarrow N \cup \tF$.
\end{description}

\subsection{Optimality}\label{sec:Optimality}
In this section, we show that the output $F$ of Algorithm DMI$(G)$
is an optimal solution to Problem (DMI).
We first see the weak duality part of Theorem~\ref{thm:duality},
i.e., ${\rm opt}(G) \geq \max_\cX \tau_G(\cX)$.
We then construct a proper subpartition $\cX$ of $V^+$ or of $V^-$
such that $|F| = \tau_G(\cX)$, which implies that
$F$ and $\cX$ attain the minimum and the maximum, respectively.
The construction is presented separately for two cases:
when $G$ has a perfect matching and when not.
Note that the first case is not necessary for the optimality proof
(recall that it reduces to
the strong connectivity augmentation in Section~\ref{sec:Equivalence_SC}),
but is helpful to a discussion of the second case.

\subsubsection*{Weak duality}
\begin{lemma}\label{lem:weakdual}
  Let $G = (V^+, V^-; E)$ be a bipartite graph with $|V^+| = |V^-|$.
  Then, for any edge set $F \subseteq (V^+ \times V^-) \setminus E$
  such that $G + F$ is DM-irreducible
  and any proper subpartition $\cX$ of $V^+$ or of $V^-$, we have $|F| \geq \tau_G(\cX)$.
\end{lemma}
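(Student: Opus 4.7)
The plan is to exploit Lemma~\ref{lem:DM-irreducibility} part-by-part and then aggregate over the subpartition $\cX$, using disjointness to convert a neighborhood bound into an edge-count bound. After dispatching the trivial case $n := |V^+| = |V^-| \leq 1$ (the only proper subpartition is empty, giving $\tau_G(\cX) = 0 \leq |F|$), I would assume $n \geq 2$ and, by the side-symmetry of the DM-decomposition in the balanced case, reduce to the situation where $\cX$ is a proper subpartition of $V^+$.

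First, for each $X \in \cX$, disjointness of the parts together with properness of $\cX$ forces $X \subsetneq V^+$, and in particular $|X| < n = |V^-|$. I would then apply Lemma~\ref{lem:DM-irreducibility} to the DM-irreducible graph $G + F$ to obtain $|\Gamma_{G+F}(X)| \geq |X| + 1$. Since $\Gamma_{G+F}(X) \subseteq \Gamma_G(X) \cup \Gamma_F(X)$, this yields the per-part bound
\[|\Gamma_F(X)| \;\geq\; |X| - |\Gamma_G(X)| + 1.\]

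Second, letting $F_X := \{\, uw \in F \mid u \in X \,\}$, I would observe that the $F_X$ are pairwise disjoint subsets of $F$ (since $\cX$ is a subpartition of $V^+$) and that $|F_X| \geq |\Gamma_F(X)|$, because distinct heads require distinct edges. Combining these facts with the trivial bound $|F_X| \geq 0$ to absorb any part whose estimate happens to be nonpositive, I would conclude
\[|F| \;\geq\; \sum_{X \in \cX} |F_X| \;\geq\; \sum_{X \in \cX} \max\{0,\ |X| - |\Gamma_G(X)| + 1\} \;\geq\; \tau_G(\cX).\]

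I do not foresee a substantive obstacle. The only subtlety worth flagging is that individual contributions $|X| - |\Gamma_G(X)| + 1$ in $\tau_G(\cX)$ may be nonpositive (when $X$ has many $G$-neighbors), which the $\max\{0,\cdot\}$ slack absorbs harmlessly; if one preferred, one could even prune such parts from $\cX$ without decreasing $\tau_G$. The symmetric case, where $\cX$ is a proper subpartition of $V^-$, is handled identically after swapping the two sides, relying on the side-symmetry of the DM-decomposition noted after Theorem~\ref{thm:DM}.
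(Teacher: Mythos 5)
Your proposal is correct and follows essentially the same route as the paper: apply Lemma~\ref{lem:DM-irreducibility} to $G+F$ on each part $X$ of the subpartition, convert the resulting neighborhood deficit into a lower bound on the number of $F$-edges with tail in $X$, and sum over the pairwise disjoint edge sets (the paper counts the edges of $F$ from $X$ into $V^-\setminus\Gamma_G(X)$, which is a cosmetic variant of your bound via $|\Gamma_F(X)|$). The symmetry reduction to subpartitions of $V^+$ is also exactly the paper's.
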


\begin{proof}
Fix an edge set $F \subseteq (V^+ \times V^-) \setminus E$ such that $G + F$ is DM-irreducible
and a proper subpartition $\cX$ of $V^+$.
By Lemma~\ref{lem:DM-irreducibility}, the DM-irreducibility of $G + F$
implies that $|\Gamma_{G + F}(X^+)| \geq |X^+| + 1$ for every $X^+ \in \cX$. Hence,
\[|F(X^+,\, V^- \setminus \Gamma_G(X^+))| \geq |\Gamma_{G + F}(X^+)| - |\Gamma_G(X^+)| \geq |X^+| - |\Gamma_G(X^+)| + 1,\]
where $F(Y^+, Y^-) := F \cap (Y^+ \times Y^-)$
denotes the restriction of $F$ to $Y^+ \times Y^-$
for $Y^+ \subseteq V^+$ and $Y^- \subseteq V^-$.
For every distinct $X_1^+, X_2^+ \in \cX$,
since $X_1^+ \cap X_2^+ = \emptyset$ implies
$F(X_1^+,\, V^- \setminus \Gamma_G(X_1^+)) \cap F(X_2^+,\, V^- \setminus \Gamma_G(X_2^+)) = \emptyset$, we see
\[|F| \geq \sum_{X^+ \in \cX} |F(X^+, V^- \setminus \Gamma_G(X^+))| \geq \sum_{X^+ \in \cX} \left(|X^+| - |\Gamma_G(X^+)| + 1\right) = \tau_G(\cX).\]

We can handle the proper subpartitions of $V^-$ in the same way
by considering the interchanged bipartite graph $(V^-, V^+; \overline{E})$
and the set $\overline{F}$ of reverse edges,
and thus we are done.
\end{proof}

\subsubsection*{Perfectly-matchable case}
Suppose that the input graph $G$ has a perfect matching $M \subseteq E$.
Then, Algorithm DMI$(G)$ just finds a minimum-cardinality set $F$ of additional edges
to make $G(M)$ strongly connected in Step 4.
If $G(M)$ itself is strongly connected,
then $\cX := \emptyset$ is a desired proper subpartition of $V^+$ (and of $V^-$),
i.e., $\tau_G(\cX) = 0 = |F|$.

Otherwise, $|F| = \max\{s(G(M)),\, t(G(M))\}$ by Theorem~\ref{thm:ET}.
Define two subpartitions $\cX^-$ of $V^-$ and $\cX^+$ of $V^+$
as follows (see also Fig.~\ref{fig:cX}):
\begin{align*}
  \cX^- &:= \{\, X^- \mid G(M)[X]~\text{is a source component of}~G(M) \,\},\\[1mm]
  \cX^+ &:= \{\, X^+ \mid G(M)[X]~\text{is a sink component of}~G(M) \,\},
\end{align*}
where recall that $X^+ := X \cap V^+$ and $X^- := X \cap V^-$ for $X \subseteq V$.
Since $G(M)$ is not strongly connected,
we have $\cX^- \neq \{V^-\}$ and $\cX^+ \neq \{V^+\}$.
We show that one of $\cX^-$ and $\cX^+$ is a desired proper subpartition
by confirming $\tau_G(\cX^-) = s(G(M))$ and $\tau_G(\cX^+) = t(G(M))$.

\begin{figure}[tb]
  \begin{center}
    \includegraphics[scale=0.5]{./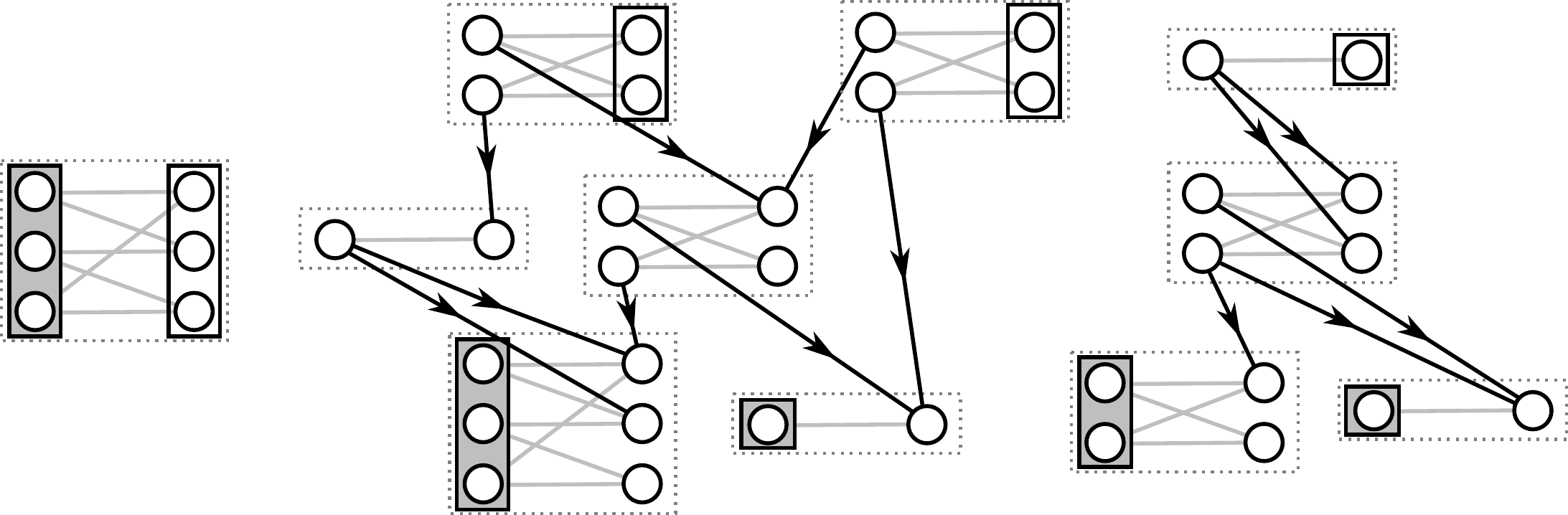}
    \caption{Proper subpartitions $\cX^-$ $($white boxes$)$ and $\cX^+$ $($gray boxes$)$ with $\tau_G(\cX^-) = s(G(M))$ and $\tau_G(\cX^+) = t(G(M))$ when $G$ has a perfect matching $M$, e.g., the set of all horizontal edges.}
    \label{fig:cX}
  \end{center}
\end{figure}

Since any edge in $M \cup \overline{M}$
is contained in some strongly connected component of $G(M)$,
distinct strongly connected components are connected only by edges
in $E \setminus M \subseteq V^+ \times V^-$.
Hence, for each source component $G(M)[X]$ of $G(M)$,
since no edge can enter $X$ in $G(M)$,
we have $\Gamma_G(X^-) = X^+$, which implies $|\Gamma_G(X^-)| = |X^+| = |X^-|$.
Similarly, for each sink component $G(M)[X]$ of $G(M)$, we have $|\Gamma_G(X^+)| = |X^-| = |X^+|$.
Thus we see
\[\tau_G(\cX^-) = \sum_{X^- \in \cX^-} 1 = |\cX^-| = s(G(M))~~\text{and}~~\tau_G(\cX^+) = \sum_{X^+ \in \cX^+} 1 = |\cX^+| = t(G(M)).\]

\subsubsection*{General case}
Suppose that the input graph $G$ has no perfect matching;
equivalently, $V_0 \neq \emptyset \neq V_\infty$ in the DM-decomposition
$(V_0; V_1, V_2, \ldots, V_k; V_\infty)$ of $G$.
In this case,
our algorithm finds a maximum matching $M \subseteq E$ in $G$ whose restrictions
$M_0$ to $G[V_0]$ and $M_\infty$ to $G[V_\infty]$
are both eligible perfect matchings in Steps 0 and 2
(cf. Condition 5 in Theorem~\ref{thm:DM} and the computation of the DM-decomposition in Section~\ref{sec:DM-decomposition}),
adds to $G$ a perfect matching
$N \subseteq (V^+ \setminus \partial^+M) \times (V^- \setminus \partial^-M)$
between the exposed vertices in Step 3 (see Fig.~\ref{fig:general}),
and finds an optimal solution $\tF \subseteq (V^+ \times V^-) \setminus (E \cup N)$
to $\tG = G + N$ in Step 4.

\begin{figure}[tb]
  \begin{center}
    \includegraphics[scale=0.7]{./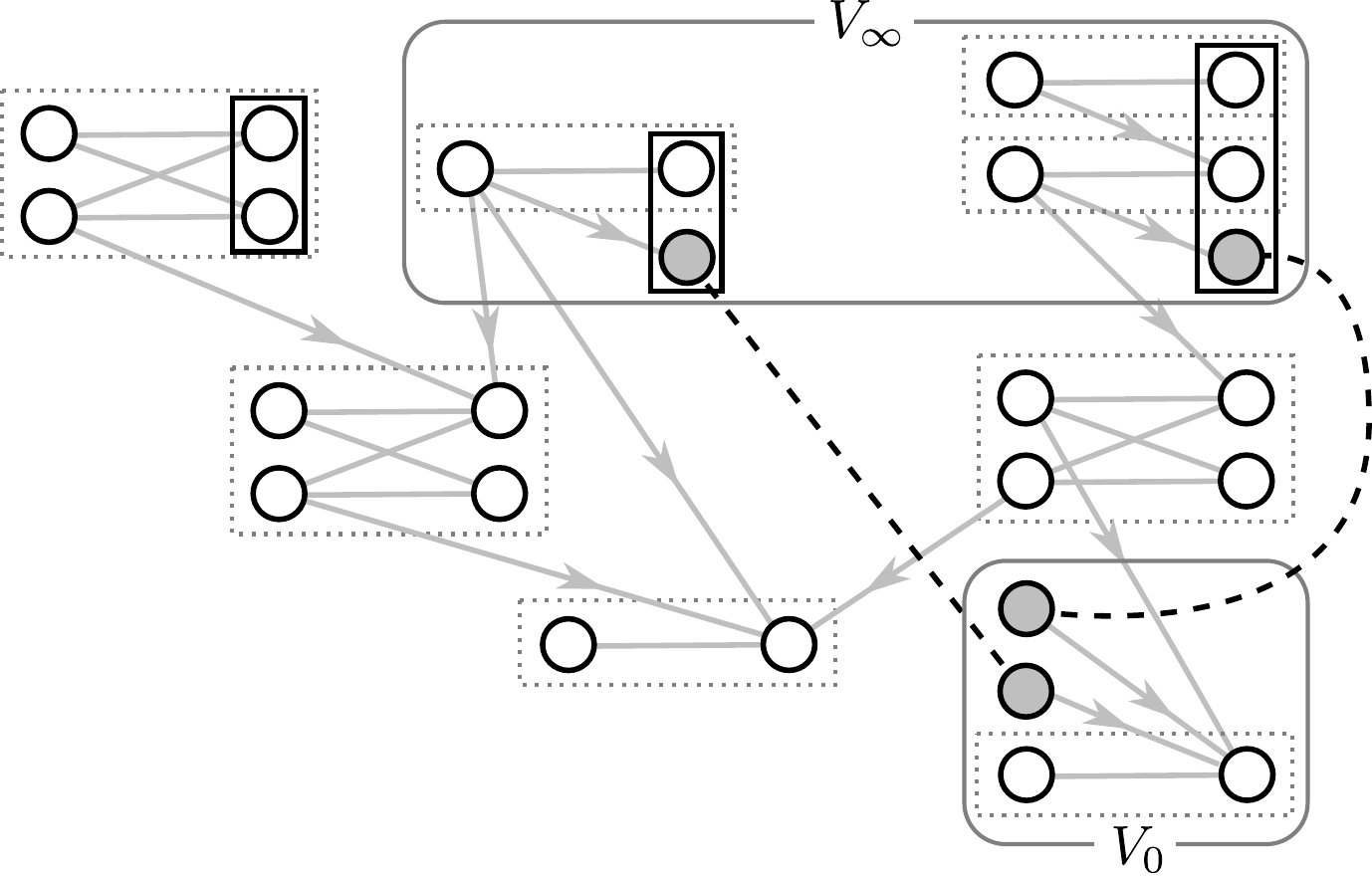}
    \caption{Illustration of the general case, where the maximum matching $M$ is the set of all horizontal edges, the perfect matching $N$ between the exposed vertices is drawn by dashed lines, and the white boxes represent the proper subpartition $\cX^- = \cX_\infty^- \cup \cX_\ast^-$ of $V^-$ with $\tau_G(\cX^-) = |V_\infty^-| - |V_\infty^+| + s(\tG(\tM))$.}
    \label{fig:general}
  \end{center}
\end{figure}

If $n = 1$, then $E = \emptyset$, $N = V^+ \times V^-$, and $\tF = \emptyset$.
Then the output $F = V^+ \times V^-$ is a unique feasible solution,
and hence optimal.
In what follows, we assume $n \geq 2$.
Then, as done above, 
it suffices to construct two proper subpartitions $\cX^-$ of $V^-$ and $\cX^+$ of $V^+$
such that $\max\{\tau_G(\cX^-),\, \tau_G(\cX^+)\} = |F| = |N| + |\tF|$.

Note that $|N| = n - |M| = |V_0^+| - |V_0^-| = |V_\infty^-| - |V_\infty^+|$.
The following claim implies $|\tF| = \max\{s(\tG(\tM)),\, t(\tG(\tM))\}$ by Theorem~\ref{thm:ET}, and hence
\begin{equation}\label{eq:|F|}
  |F| = \max\{|V_\infty^-| - |V_\infty^+| + s(\tG(\tM)),\, |V_0^+| - |V_0^-| + t(\tG(\tM))\},
\end{equation}
where $\tM := M \cup N$ is a perfect matching in $\tG$.

\begin{claim}\label{cl:SC}
  $\tG(\tM)$ is not strongly connected.
\end{claim}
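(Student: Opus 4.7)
The plan is to pick an arbitrary exposed vertex $u \in V^+ \setminus \partial^+M \subseteq V_0^+$ together with its $N$-partner $w \in V^- \setminus \partial^-M \subseteq V_\infty^-$ (so $uw \in N$), and prove that the strongly connected component of $u$ in $\tG(\tM)$ is exactly the pair $\{u, w\}$. Since $|V| = 2n \geq 4$, this two-element component is a proper subset of $V$, which forces $\tG(\tM)$ to fail strong connectivity.

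The argument rests on Observation~\ref{obs:source_sink}: $\{u\}$ is a source component of $G(M)[V_0]$, $\{w\}$ is a sink component of $G(M)[V_\infty]$, and in $G(M)$ no edge leaves $V_0$ nor enters $V_\infty$. First I would verify that the set of vertices reachable from $u$ in $\tG(\tM)$ is contained in $V_0 \cup \{w\}$: within $V_0$ only the $G(M)[V_0]$-descendants of $u$ can be reached, and none of these is another exposed vertex $u_j$ because each such $u_j$ is itself a source component of $G(M)[V_0]$; the sole way out of $V_0$ is the $N$-edge $u \to w$, and from $w$ the sole outgoing edge is the $\overline{N}$-edge $w \to u$, since $w$ is a sink of $G(M)[V_\infty]$. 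By a symmetric backward trace I would then show that the set of vertices reaching $u$ in $\tG(\tM)$ is contained in $\{u\} \cup V_\infty$: the only in-neighbor of $u$ is $w$, and no vertex in $V_0 \setminus \{u\}$ or in $V_\ast$ can reach $w$, since such a vertex either cannot leave $V_0 \cup V_\ast$ at all or, if it is some other exposed $u_j$, is trapped in forward reachability inside $\{u_j, w_j\}$ together with the $V_0$-descendants of $u_j$.

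Intersecting the two containing sets and using $V_0 \cap V_\infty = \emptyset$ gives exactly $\{u, w\}$; combined with the trivial fact that $u$ and $w$ are strongly connected through the 2-cycle formed by $N$ and $\overline{N}$, this pins down the strongly connected component of $u$ to be $\{u, w\}$, a proper subset of $V$. The main obstacle will be the backward trace: one has to rule out carefully that a detour through a different pair $\{u_j, w_j\}$ with $j \neq i$ provides a path to $w$, and this reduces to the observation that each $w_j$ is itself a sink of $G(M)[V_\infty]$ and therefore has no outgoing edge inside $V_\infty$ beyond its $\overline{N}$-edge back to $u_j$.
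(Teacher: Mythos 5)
Your proposal is correct and follows essentially the same route as the paper: the paper's proof also fixes an $N$-edge $uw$, argues via Observation~\ref{obs:source_sink} that $u$ reaches only vertices in $V_0\cup\{w\}$ while only vertices in $\{u\}\cup V_\infty$ reach it, concludes that $\{u,w\}$ is a strongly connected component of $\tG(\tM)$, and invokes $n\geq 2$ to get a second component. Your version merely spells out the forward and backward reachability traces in more detail.
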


\begin{proof}
By Observation~\ref{obs:source_sink},
each exposed vertex $u \in V^+ \setminus \partial^+M$
forms a source component of $G(M)$ which is reachable only to some vertices in $V_0$,
and each $w \in V^- \setminus \partial^-M$
forms a sink component of $G(M)$ which is reachable only from some vertices in $V_\infty$.
Since each edge $uw \in N$ connects such source and sink components one by one,
the two end vertices $u \in V^+$ and $w \in V^-$
form a new strongly connected component in $\tG(\tM) = G(M) + (N \cup \overline{N})$,
which is reachable only to some vertices in $V_0$ and only from some in $V_\infty$.
Recall that $|V^+| = |V^-| = n \geq 2$,
and hence $\tG(\tM)$ has at least two distinct strongly connected components.
\end{proof}

In what follows, we shall construct a subpartition $\cX^-$ of $V^-$
such that $\tau_G(\cX^-) = |V_\infty^-| - |V_\infty^+| + s(\tG(\tM))$ (see also Fig.~\ref{fig:general}).
By the symmetry, one can obtain a subpartition $\cX^+$ of $V^+$
such that $\tau_G(\cX^+) = |V_0^+| - |V_0^-| + t(\tG(\tM))$
in the same way (consider the interchanged bipartite graph $(V^-, V^+; \overline{E})$).
By \eqref{eq:|F|}, unless $\cX^- = \{V^-\}$ or $\cX^+ = \{V^+\}$,
these two subpartitions are desired ones.

Since no edge enters $V_\infty$ in $G$ as well as in $G(M)$ (see Observation~\ref{obs:source_sink})
and $M_\infty$ is an eligible perfect matching in $G_\infty := G[V_\infty]$,
there exists a subpartition $\cX_\infty^-$ of $V_\infty^-$ such that
$\tau_G(\cX_\infty^-) = \tau_{G_\infty}(\cX_\infty^-) = |V_\infty^-| - |V_\infty^+| + s(G_\infty(M_\infty))$.
Define
\[\cX_\ast^- := \{\, X^- \mid G(M)[X]~\text{is a source component of}~G(M)~\text{and}~X \cap (V_0 \cup V_\infty) = \emptyset \,\},\]
and $\cX^- := \cX_\infty^- \cup \cX_\ast^-$.
When $\cX^- \neq \{V^-\}$, the following claim completes the proof.

\begin{claim}\label{cl:cX^-}
  $\tau_G(\cX^-) = |V_\infty^-| - |V_\infty^+| + s(\tG(\tM))$.
\end{claim}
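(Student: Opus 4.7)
The plan is to split $\cX^- = \cX_\infty^- \cup \cX_\ast^-$ into its two disjoint parts, evaluate $\tau_G$ separately on each, and match the total against $|V_\infty^-| - |V_\infty^+| + s(\tG(\tM))$ by a careful count of the source components of $\tG(\tM)$.

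Since every member of $\cX_\ast^-$ lies outside $V_\infty^-$ while $\cX_\infty^- \subseteq 2^{V_\infty^-}$, the union is genuinely a subpartition and $\tau_G$ distributes: $\tau_G(\cX^-) = \tau_G(\cX_\infty^-) + \tau_G(\cX_\ast^-)$. For the first summand I would use ``no edge of $G(M)$ enters $V_\infty$'' (Observation~\ref{obs:source_sink}) to deduce $\Gamma_G(X^-) = \Gamma_{G_\infty}(X^-)$ for every $X^- \subseteq V_\infty^-$, so that $\tau_G(\cX_\infty^-) = \tau_{G_\infty}(\cX_\infty^-) = |V_\infty^-| - |V_\infty^+| + s(G_\infty(M_\infty))$ by the eligibility of $M_\infty$. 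For the second summand I would show that each $X \in \cX_\ast^-$ contributes exactly $1$: being a source component of $G(M)$ contained entirely in $V_\ast$, $X$ must coincide with one of the balanced components $V_i$, so $|X^+| = |X^-|$; moreover, the absence of incoming edges to $X$ forces $\Gamma_G(X^-) \subseteq X^+$, while the strong connectivity of $G(M)[X]$ forces every $u \in X^+$ to have a $G$-edge into $X^-$, yielding $\Gamma_G(X^-) = X^+$. Hence $\tau_G(\cX_\ast^-) = |\cX_\ast^-|$.

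The main obstacle is the remaining identity $s(\tG(\tM)) = s(G_\infty(M_\infty)) + |\cX_\ast^-|$. The key observation is that the added edges $N \cup \overline{N}$ only touch exposed singletons: each exposed $u \in V_0^+ \setminus \partial^+M$ is a singleton source of $G(M)[V_0]$ so no other vertex of $V_0$ can reach it, and each exposed $w \in V_\infty^- \setminus \partial^-M$ is a singleton sink of $G(M)[V_\infty]$ so it can reach no other vertex of $V_\infty$. Consequently, the only change in the SCC structure is that for each edge $uw \in N$ the two singleton SCCs $\{u\}$ and $\{w\}$ merge into a single SCC $\{u, w\}$ of $\tG(\tM)$, and every other SCC of $G(M)$ persists unchanged.

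With this in hand, I would classify source components of $\tG(\tM)$ by case: SCCs contained in $V_\ast$ are sources of $\tG(\tM)$ iff they are sources of $G(M)$, contributing exactly $|\cX_\ast^-|$; SCCs contained in $V_0$ but distinct from the merged pairs were never sources of $G(M)$ (by Observation~\ref{obs:source_sink}) and acquire no new in-edges, so are not sources; SCCs contained in $V_\infty$ but distinct from the $\{w\}$-singletons are sources of $\tG(\tM)$ iff they are sources of $G_\infty(M_\infty)$; and each merged pair $\{u, w\}$ is a source of $\tG(\tM)$ iff $w$ had no in-edge in $G(M)$, equivalently iff the singleton $\{w\}$ was itself a source component of $G_\infty(M_\infty)$. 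The last two cases jointly biject with the source components of $G_\infty(M_\infty)$, so their total count equals $s(G_\infty(M_\infty))$. Adding $|\cX_\ast^-|$ and substituting into the split of $\tau_G(\cX^-)$ yields the claimed identity.
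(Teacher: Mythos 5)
Your proof is correct and follows essentially the same route as the paper's: split $\tau_G(\cX^-)$ over $\cX_\infty^-$ and $\cX_\ast^-$, invoke eligibility of $M_\infty$ together with ``no edge enters $V_\infty$,'' and track how $N \cup \overline{N}$ merges exposed singleton sources/sinks pairwise. The only difference is bookkeeping: the paper passes through the intermediate quantity $s(G(M) - V_0)$ and shows both sides equal it, whereas you directly classify the source components of $\tG(\tM)$ to get $s(\tG(\tM)) = s(G_\infty(M_\infty)) + |\cX_\ast^-|$; these amount to the same count.
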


\begin{proof}
We first see $\tau_G(\cX^-) = |V_\infty^-| - |V_\infty^+| + s(G(M) - V_0)$.
Since no edge enters $V_\infty$ in $G(M)$,
the source components of $G(M) - V_0$ are partitioned into
those of $G(M)[V_\infty] = G_\infty(M_\infty)$ and those of $G(M)$ disjoint from $V_0 \cup V_\infty$.
Similarly to Section~\ref{sec:Equivalence_SC},
we see $\tau_G(\cX_\ast^-) = |\cX_\ast^-|$,
and hence $\tau_G(\cX^-) = \tau_G(\cX_\infty^-) + \tau_G(\cX_\ast^-) = |V_\infty^-| - |V_\infty^+| + s(G(M) - V_0)$.

Thus it suffices to show $s(\tG(\tM)) = s(G(M) - V_0)$.
Since no edge leaves $V_0$ in $G(M)$ and each source component of $G(M)[V_0]$
is a single exposed vertex $u \in V_0^+ \setminus \partial^+M$ with no entering edge,
the source components of $G(M)$ are partitioned into those of $G(M) - V_0$ and of $G(M)[V_0]$.
Hence, we have $s(G(M) - V_0) = s(G(M)) - s(G(M)[V_0])$.
Each exposed vertex $u \in V_0^+ \setminus \partial^+M$ is connected to some exposed vertex
$w \in V_\infty^- \setminus \partial^-M$ by two edges in $N \cup \overline{N}$.
As seen in the proof of Claim~\ref{cl:SC},
these two vertices $u$ and $w$ form a new strongly connected component in $\tG(\tM)$,
which is no longer a source component unless $w$ is isolated in $G(M)$,
i.e, the sink component $G_\infty(M_\infty)[\{w\}]$ is also a source component of $G(M) - V_0$.
Hence, whether some exposed vertices $w \in V_\infty^- \setminus \partial^-M$ are isolated or not,
by adding $N \cup \overline{N}$ to $G(M)$,
the number of source components decreases exactly by $s(G(M)[V_0])$.
Thus, $s(\tG(\tM)) = s(G(M)) - s(G(M)[V_0]) = s(G(M) - V_0)$.
\end{proof}

Finally, we consider the case of $\cX^- = \{V^-\}$.
Since $\tau(\cX_\infty^-) = |V_\infty^-| - |V_\infty^+| + s(G_\infty(M_\infty)) > 0$,
we have $\cX_\infty^- = \{V^-\}$ and $\cX_\ast^- = \emptyset$.
In this case, $V_\infty^- = V^-$ and $V_0^- = \emptyset$.
Hence, each vertex $u \in V_0^+$ is isolated in $G(M)$,
and is contained in a new sink component of $\tG(\tM) = G(M) + (N \cup \overline{N})$
consisting of two vertices.
Since the DM-decomposition of $G$ has no balanced component in this case,
we have $t(\tG(\tM)) = |V_0^+| = n - |M| \geq 1$, which leads to
\[|V_0^+| - |V_0^-| + t(\tG(\tM)) = 2(n - |M|) \geq n - |M| + 1 \geq |V^-| - |\Gamma_G(V^-)| + 1 = \tau_G(\cX_\infty^-).\]
Then the maximum in \eqref{eq:|F|} is attained by the latter term,
which is equal to $2|V_0^+|$.
Thus, for a subpartition $\cX^+ := \{\, \{u\} \mid u \in V_0^+ \,\} \neq \{V^+\}$ of $V^+$,
we have $\tau_G(\cX^+) = |F|$.

\subsection{Running time analysis}\label{sec:Time}
In this section, we show that Algorithm DMI$(G)$ runs in ${\rm O}(nm)$ time,
where recall that $n := |V^+| = |V^-|$ and $m := |E|$.

In Step 0, we find a maximum matching $M$ in $G$
and compute the strongly connected components of the auxiliary graph $G(M)$
(see Section~\ref{sec:DM-decomposition}).
The former can be done in ${\rm O}(nm)$ time even by a na\"{i}ve augmenting-path algorithm
(see, e.g., \cite[Section 16.3]{Schrijver2003}),
and the latter in ${\rm O}(n + m)$ time with the aid of the depth first search.
As shown in Section~\ref{sec:EPMR}, it takes ${\rm O}(nm)$ time to find
an eligible perfect matching, which is performed twice in Step 2.
Step 3 requires ${\rm O}(n)$ time,
and one can perform Step 4 in ${\rm O}(n + m)$ time by Theorem~\ref{thm:ET}
(note that a perfect matching $\tM$ in $\tG$
is obtained by combining the perfect matching $N \cup M_0 \cup M_\infty$ in $G[V_0 \cup V_\infty]$
with a perfect matching $M_\ast$ in $G - (V_0 \cup V_\infty)$,
which is included in the maximum matching $M$ in $G$ found in Step 0).
Thus the entire running time is bounded by ${\rm O}(nm)$.

\section{Finding Eligible Perfect Matchings}\label{sec:EPM}
In this section, we show a procedure
for finding an eligible perfect matching
in a DM-irreducible unbalanced bipartite graph $H = (U^+, U^-; E)$,
which plays a key role in Algorithm DMI.
Since the definition of eligibility is symmetric (see Definition~\ref{def:EPM}),
we assume $|U^+| < |U^-|$ in this section.

We describe an algorithm for finding an eligible perfect matching in 
Section~\ref{sec:FEPM}. Sections \ref{sec:EPMC} and \ref{sec:EPMR}
are devoted to its correctness proof and complexity analysis.

\subsection{Algorithm description}\label{sec:FEPM}
To describe the procedure,
we introduce an {\em augmented auxiliary graph}.

\begin{figure}[tb]
  \begin{center}
    \includegraphics[scale=0.5]{./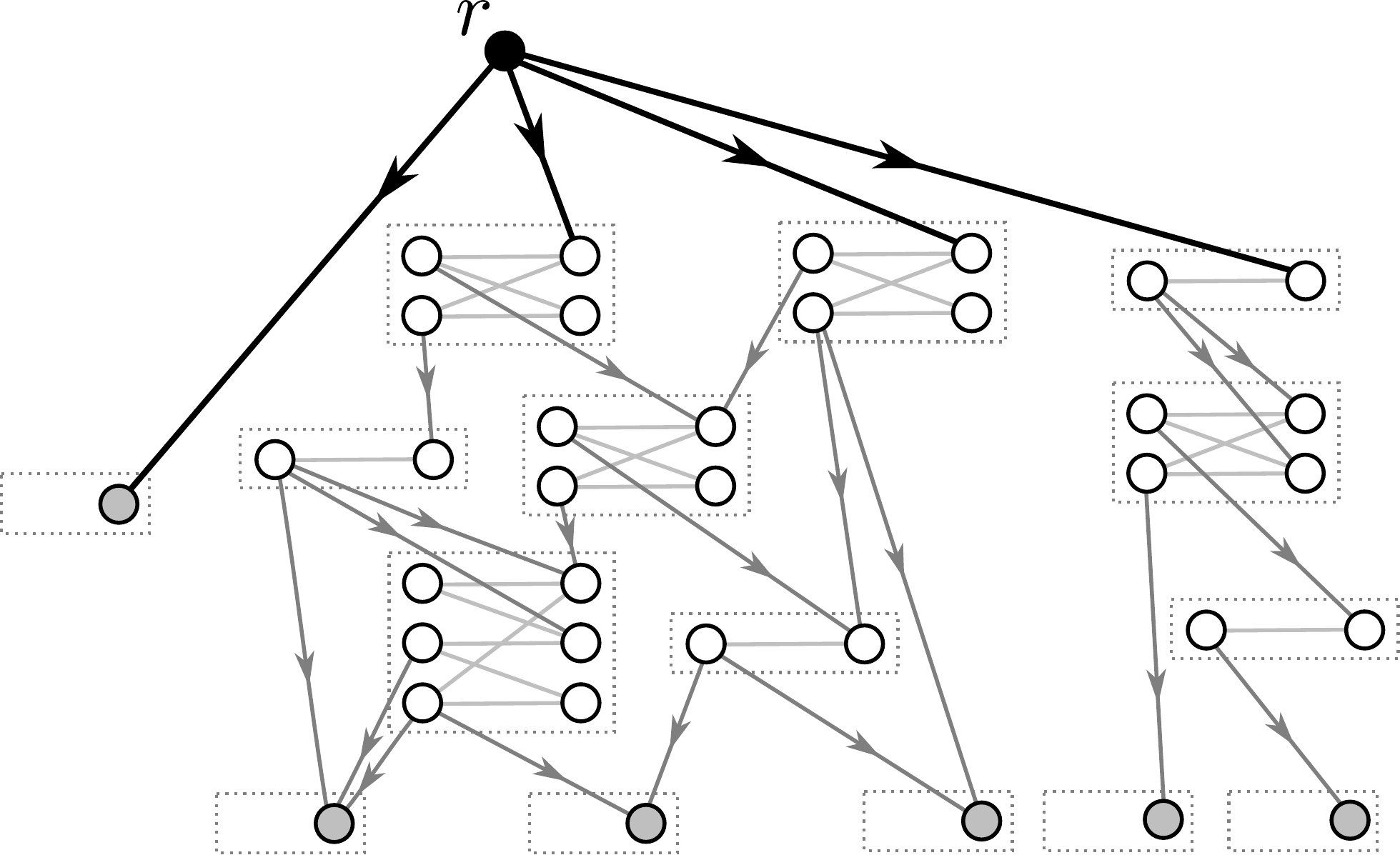}
    \caption{An augmented auxiliary graph $\hH(M)$, where the gray vertices are exposed by $M$.}
    \label{fig:H}
  \end{center}
\end{figure}

\begin{definition}\label{def:AAG}
  {\rm
  For a perfect matching $M \subseteq E$ in a DM-irreducible bipartite graph
  $H = (U^+, U^-; E)$ with $|U^+| < |U^-|$,
  an {\em augmented auxiliary graph} $\hH(M)$ is constructed
  from $H(M) = H + \overline{M}$ as follows (see also Fig.~\ref{fig:H}).
  Let $S^- \subseteq U^-$ be a vertex set obtained by collecting one vertex in $U^-$ 
  from each source component of $H(M)$, and hence, $|S^-| = s(H(M))$.
  Add to $H(M)$ a new vertex $r$ and an edge $rv$ for each $v \in S^-$.
  That is, $\hH(M) = (U \cup \{r\},\, E \cup \overline{M} \cup E_r)$,
  where $E_r := \{r\} \times S^-$.
  }
\end{definition}

Note that, since there may be several possible choices of $S^-$,
an augmented auxiliary graph $\hH(M)$ is not uniquely determined in general.

The procedure for finding an eligible perfect matching is now given as follows. 
\begin{description}
  \setlength{\itemsep}{.5mm}
\item[\underline{Procedure EPM$(H)$}]

\item[Input:]\vspace{.5mm}
  A DM-irreducible bipartite graph $H = (U^+, U^-; E)$ with $|U^+| < |U^-|$.

\item[Output:]\vspace{-.5mm}
  An eligible perfect matching $M \subseteq E$ in $H$.\vspace{1.5mm}

\item[Step 0.]
  Take an arbitrary perfect matching $M \subseteq E$ in $H$,
  and set $W \leftarrow U^- \setminus \partial^-M$.

\item[Step 1.]
  Construct an augmented auxiliary graph $\hH(M) = (U \cup \{r\},\, E \cup \overline{M} \cup E_r)$,
  and set $\hH = (\hU, \hE) \leftarrow \hH(M)$.

\item[Step 2.]
  While $W \neq \emptyset$, do the following.

  \setlength{\leftskip}{5mm}

\item[Step 2.1.]
  Take an exposed vertex $w \in W$, and update $W \leftarrow W \setminus \{w\}$.

\item[Step 2.2.]
  Find two edge-disjoint $r$--$w$ paths in $\hH$, or certify the nonexistence of such paths.

\item[Step 2.3.]
  If $\hH$ has two edge-disjoint $r$--$w$ paths,
  then let $P$ be one of those $r$--$w$ paths,
  and update $M \leftarrow (M \cup E(P)) \setminus M(P)$
  and $\hE \leftarrow (\hE \cup \overline{E(P)}) \setminus (\overline{M(P)} \cup \{e_1\})$
  (see Fig.~\ref{fig:flip}),
  where we denote by $E(P) \subseteq E$ the set of edges that appear in $P$,
  by $M(P) \subseteq M$ the set of edges whose reverse edges appear in $P$,
  and by $e_1 \in E_r$ the first edge of $P$.

  \setlength{\leftskip}{0mm}

\item[Step 3.]
  Return the current perfect matching $M$.
\end{description}\vspace{1mm}

\begin{figure}[tb]
  \begin{center}
    \includegraphics[scale=0.5]{./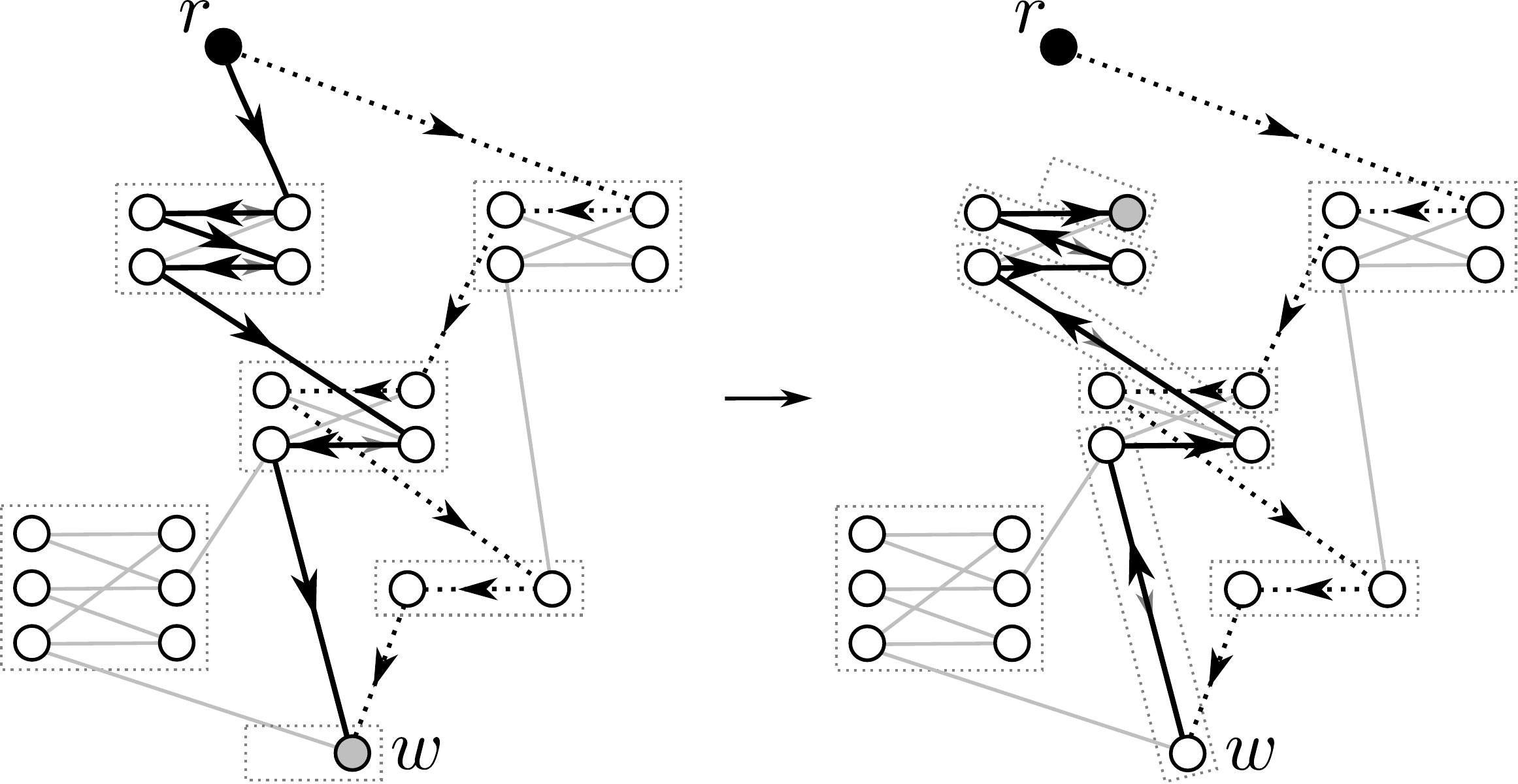}
    \caption{How $\hH$ is updated in Step $2.3$ of Procedure EPM along the bold $r$--$w$ path $P$.}
    \label{fig:flip}
  \end{center}
\end{figure}

The following lemma gives an important observation on Procedure EPM,
whose proof is left to Section~\ref{sec:EPMC}

\begin{lemma}\label{lem:key}
  At the beginning of each iteration of Step~$2$,
  $\hH = (\hU, \hE)$ is an augmented auxiliary graph $\hH(M)$,
  which does not have two edge-disjoint $r$--$w$ paths
  for any $w \in (U^- \setminus \partial^-M) \setminus W$.
\end{lemma}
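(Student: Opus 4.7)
I plan to prove Lemma~\ref{lem:key} by induction on the number of completed iterations of Step~2, maintaining both parts of the invariant ($\hH = \hH(M)$ and no two edge-disjoint $r$--$w$ paths to processed exposed vertices) simultaneously. The base case is immediate: after Step~1, $\hH = \hH(M)$ by construction and $(U^- \setminus \partial^- M) \setminus W = \emptyset$. In the inductive step, the case with no augmentation is trivial, so consider augmenting along a path $P = (r, v_0, u_1, v_1, \dots, u_k, v_k = w)$, producing a new matching $M'$ and edge set $\hE'$. Direct substitution in the update rules yields
\[
  \hE' \;=\; E \;\cup\; \overline{M'} \;\cup\; \bigl(\{r\} \times (S^- \setminus \{v_0\})\bigr),
\]
and one quickly verifies that $M'$ is a perfect matching with $\partial^- M' = (\partial^- M \setminus \{v_0\}) \cup \{w\}$. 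Thus it suffices to prove: (i) $S'^- := S^- \setminus \{v_0\}$ is a valid set of source-component representatives for $H(M')$; and (ii) the path-count invariant is preserved in $\hH'$.

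For~(i), I would show that the source components of $H(M')$ correspond bijectively to those of $H(M)$, except that the one $C_0 \ni v_0$ is ``lost''. The new reverse edges $\overline{E(P)}$ produce a directed walk $w \to u_k \to v_{k-1} \to \cdots \to u_1 \to v_0$ in $H(M')$, so the SCC of $v_0$ in $H(M')$ is reachable from outside and hence is not a source. For any other source component $C \ne C_0$ of $H(M)$, the swapped reverse edges all have tails on $P$, which lies in $C_0$ together with its downstream vertices in the SCC-DAG of $H(M)$; hence no vertex of $C$ gains an incoming edge, and $C$ remains a source of $H(M')$, with its old representative in $S^- \setminus \{v_0\}$. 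A short additional argument rules out creation of spurious new source components.

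For~(ii), fix $w'' \in (U^- \setminus \partial^- M') \setminus W'$; then either $w''$ is a previously processed exposed vertex (so the inductive hypothesis gives at most one edge-disjoint $r$--$w''$ path in $\hH$) or $w'' = v_0$. A Menger / residual-flow argument handles both cases: the transition $\hH \mapsto \hH'$ coincides with pushing one unit of flow along $P$ and forming a close analogue of the residual graph, so two edge-disjoint $r$--$w''$ paths in $\hH'$ could be combined with $P$ to yield two edge-disjoint $r$--$w''$ paths in $\hH$---contradicting the inductive hypothesis when $w''$ was already processed, or, when $w'' = v_0$, contradicting the fact that $rv_0$ is the unique $r$-edge reaching $C_0$ in $\hH$.

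The main obstacle I foresee is the structural SCC analysis in part~(i): although the intuition ``exactly one source vanishes'' is clean, ruling out the emergence of spurious new source components under the reverse-edge swap along $P$ requires careful local reachability arguments. The residual-flow step in~(ii), while conceptually standard, also needs care to reconcile minor discrepancies between $\hH'$ and the textbook residual graph (for instance, the edges in $\{rv_0\} \cup \overline{M(P)}$ are deleted rather than kept at zero capacity).
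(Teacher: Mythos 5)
Your overall strategy --- a simultaneous induction maintaining both invariants, a source-component analysis for the first and a Menger/cut argument for the second --- is the same as the paper's. However, the two places you explicitly defer are precisely where the substance of the proof lies, and in both cases the idea needed to close the gap is absent from your sketch. For part~(i), ruling out spurious new source components is not a short afterthought: it is the step where the precondition of Step~2.3 --- that $\hH$ has \emph{two} edge-disjoint $r$--$w$ paths --- is actually used. The paper's Claim~\ref{cl:H} argues that a hypothetical new source component $H(M')[Y]$ must meet $P$ (otherwise it would already have been a source of $H(M)$ not containing the new exposed vertex); that $P$ cannot leave $Y$ through an edge $e \in E \cup \overline{M}$, since $\bar e$ would then enter $Y$ in $H(M')$; hence $P$ ends inside $Y$, so $w \in Y$; and then the \emph{second} edge-disjoint $r$--$w$ path supplies an entering edge of $Y$ off $P$, which leads to a contradiction whether it lies in $E \cup \overline M$ or in $E_r \setminus \{e_1\}$. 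Without invoking the two-path hypothesis this case cannot be closed.

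For part~(ii), the ``combine with $P$'' cancellation does not go through as stated, and the discrepancies you call minor are not: the forward edges $E(P)$ are \emph{not} removed from $\hH'$ (they remain in $E$) while also acquiring reverses, so $\hH'$ is not the residual graph of $\hH$ with respect to $P$; moreover $P$ ends at $w$ while the two hypothetical paths end at $w'' \neq w$, so the standard same-sink flow-decomposition argument does not apply. The paper instead tracks the cut directly: by induction and Menger's theorem a single edge $e_{w''}$ separates $r$ from $w''$ in $\hH$; choosing it so that the separated set $Y_{w''}$ is \emph{maximal} forces $e_{w''} \in \overline M \cup E_r$ (Claim~\ref{cl:e_w}, which needs its own argument via the perfect matching), and this membership is exactly what guarantees that if $P$ enters $Y_{w''}$ it does so through $e_{w''}$, which is then \emph{deleted} by the update, while at most one new entering edge is created. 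If the separating edge were allowed to lie in $E$, it would survive the update, the cut could grow to two, and the invariant would fail. Your sketch contains neither the maximality device nor the observation that the separating edge must avoid $E$, so the inductive step for~(ii) is not established; similarly, for $w'' = v_0$ the surviving one-edge cut is $\bar e \in \overline{M'}$ for the unique edge $e \in E(P)$ leaving the source component of $v_0$, not anything involving $rv_0$, which is deleted.
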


\subsection{Correctness}\label{sec:EPMC}
We first give a proof of Lemma \ref{lem:key}, and then prove that Procedure EPM indeed outputs an eligible perfect matching.

\subsubsection*{Proof of Lemma~\ref{lem:key}}
We first see that $\hH$ is an augmented auxiliary graph with respect to $M$.

\begin{claim}\label{cl:H}
  After Step $1$, $\hH = (\hU, \hE)$ is always an augmented auxiliary graph $\hH(M)$.
\end{claim}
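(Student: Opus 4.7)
The proof is by induction on the iteration count of Step~$2$. For the base case, Step~$1$ sets $\hH$ to be an augmented auxiliary graph $\hH(M)$ by construction. For the inductive step, assume at the start of some iteration that $\hH$ is an augmented auxiliary graph $\hH(M)$ with $\hE = E \cup \overline{M} \cup E_r$, where $E_r = \{r\} \times S^-$ and $S^-$ contains exactly one vertex per source component of $H(M)$. If Step~$2.2$ certifies nonexistence, nothing changes. If the flip along $P = (r, v_0, u_1, v_1, \ldots, u_k, v_k = w)$ is performed, a direct substitution yields
\[
  (\hE \cup \overline{E(P)}) \setminus (\overline{M(P)} \cup \{e_1\}) \;=\; E \cup \overline{M'} \cup (E_r \setminus \{e_1\}),
\]
where $M' := (M \cup E(P)) \setminus M(P)$. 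Thus the updated $\hH$ has the required form $H(M') + (r, S^- \setminus \{v_0\})$, and it remains to show that $S^- \setminus \{v_0\}$ contains exactly one vertex from each source component of $H(M')$.

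Write $V(P)$ for the non-$r$ vertex set of $P$. The flip only toggles matching-edge directions among $V(P)$ (any matching edge of $M$ or $M'$ incident to some $u_i \in V(P)$ has its $U^-$-endpoint in $V(P)$), so all edges between $V(P)$ and $U \setminus V(P)$ coincide in $\hH(M)$ and $\hH(M')$. Let $C_0$ denote the source component of $v_0$ in $H(M)$ and, for each $v' \in S^- \setminus \{v_0\}$, let $C_{v'}$ denote that of $v'$. Every vertex of $V(P)$ is reachable from $v_0$ along $P$, and no source component is reachable from another, so $V(P) \cap C_{v'} = \emptyset$. Hence the induced structure on $C_{v'}$ together with the edges entering $C_{v'}$ is identical in $H(M)$ and $H(M')$, so $C_{v'}$ persists as a source component of $H(M')$, and distinct $v' \in S^- \setminus \{v_0\}$ label distinct source components of $H(M')$.

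It remains to show that every source component of $H(M')$ contains some vertex of $S^- \setminus \{v_0\}$; equivalently, that $r$ reaches every vertex of $U$ in the updated $\hH$. This is where the hypothesis of two edge-disjoint $r$--$w$ paths is essential: let $P'$ be the second such path. Since $P'$ shares no edge with $P$, none of its edges lies in $\{e_1\} \cup \overline{M(P)}$, so $P'$ remains an $r$--$w$ path in the updated $\hH$. Meanwhile, the sequence $v_k, u_k, v_{k-1}, u_{k-1}, \ldots, u_1, v_0$ is a directed path in $H(M')$ using the new reverse matching edges $v_i u_i \in \overline{M'}$ and the unchanged forward edges $u_i v_{i-1} \in E$, so $r$ reaches every vertex of $V(P)$ in the updated $\hH$ by concatenating $P'$ with this reversal. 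For an arbitrary $x$ reachable from $r$ in $\hH(M)$, pick an $r$--$x$ walk $Q$; since every removed edge has both endpoints in $V(P) \cup \{r\}$, letting $y$ be the last vertex of $Q$ in $V(P) \cup \{r\}$, the subwalk of $Q$ from $y$ to $x$ avoids removed edges, while $y$ itself is reachable from $r$ by the detour just constructed. The delicate step is precisely this last reachability argument, which leverages the fact that all removed edges are confined to the subgraph induced by $V(P) \cup \{r\}$ and that the alternate path $P'$ supplies the missing connection once $e_1$ is deleted.
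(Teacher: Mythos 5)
Your proof is correct. It shares the paper's overall plan: induct on the iterations of Step~2, note that the edge update yields exactly $E \cup \overline{M'} \cup (E_r \setminus \{e_1\})$, and show that the source components of $H(M')$ are precisely those of $H(M)$ other than the one containing the newly exposed vertex $v_0$; moreover, your argument that source components disjoint from $P$ survive (only $e_1$ could bring $P$ into a source component, and every modified edge lies inside $V(P)$) is essentially the paper's. Where you genuinely diverge is in ruling out \emph{new} source components. The paper argues by contradiction: a putative new source component $Y$ of $H(M')$ would have to be entered by $P$ exactly once and contain $w$, whence the second edge-disjoint $r$--$w$ path supplies an entering edge of $Y$ that survives the flip. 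You instead prove directly that $r$ still reaches every vertex of $U$ in the updated $\hH$: the second path $P'$ survives intact, the flipped path can be traversed backwards from $w$ to $v_0$ inside $H(M')$, and any old $r$--$x$ walk can be rerouted through this detour because all deleted edges are confined to $V(P) \cup \{r\}$; consequently every source component of $H(M')$ is entered by some edge of $E_r \setminus \{e_1\}$ and hence coincides with a survivor. The two arguments are dual views of the same fact --- yours is a reachability (flow-augmentation) certificate, the paper's a cut-based contradiction --- and both invoke the two-edge-disjoint-paths hypothesis at exactly the same point. Your version has the minor advantage of re-proving explicitly that every vertex remains reachable from $r$, a property the paper's later claims (e.g., Claim~\ref{cl:e_w}) quietly rely on.
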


\begin{proof}
By Step 1, $\hH$ is initialized as $\hH(M)$.
We show that,
if the current perfect matching $M$ and an augmented auxiliary graph
$\hH = \hH(M) = (U \cup \{r\},\, E \cup \overline{M} \cup E_r)$
are updated to $M'$ and $\hH'$, respectively, in Step 2.3,
then $\hH'$ is an augmented auxiliary graph $\hH(M')$.

Let $v \in U^- \setminus \partial^-M'$ be the new exposed vertex,
and then $e_1 = rv \in E_r$.
Since $H(M') = H + \overline{M'}$ is obtained from $H(M) = H + \overline{M}$
by adding the edges in $\overline{E(P)}$ and removing those in $\overline{M(P)}$,
it suffices to show that the source components of $H(M')$ coincide with
those of $H(M)$ except for that containing $v$.

Let $X \subseteq U$ be the vertex set
of a source component of $H(M)$ with $v \not\in X$.
Then, since no edge enters $X$ in $\hH$ except for one in $E_r \setminus \{e_1\}$,
the $r$--$w$ path $P$ starting $e_1$ is disjoint from $X$.
Hence, $H(M')[X] = H(M)[X]$ remains a source component in $H(M')$ as it is in $H(M)$.

Suppose to the contrary that $H(M')$ has another source component $H(M')[Y]$.
If $P$ is disjoint from $Y$, then $H(M)[Y] = H(M')[Y]$ is a source component of $H(M)$,
and hence $v \in Y$, which however contradicts that $P$ is disjoint from $Y$.
Since $r \not\in Y$, the $r$--$w$ path $P$ must enter $Y$ at least once.
If $P$ leaves $Y$ using an edge $e \in E \cup \overline{M}$,
then the reverse edge $\bar{e}$ enters $Y$ in $H(M')$,
which contradicts that $H(M')[Y]$ is a source component.
Thus $P$ enters $Y$ exactly once, 
and $Y$ must contain the end $w$ of $P$.

Since $\hH$ has two edge-disjoint $r$--$w$ paths,
$Y$ has an entering edge $e$ in $\hH$ that does not appear in $P$.
If $e \in E \cup \overline{M}$, then $e$ remains in $H(M')$ as an edge entering $Y$,
a contradiction.
Otherwise, $e \in E_r \setminus \{e_1\}$.
This however contradicts that $Y$ is disjoint from
any source component of $H(M)$ that does not contain $v$. 
\end{proof}

When the procedure reaches Step 2 for the first time,
we have $W = U^- \setminus \partial^-M$,
and hence there is no choice of $w \in (U^- \setminus \partial^-M) \setminus W = \emptyset$.
We inductively show that, at the beginning of each iteration of Step 2,
$\hH$ does not have two edge-disjoint $r$--$w$ paths
for any $w \in (U^- \setminus \partial^-M) \setminus W$.
That is, we prove that,
if this property holds at the beginning of some iteration of Step 2,
then so does it at the end of the iteration
(equivalently, at the beginning of the next iteration).

Let $w^\ast \in W$ be the exposed vertex chosen in Step 2.1, and $W' := W \setminus \{w^\ast\}$.
If $\hH$ does not have two edge-disjoint $r$--$w^\ast$ paths,
then $M$ and $\hH$ are not updated.
In this case, combining with the induction hypothesis, we see that
$\hH$ does not have two edge-disjoint $r$--$w$ paths
for any $w \in ((U^- \setminus \partial^-M) \setminus W) \cup \{w^\ast\} = (U^- \setminus \partial^-M) \setminus W'$.

Suppose that $\hH = (\hU, \hE)$ has two edge-disjoint $r$--$w^\ast$ paths,
and $M$ and $\hH$ are updated to $M'$ and $\hH'$, respectively,
in Step 2.3. 
Let $v^\ast \in U^- \setminus \partial^-M'$ be the new exposed vertex, i.e., $e_1 = rv^\ast \in E_r$.
We then see $(U^- \setminus \partial^-M') \setminus W' = ((U^- \setminus \partial^-M) \setminus W) \cup \{v^\ast\}$,
and show that $\hH'$ does not have two edge-disjoint $r$--$w$ paths,
separately for $w = v^\ast$ and for $w \in (U^- \setminus \partial^-M) \setminus W$.

\begin{claim}
  $\hH'$ does not have two edge-disjoint $r$--$v^\ast$ paths.
\end{claim}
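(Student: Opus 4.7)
The plan is to exhibit an $r$--$v^\ast$ cut in $\hH'$ of size at most one; by Menger's theorem this rules out two edge-disjoint $r$--$v^\ast$ paths. Let $C \subseteq \hU \setminus \{r\}$ denote the source component of $H(M)$ containing $v^\ast$ and set $A := \hU \setminus C$. The goal is to show that at most one edge of $\hH'$ goes from $A$ to $C$.

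First, I would check that in $\hH$ the unique edge from $A$ to $C$ is $e_1$. Because $C$ is a source component of $H(M)$, no edge of $E \cup \overline{M}$ enters $C$; and because $v^\ast$ is the representative of $C$ in $S^-$, the only edge of $E_r$ entering $C$ is $e_1 = r v^\ast$. In particular the edge-connectivity from $r$ to $v^\ast$ in $\hH$ equals one, which forces $v^\ast \neq w^\ast$ (since $\hH$ admits two edge-disjoint $r$--$w^\ast$ paths by the hypothesis of Step~2.3); hence $P$ has length at least two and must leave $C$ at some step.

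Next, I would compare $\hH'$ with $\hH$. By construction $\hH' \setminus \hH = \overline{E(P)} = \{\bar{e}_2, \ldots, \bar{e}_k\}$ and $\hH \setminus \hH' = \{e_1\} \cup \overline{M(P)}$, and the only removed edge that crossed from $A$ to $C$ in $\hH$ is $e_1$. Hence the edges of $\hH'$ going from $A$ to $C$ are precisely those $\bar{e}_i$ for which $u_{i-1} \in C$ and $w_{i-1} \notin C$, so it suffices to show that at most one index $i \in \{2, \ldots, k\}$ satisfies this.

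The principal subtlety is showing that along $P$ the boundary of $C$ is crossed exactly once. The path $P$ starts at $r \notin C$, enters $C$ at $v^\ast$, and ends at $w^\ast$, which lies in a sink component of $H(M)$ distinct from $C$; hence $P$ must leave $C$. If $P$ were to re-enter $C$, that re-entry would contribute a second edge of $\hH$ from $A$ to $C$, contradicting the first step. So $P$ leaves $C$ at exactly one step, providing at most one qualifying $\bar{e}_i$, and the cut $(A,\, \hU \setminus A)$ in $\hH'$ has size at most one, completing the argument.
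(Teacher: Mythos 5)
Your proof is correct and follows essentially the same route as the paper's: both identify the source component of $H(M)$ containing $v^\ast$, argue that $P$ enters it only via $e_1$ and leaves it exactly once, and conclude that after the update this component has at most one entering edge in $\hH'$, which rules out two edge-disjoint $r$--$v^\ast$ paths. (One small notational slip: $\hH' \setminus \hH = \overline{E(P)}$ consists only of the reverses of the $E$-edges of $P$, not of all edges $e_2, \ldots, e_k$; this does not affect the argument, since the reverses of the $\overline{M}$-edges of $P$ cannot enter a source component anyway.)
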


\begin{proof}
Since $v^\ast$ is in a source component of $H(M)$ that does not contain $w^\ast$,
its vertex set $X \subseteq U$ satisfies that
$v^\ast \in X$, $w^\ast \not\in X$, and $X$ has no entering edge in $H(M)$.
Hence, the $r$--$w^\ast$ path $P$ leaves $X$ exactly once through an edge $e \in E \cup \overline{M}$.
If $e \in \overline{M}$, then the reverse edge $\bar{e} \in M \subseteq E$ enters $X$ in $H(M)$,
a contradiction.
Otherwise, $e \in E$, which implies that $X$ has
a unique entering edge $\bar{e} \in \overline{M'}$ in $\hH'$.
Then, $v^\ast$ is not reachable from $r$ in $\hH' - \bar{e}$,
and hence $\hH'$ cannot have two edge-disjoint $r$--$v^\ast$ paths.
\end{proof}

In what follows, we show that $\hH'$ does not have two edge-disjoint $r$--$w$ paths
for any $w \in (U^- \setminus \partial^-M) \setminus W$.
Fix $w \in (U^- \setminus \partial^-M) \setminus W$.
Then, by the induction hypothesis and Menger's theorem \cite{Menger1927},
there exists an edge $e_w \in \hE$ such that $w$ is not reachable from $r$ in $\hH - e_w$.
One can choose such an edge so that $e_w \in \hE \setminus E = \overline{M} \cup E_r$ as follows.

\begin{claim}\label{cl:e_w}
  Choose an edge $e_w \in \hE$ so that
  the set $Y_w$ of vertices that are not reachable from $r$ in $\hH - e_w$
  contains $w$ and is maximal.
  Then, $e_w \not\in E$.
\end{claim}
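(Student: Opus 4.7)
The plan is to argue by contradiction: suppose $e_w \in E$, so $e_w = uw'$ with $u \in U^+$ and $w' \in U^-$, and exhibit a strictly larger set $Y'_w \supsetneq Y_w$ still containing $w$ that can be disconnected from $r$ by a single edge deletion, contradicting the maximality of $Y_w$.

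First I would extract the structure of the cut. By construction of $\hH(M)$, the vertex $r$ reaches every vertex of $\hH$, since $S^-$ meets every source component of $H(M)$. Consequently, the condition that removing $e_w$ renders precisely $Y_w$ unreachable from $r$ forces $e_w$ to be the unique edge of $\hH$ from $\hU \setminus Y_w$ into $Y_w$; in particular $u \notin Y_w$ and $w' \in Y_w$.

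Next I would exploit that $M$ is a perfect matching and $|U^+| < |U^-|$, so $|M| = |U^+|$ and every vertex of $U^+$ is matched. Let $w'' \in U^-$ be the $M$-partner of $u$ and set $f := w''u \in \overline{M}$. The only edges of $\hH$ entering a vertex of $U^+$ are reverse matching edges, because $r$ points only into $U^-$ and $E \subseteq U^+ \times U^-$; hence $f$ is the unique edge of $\hH$ entering $u$. Since $u \notin Y_w$, any $r$--$u$ path in $\hH - e_w$ must end with $f$, which forces $w'' \notin Y_w$.

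Finally I would set $Y'_w := Y_w \cup \{u\}$ and verify that $f$ is the unique edge of $\hH$ entering $Y'_w$ from $\hU \setminus Y'_w$: the only edge of $\hH$ entering $Y_w$ is $e_w$, whose tail $u$ now lies inside $Y'_w$; and the only edge of $\hH$ entering $u$ is $f$, whose tail $w''$ is neither in $Y_w$ nor equal to $u$, hence lies in $\hU \setminus Y'_w$. Therefore $\hH - f$ leaves $Y'_w$ unreachable from $r$, contradicting the maximality of $Y_w$ since $w \in Y_w \subsetneq Y'_w$. The main subtlety is the initial step, upgrading ``removing $e_w$ disconnects $Y_w$'' into uniqueness of $e_w$ as the edge of $\hH$ entering $Y_w$; once that is in place, the matching structure on the $U^+$ side makes the enlargement to $Y'_w$ essentially forced.
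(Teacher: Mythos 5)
Your proof is correct and follows essentially the same route as the paper's: both exploit that the unique edge of $\hH$ entering a vertex $u \in U^+$ is its reverse matching edge, and enlarge $Y_w$ to $Y_w \cup \{u\}$ by rechoosing the cut edge as that reverse matching edge, contradicting the maximality of $Y_w$. The only difference is that the paper splits into the cases $e_w \in M$ and $e_w \notin M$ (disposing of the former by a separate path argument), whereas your derivation of $w'' \notin Y_w$ from the reachability of $u$ in $\hH - e_w$ covers both cases uniformly --- a mild streamlining of the same argument.
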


\begin{proof}
By the definition, only $e_w$ enters $Y_w$ in $\hH$.
Suppose to the contrary that $e_w = uv \in E$
for some $u \in U^+ \setminus Y_w^+$ and $v \in Y_w^-$.
Since $M$ is a perfect matching in $H$,
there exists an edge $e' = v'u \in \overline{M}$ as well as $uv' \in M$ for some $v' \in U^-$.
If $v' \neq v$, then $v' \in U^- \setminus Y_w^-$.
Since only $e'$ enters $u \in U^+$ in $\hH$,
we can expand $Y_w$ to $Y_w \cup \{u\}$ by rechoosing $e_w$ as $e'$,
which contradicts the maximality of $Y_w$.
Otherwise, $e_w = uv \in M$.
Since only $e' = \bar{e}_w$ enters $u \in U^+$ in $\hH$,
every $r$--$u$ path in $\hH$ must intersect $v$,
and hence any $r$--$v$ path $Q$ in $\hH$ cannot traverse $e_w$.
Such a path $Q$ exists (since every vertex is reachable from $r$ in $\hH$
by the definition of an augmented auxiliary graph)
and enters $Y_w$ through an edge different from $e_w$ in $\hH$,
a contradiction.
\end{proof}

If $P$ is disjoint from $Y_w$,
then $w \in Y_w$ is not reachable from $r$ also in $\hH' - e_w$,
and hence $\hH'$ cannot have two edge-disjoint $r$--$w$ paths.
Otherwise, $P$ enters $Y_w$ through the edge $e_w \in \overline{M} \cup E_r$,
and leaves $Y_w$ at most once through an edge $e$.
Then, $e_w$ is no longer in $\hH'$,
and $Y_w$ has at most one new entering edge $\bar{e}$.
This also concludes that $\hH'$ cannot have two edge-disjoint $r$--$w$ paths.

\subsubsection*{Eligibility of output}
We here show that the output of Procedure EPM$(H)$ is indeed an eligible perfect matching. 
Suppose that EPM$(H)$ returns a perfect matching $M \subseteq E$ in $H$,
and let $\hH = (\hU, \hE)$ be the augmented auxiliary graph $\hH(M)$ when EPM$(H)$ halts,
where $\hU = U \cup \{r\}$ and $\hE = E \cup \overline{M} \cup E_r$.
Then, by Lemma~\ref{lem:key} and Menger's theorem \cite{Menger1927},
for any $w \in U^- \setminus \partial^-M$,
there exists an edge $e_w \in \hE$ such that $w$ is not reachable from $r$ in $\hH - e_w$.
Choose such an edge $e_w$ as in Claim~\ref{cl:e_w},
i.e., so that the set $Y_w$ of vertices that are not reachable from $r$ in $\hH - e_w$ is maximal.
We then see the following property.

\begin{claim}\label{cl:maximal}
  For any exposed vertices $w_1, w_2 \in U^- \setminus \partial^-M$,
  either $Y_{w_1} = Y_{w_2}$ or $Y_{w_1} \cap Y_{w_2} = \emptyset$.
\end{claim}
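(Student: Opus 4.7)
The plan is to exploit the fact that $Y_w$ is characterized by a single entering edge together with the maximality condition imposed in Claim~\ref{cl:e_w}. First I would reestablish (or at least invoke as implicit in the proof of Claim~\ref{cl:e_w}) that $e_w$ is the \emph{unique} edge entering $Y_w$ in $\hH$: if some other edge $e'$ entered $Y_w$, its tail would lie outside $Y_w$ and hence be reachable from $r$ in $\hH$; moreover, since the tail itself lies outside $Y_w$, it is reachable from $r$ by a path avoiding $e_w$, and then appending $e'$ would show that the head is reachable from $r$ in $\hH - e_w$, contradicting that the head lies in $Y_w$. Thus $d(Y_w) = 1$, where $d(X)$ counts the number of edges entering $X$ in $\hH$.

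Next I would combine this with the well-known submodularity of the cut function $d$ on a directed graph: for any $X, Y \subseteq \hU$,
\[
d(X \cup Y) + d(X \cap Y) \;\le\; d(X) + d(Y).
\]
Applying this with $X = Y_{w_1}$ and $Y = Y_{w_2}$ yields $d(Y_{w_1} \cup Y_{w_2}) + d(Y_{w_1} \cap Y_{w_2}) \le 2$. Now assume $Y_{w_1} \cap Y_{w_2} \ne \emptyset$ for contradiction with disjointness. Because $r \notin Y_{w_1}$ and every vertex is reachable from $r$ in the augmented auxiliary graph $\hH$, any nonempty subset of $\hU \setminus \{r\}$ has at least one entering edge; hence both $d(Y_{w_1} \cap Y_{w_2}) \ge 1$ and $d(Y_{w_1} \cup Y_{w_2}) \ge 1$. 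The inequality above then forces both values to equal $1$.

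Finally I would use the maximality of $Y_{w_1}$ and $Y_{w_2}$. Let $e'$ be the single edge entering $Y_{w_1} \cup Y_{w_2}$. Removing $e'$ from $\hH$ disconnects every vertex of $Y_{w_1} \cup Y_{w_2}$ from $r$, so the set of vertices unreachable from $r$ in $\hH - e'$ contains $Y_{w_1} \cup Y_{w_2}$ and, in particular, contains $w_1$. By the maximality in the definition of $Y_{w_1}$ (see Claim~\ref{cl:e_w}), this forces $Y_{w_1} \supseteq Y_{w_1} \cup Y_{w_2}$, i.e.\ $Y_{w_2} \subseteq Y_{w_1}$. The symmetric argument using $w_2$ gives $Y_{w_1} \subseteq Y_{w_2}$, and therefore $Y_{w_1} = Y_{w_2}$.

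The reasoning is short and the main obstacle is conceptual rather than computational: one must notice that the previously proved uniqueness of the entering edge of $Y_w$ (implicit in the proof of Claim~\ref{cl:e_w}) is exactly what turns submodularity of $d$ into the posynomial cross-inequality needed here, and that the ``maximality'' clause has been stated precisely so that any larger unreachable set witnessed by any single edge must coincide with $Y_{w_1}$. With these observations the cross/uncross dichotomy falls out immediately.
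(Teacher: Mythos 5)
Your proof is correct, and it reaches the conclusion by a genuinely different (and arguably cleaner) route than the paper. The paper argues by contradiction with an explicit case analysis: assuming $Y_{w_1}\neq Y_{w_2}$ and $Y_{w_1}\cap Y_{w_2}\neq\emptyset$, it first rules out nesting via maximality, then shows the intersection must have an entering edge (using the fact that a set with no entering edge would contain a source component of $H(M)$, into which $E_r$ sends an edge from $r$), observes that this edge must be $e_{w_1}$ or $e_{w_2}$, and finally splits into cases according to whether that edge enters the union, deriving a contradiction in each case. You instead package the same ingredients into a standard uncrossing argument: the uniqueness of the entering edge of $Y_w$ (which the paper also records, as the first line of the proof of Claim~\ref{cl:e_w}) gives $d(Y_w)=1$; submodularity of the in-degree function plus the fact that every nonempty subset of $\hU\setminus\{r\}$ has positive in-degree (since every vertex is reachable from $r$ in $\hH$) forces $d(Y_{w_1}\cup Y_{w_2})=d(Y_{w_1}\cap Y_{w_2})=1$; and the single edge entering the union then exhibits, for each of $w_1$ and $w_2$, a candidate unreachable set containing $Y_{w_1}\cup Y_{w_2}$, so maximality pins both $Y_{w_1}$ and $Y_{w_2}$ down to that same set. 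The one point worth making fully explicit is the use of maximality at the end: the unreachable set $U'$ of $\hH-e'$ contains $Y_{w_1}$ and $w_1$, so maximality of $Y_{w_1}$ forces $U'=Y_{w_1}$, whence $Y_{w_1}=U'\supseteq Y_{w_1}\cup Y_{w_2}$; as you note, the symmetric statement for $w_2$ finishes the argument. What your approach buys is the elimination of the case analysis and of the detour through source components; what the paper's buys is that it never needs to invoke submodularity of the cut function as a black box, keeping the argument self-contained at the level of individual paths and edges.
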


\begin{proof}
Let $w_1, w_2 \in U^- \setminus \partial^-M$ be distinct vertices,
and suppose to the contrary that $Y_{w_1} \neq Y_{w_2}$ and $Y_{w_1} \cap Y_{w_2} \neq \emptyset$.
We then have $e_{w_1} \neq e_{w_2}$.
If $Y_{w_1} \subsetneq Y_{w_2}$ or $Y_{w_2} \subsetneq Y_{w_1}$,
then we can expand the included one to the including one
by rechoosing $e_{w_1}$ or $e_{w_2}$ as the other one, respectively,
which contradicts the maximality of $Y_{w_1}$ and $Y_{w_2}$.
Thus, $Y_{w_1} \setminus Y_{w_2} \neq \emptyset \neq Y_{w_2} \setminus Y_{w_1}$.

Suppose that no edge enters $Y_{w_1} \cap Y_{w_2} \neq \emptyset$ in $\hH$.
Then, $\hH[Y_{w_1} \cap Y_{w_2}]$ has some source component of $\hH[U] = H(M)$,
which contradicts that $E_r$ contains an edge from $r \not\in Y_{w_1} \cap Y_{w_2}$
to each source component of $H(M)$.

Thus, $\hH$ has an edge $e$ entering $Y_{w_1} \cap Y_{w_2}$, which must be $e_{w_1}$ or $e_{w_2}$.
If $e$ enters $Y_{w_1} \cup Y_{w_2}$, then $e_{w_1} = e = e_{w_2}$, a contradiction.
Otherwise, assume that $e = e_{w_1}$ leaves $Y_{w_2} \setminus Y_{w_1}$ without loss of generality.
In this case, since $r \xrightarrow{\hH} w_2 \in Y_{w_1} \cup Y_{w_2}$,
the other edge $e_{w_2}$ must enter $Y_{w_1} \cup Y_{w_2}$.
This implies $Y_{w_2} \supseteq Y_{w_1} \cup Y_{w_2}$,
which contradicts $Y_{w_1} \setminus Y_{w_2} \neq \emptyset$.
\end{proof}

\begin{figure}[tb]
  \begin{center}
    \includegraphics[scale=0.6]{./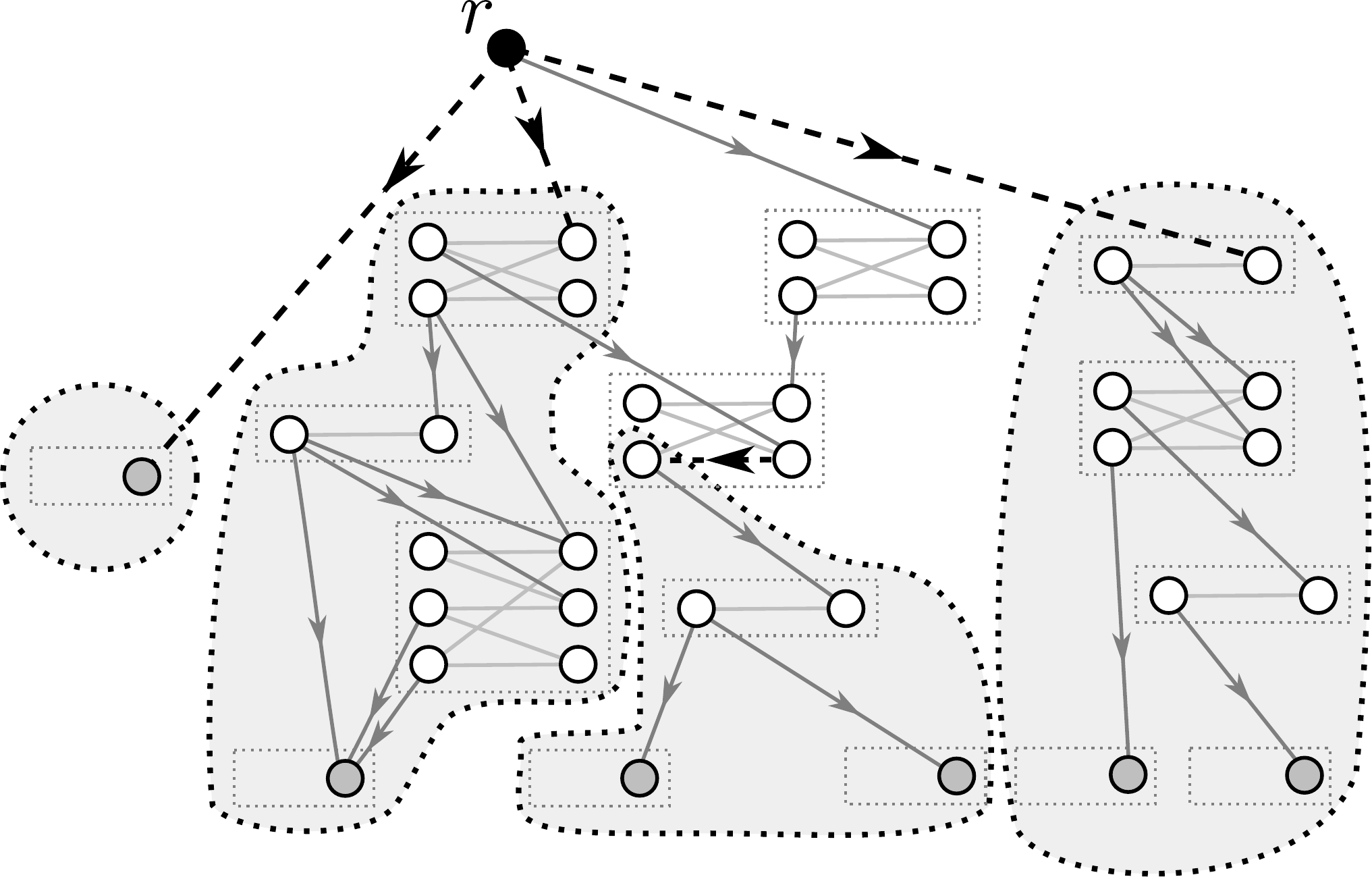}
    \caption{The subpartition of $U$ induced by $Y_w$ $(w \in U^- \setminus \partial^-M)$.}
    \label{fig:cY}
  \end{center}
\end{figure}

By Claim~\ref{cl:maximal},
$\{\, Y_w \mid w \in U^- \setminus \partial^-M \,\}$ is a subpartition of $U$
(see Fig.~\ref{fig:cY}).
Let $Y := \bigcup_{w \in U^- \setminus \partial^-M} Y_w$, and define
$\cX^- := \cY^- \cup \cZ^-$ as follows:
\begin{align*}
  \cY^- &:= \{\, Y_w^- \mid w \in U^- \setminus \partial^-M \,\},\\[1mm]
  \cZ^- &:= \{\, Z^- \mid H(M)[Z]~\text{is a source component of}~H(M)~\text{and}~Z \cap Y = \emptyset \,\},
\end{align*}
where recall that $X^+ := X \cap U^+$ and $X^- := X \cap U^-$ for $X \subseteq U$.
This $\cX^-$ is indeed a subpartition of $V^-$,
and we prove 
$\tau_H(\cX^-) = |U^-| - |U^+| + s(H(M))$.

By the definition \eqref{eq:duality}, 
we see $\tau_H(\cX^-) = \tau_H(\cY^-) + \tau_H(\cZ^-)$.
We first calculate $\tau_H(\cY^-)$
by evaluating $|Y_w^-| - |\Gamma_{H}(Y_w^-)| + 1$
for each exposed vertex $w \in U^- \setminus \partial^-M$.
Fix $w \in U^- \setminus \partial^-M$,
and let $T_w^- := Y_w \cap (U^- \setminus \partial^-M)$.
Then, by Claim~\ref{cl:maximal}, we have $Y_{w'} = Y_w$ for every $w' \in T_w^-$,
and $\{\, T_w^- \mid w \in U^- \setminus \partial^-M \,\}$
is a partition of $U^- \setminus \partial^-M$.
By Claim~\ref{cl:e_w},
we consider the following two cases separately:
when $e_w \in E_r$ and when $e_w \in \overline{M}$.

\begin{claim}\label{cl:E_r}
  If $e_w \in E_r$, then $|Y_w^-| - |\Gamma_{H}(Y_w^-)| + 1 = |T_w^-| + 1$.
\end{claim}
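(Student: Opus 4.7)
The plan is to exploit the defining property that $e_w \in E_r$ is the unique edge entering $Y_w$ in $\hH$, combined with the fact that $M$ is a perfect matching on $U^+$. The whole claim reduces to showing that $\Gamma_H(Y_w^-) = Y_w^+$ and that $|\partial^-M \cap Y_w^-| = |Y_w^+|$; from these two equalities the numeric identity follows immediately.

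First I would rule out $E$-edges entering $Y_w$. Any edge of $E$ goes from $U^+$ to $U^-$, so an $E$-edge whose head lies in $Y_w^-$ would, if its tail lay outside $Y_w^+$, be a second entering edge of $Y_w$ (different from $e_w \in E_r$). Hence every $E$-neighbor of $Y_w^-$ lies in $Y_w^+$, giving $\Gamma_H(Y_w^-) \subseteq Y_w^+$. In particular, no $M$-edge has its head in $Y_w^-$ and its tail outside $Y_w^+$.

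Next I would rule out $\overline{M}$-edges entering $Y_w$. Fix $u \in Y_w^+$; since $M$ is a perfect matching on $U^+$ (because $|U^+| < |U^-|$), there is a unique $v \in U^-$ with $uv \in M$, so $vu \in \overline{M}$ is the unique $\overline{M}$-edge entering $u$. The hypothesis $e_w \in E_r$ forbids this edge from entering $Y_w$ from outside, forcing $v \in Y_w^-$. This yields the reverse inclusion $Y_w^+ \subseteq \Gamma_H(Y_w^-)$, and together with the previous step $\Gamma_H(Y_w^-) = Y_w^+$. Moreover, $M$ restricted to $Y_w$ bijects $Y_w^+$ with $\partial^-M \cap Y_w^-$ (injective because $M$ is a matching, surjective because no $M$-edge enters $Y_w^-$ from outside $Y_w^+$ by the previous paragraph).

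It follows that $|\partial^-M \cap Y_w^-| = |Y_w^+|$, so $|T_w^-| = |Y_w^- \setminus \partial^-M| = |Y_w^-| - |Y_w^+|$, and combining with $|\Gamma_H(Y_w^-)| = |Y_w^+|$ gives
\[|Y_w^-| - |\Gamma_H(Y_w^-)| + 1 = |Y_w^-| - |Y_w^+| + 1 = |T_w^-| + 1.\]
I do not anticipate any real obstacle: the proof is just a careful case analysis of which edge types can enter $Y_w$ under the hypothesis $e_w \in E_r$, combined with the perfectness of $M$ on $U^+$.
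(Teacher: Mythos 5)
Your proof is correct and follows essentially the same route as the paper's: both arguments rest on the fact that $e_w$ is the unique edge of $\hH$ entering $Y_w$ and that $e_w$ belongs to $E_r$ (so it is neither an $E$-edge nor an $\overline{M}$-edge), which forces $\Gamma_H(Y_w^-) = Y_w^+$ and confines the $M$-partners of $Y_w^+$ to $Y_w^-$. The only minor difference is that you obtain $|Y_w^-| = |Y_w^+| + |T_w^-|$ directly from the bijection induced by $M$ on $Y_w$, whereas the paper reads the same identity off the strongly connected components of $H(M)[Y_w]$ via Observation~\ref{obs:source_sink}; both derivations are sound.
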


\begin{proof}
In this case, no edge enters $Y_w$ in $H(M)$.
Hence, each strongly connected component of $H(M)[Y_w]$ is also one of $H(M)$.
Since each sink component of $H(M)$
is a single vertex in $U^- \setminus \partial^-M$ (Observation~\ref{obs:source_sink})
and any other strongly connected component of $H(M)$ is balanced,
we see $|Y_w^-| = |Y_w^+| + |T_w^-|$.
Since only the edge $e_w \in E_r$ enters $Y_w$ in $\hH$
and every vertex in $Y_w$ is reachable in $\hH - e_w$ to some vertex in $T_w^- \subseteq Y_w^-$,
we see $\Gamma_H(Y_w^-) = Y_w^+$, and hence $|Y_w^-| - |\Gamma_{H}(Y_w^-)| + 1 = |T_w^-| + 1$.
\end{proof}

\begin{claim}\label{cl:M_0}
  If $e_w \in \overline{M}$,
  then $|Y_w^-| - |\Gamma_{H}(Y_w^-)| + 1 = |T_w^-|$.
\end{claim}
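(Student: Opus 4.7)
The plan is to parallel the proof of Claim~\ref{cl:E_r}, with modifications accounting for the fact that $e_w$ now enters $Y_w$ at a vertex of $U^+$ rather than through $r$. Since $e_w \in \overline{M}$, write $e_w = vu$ with $uv \in M$, where necessarily $u \in Y_w^+$ and $v \in U^- \setminus Y_w^-$ (because $\overline{M}$-edges go from $U^-$ to $U^+$, and $e_w$ enters $Y_w$). Recall that $e_w$ is the \emph{unique} edge of $\hH$ entering $Y_w$ from outside, by the choice of $Y_w$.

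First I would analyze the matching inside $Y_w$. For any $u' \in Y_w^+ \setminus \{u\}$, its $M$-partner $v' \in U^-$ gives rise to a reverse edge $v'u' \in \overline{M} \subseteq \hE$ entering $u' \in Y_w$; since this edge is not $e_w$, we must have $v' \in Y_w^-$. Together with the exposed vertices in $Y_w^-$, which form exactly $T_w^-$, this yields $|Y_w^-| = (|Y_w^+| - 1) + |T_w^-|$. Next, I would show $\Gamma_H(Y_w^-) \subseteq Y_w^+$: any forward edge $u''v' \in E$ with $v' \in Y_w^-$ and $u'' \notin Y_w^+$ would be a second edge entering $Y_w$ in $\hH$, a contradiction. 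Combined with the matching analysis, $Y_w^+ \setminus \{u\} \subseteq \Gamma_H(Y_w^-) \subseteq Y_w^+$.

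The main obstacle is establishing the reverse inclusion $u \in \Gamma_H(Y_w^-)$, i.e., that $u$ has at least one neighbor in $Y_w^-$; this does not follow from matching structure alone. I would argue by contradiction, assuming $\Gamma_H(\{u\}) \cap Y_w^- = \emptyset$, and examine the entering edges of $Y_w \setminus \{u\}$ in $\hH$. Such an edge can only come (a) from $\hU \setminus Y_w$, (b) from $u$, or (c) from $r$. Case (a) contributes only $e_w$, which enters $u$ and hence not $Y_w \setminus \{u\}$. Case (b) is empty by the contradiction hypothesis. For case (c), I would first observe $Y_w^- \cap S^- = \emptyset$: any $v' \in Y_w^- \cap S^-$ satisfies $rv' \in E_r$, and since $e_w \in \overline{M}$ is distinct from $rv'$, the edge $rv'$ survives in $\hH - e_w$, making $v'$ reachable from $r$ there, contradicting $v' \in Y_w$. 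Hence $Y_w \setminus \{u\}$ has \emph{no} entering edge in $\hH$; but every vertex of $\hU$ is reachable from $r$ in $\hH$ (source component representatives in $S^-$ cover all of $U$ via $H(M)$-reachability), so $Y_w \setminus \{u\} = \emptyset$. This contradicts $w \in Y_w^-$ and $w \neq u$, completing the argument.

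Having established $\Gamma_H(Y_w^-) = Y_w^+$, a routine substitution gives
\[
|Y_w^-| - |\Gamma_H(Y_w^-)| + 1 \;=\; |Y_w^-| - |Y_w^+| + 1 \;=\; |T_w^-|,
\]
as claimed.
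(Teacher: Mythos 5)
Your proof is correct, and its skeleton matches the paper's: the matching $M$ restricted to $Y_w$ pairs $Y_w^+ \setminus \{u\}$ with $Y_w^- \setminus T_w^-$, and $\Gamma_H(Y_w^-) = Y_w^+$ then gives the count. (One small point: the equality $|Y_w^-| = (|Y_w^+|-1)+|T_w^-|$ needs the correspondence in both directions; the direction you leave implicit --- that a matched $v' \in Y_w^-$ has its $M$-partner in $Y_w^+\setminus\{u\}$ --- does follow from your later observation that no edge of $E$ enters $Y_w$ from outside, together with $u$ being matched to $v \notin Y_w$, but it is worth saying explicitly.) Where you genuinely depart from the paper is the one step that requires real work, namely $u \in \Gamma_H(Y_w^-)$. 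The paper dismisses this with ``in the same way as the previous proof,'' i.e., by the assertion made in the proof of Claim~\ref{cl:E_r} that every vertex of $Y_w$ reaches some vertex of $T_w^-$ in $\hH - e_w$; since such a path cannot leave $Y_w$, its first edge out of $u$ lands in $Y_w^-$. You instead argue by contradiction: if $u$ had no neighbor in $Y_w^-$, then $Y_w \setminus \{u\}$ would have no entering edge in $\hH$ at all (the unique entering edge $e_w$ of $Y_w$ has head $u$; edges out of $u$ are excluded by hypothesis; and $E_r$ misses $Y_w^-$ because $e_w \notin E_r$), hence would be unreachable from $r$ and therefore empty, contradicting $w \in Y_w \setminus \{u\}$. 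This self-contained cut argument is arguably more transparent: it isolates exactly why the $\overline{M}$ case needs something beyond the matching-partner argument that suffices for every other vertex of $Y_w^+$, whereas the paper's shorter route leans on a reachability statement whose own justification is left implicit in the proof of Claim~\ref{cl:E_r}.
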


\begin{proof}
In this case,
$e_w = vu \in \overline{M}$ for some $v \in U^- \setminus Y_w^-$ and $u \in Y_w^+$.
Since only the edge $e_w$ enters $Y_w$ in $\hH$ and $M$ is a perfect matching in $H$,
any $u' \in Y_w^+ \setminus \{u\} \subseteq U^+$ is matched
with some $v' \in Y_w^- \setminus T_w^-$ by $M$, and vice versa.
Hence, $|Y_w^-| = |Y_w^+| - 1 + |T_w^-|$.
We observe $\Gamma_{H}(Y_w^-) = Y_w^+$ in the same way as the previous proof, and hence
$|Y_w^-| - |\Gamma_{H}(Y_w^-)| + 1 = |T_w^-|$.
\end{proof}

Let $\alpha := \bigl|\{\, Y_w \mid w \in U^- \setminus \partial^-M~\text{with}~e_w \in E_r \,\}\bigr|$.
By Claims~\ref{cl:E_r} and \ref{cl:M_0}, we see
\[\tau_H(\cY^-) = \sum_{Y_w^- \in \cY^-} |T_w^-| + \alpha = |U^- \setminus \partial^-M| + \alpha = |U^-| - |U^+| + \alpha.\]
Since the corresponding source component $H(M)[Z]$
is balanced for each $Z^- \in \cZ^-$
(which is disjoint from $Y \supseteq U^- \setminus \partial^-M$),
we see $\tau_H(\cZ^-) = |\cZ^-|$ (cf. Section~\ref{sec:Equivalence_SC}).
Hence, the next claim leads to $\alpha + \tau_H(\cZ^-) = s(H(M))$,
which completes the proof.

\begin{claim}\label{cl:alpha}
  $\alpha = \bigl|\{\, Z \mid H(M)[Z]~\text{\rm is a source component and}~Z \cap Y \neq \emptyset \,\}\bigr|$.
\end{claim}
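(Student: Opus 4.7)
The plan is to exhibit a bijection $\phi$ between the set $A := \{\, Y_w \mid w \in U^- \setminus \partial^-M,\ e_w \in E_r \,\}$ (whose cardinality is $\alpha$) and the family $B$ of source components $Z$ of $H(M)$ with $Z \cap Y \neq \emptyset$, which will immediately give $\alpha = |A| = |B|$. Throughout I use the fact that, by definition of $Y_w$, the edge $e_w$ is the unique edge of $\hat{H}$ entering $Y_w$ (otherwise another entering edge would provide a path from $r$ into $Y_w$ in $\hat{H} - e_w$).

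First I would define $\phi$. For each $Y_w \in A$, write $e_w = rv$ with $v \in S^-$; then $v$ lies in a unique source component $Z_w$ of $H(M)$, and I set $\phi(Y_w) := Z_w$. To see $\phi(Y_w) \in B$, the key observation is $Z_w \subseteq Y_w$: since the unique edge of $\hat{H}$ entering $Y_w$ is $e_w \in E_r$, no edge of $E \cup \overline{M}$ (i.e., no edge of $H(M)$) enters $Y_w$ at all; hence the entire SCC of $v$ in $H(M)$, namely $Z_w$, stays inside $Y_w$, and in particular $Z_w \cap Y \neq \emptyset$.

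For surjectivity, let $Z \in B$ and pick some $w$ with $Z \cap Y_w \neq \emptyset$. The main step is to establish $Z \subseteq Y_w$. Because $e_w$ is the only edge of $\hat{H}$ entering $Y_w$, it cannot lie inside $H(M)[Y_w]$, and therefore not inside $H(M)[Z]$; so all edges of $H(M)[Z]$ survive in $\hat{H} - e_w$. If any $z' \in Z$ were reachable from $r$ in $\hat{H} - e_w$, then by the strong connectivity of $H(M)[Z]$ every vertex of $Z$—including some vertex in $Z \cap Y_w$—would also be reachable, contradicting the definition of $Y_w$. Hence $Z \subseteq Y_w$. Now, $Z$ being a source component of $H(M)$ contains a unique vertex $v_Z \in Z \cap S^-$, and the edge $rv_Z \in E_r$ enters $Y_w$ (as $r \notin Y_w$); by uniqueness, $e_w = rv_Z \in E_r$, so $Y_w \in A$ and $\phi(Y_w) = Z$.

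Injectivity is then immediate: distinct elements of $A$ are pairwise disjoint subsets of $U$ by Claim~\ref{cl:maximal}, so they cannot contain a common source component. The main technical obstacle is the strong-connectivity step forcing $Z \subseteq Y_w$ whenever they meet; the remaining pieces are bookkeeping with the uniqueness of the edge entering $Y_w$ in $\hat{H}$ and the structure of $S^-$.
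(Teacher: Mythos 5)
Your bijection is set up sensibly, and the well-definedness step ($Z_w \subseteq Y_w$ when $e_w \in E_r$, since then no edge of $H(M)$ enters $Y_w$) and the injectivity step (via the disjointness from Claim~\ref{cl:maximal}) are fine. The gap is in surjectivity, at the sentence ``it cannot lie inside $H(M)[Y_w]$, and therefore not inside $H(M)[Z]$.'' That inference is a non sequitur: before you know $Z \subseteq Y_w$, nothing prevents $e_w$ from being an edge of $\overline{M}$ whose tail lies in $Z \setminus Y_w$ and whose head lies in $Z \cap Y_w$ --- an \emph{internal} edge of the source component $Z$ that happens to be the unique edge entering $Y_w$. (Being a source component only forbids edges entering $Z$ from outside; it says nothing about internal edges crossing the boundary of $Y_w$.) In that configuration the tail of $e_w$ is a vertex of $Z$ that \emph{is} reachable from $r$ in $\hH - e_w$, so your strong-connectivity argument does not force $Z \subseteq Y_w$, and the subsequent identification $e_w = rv_Z \in E_r$ fails.

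Ruling out exactly this configuration is the substantive content of the paper's proof of the $e_w \in \overline{M}$ case, and it requires the maximality of $Y_w$ from Claim~\ref{cl:e_w}, which your argument never invokes. The paper's route: take the unique $v' \in S^- \cap Z$ with $e' = rv' \in E_r$; if $v' \in Y_w$ then two distinct edges enter $Y_w$, a contradiction; otherwise a $v'$--$z$ path inside the strongly connected $H(M)[Z]$ (for $z \in Z \cap Y_w$) must traverse $e_w$, so $e_w$ is internal to $Z$, and then one checks that $e'$ is the \emph{only} edge of $\hH$ entering $Y_w \cup Z$, so rechoosing $e_w$ as $e'$ would enlarge $Y_w$ to $Y_w \cup Z \supsetneq Y_w$ --- contradicting maximality. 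You need to add this argument (or an equivalent use of maximality) to close the surjectivity step; without it the claim is not established.
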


\begin{proof}
We show that, for each $w \in U^- \setminus \partial^-M$,
exactly one source component of $H(M)$ intersects $Y_w$ if $e_w \in E_r$,
and so does no source component if $e_w \in \overline{M}$.
Since any strongly connected component of $H(M)[Y_w]$ is also one of $H(M)$ when $e_w \in E_r$,
a unique source component intersecting $Y_w$ is included in $H(M)[Y_w]$,
and hence this is sufficient for the claim.
Fix $w \in U^- \setminus \partial^-M$.

Suppose that $e_w = rv \in E_r$ for some $v \in S^- \cap Y_w^-$.
By the definition of $S^-$, the vertex $v$ is in a source component of $H(M)$.
Suppose to the contrary that there exists another source component of $H(M)$ intersecting $Y_w$.
Then, such a source component must be included in $H(M)[Y_w]$,
and hence there exists another edge $rv' \in E_r$ with $v' \in Y_w^-$.
This contradicts that only $e_w$ enters $Y_w^-$ in $\hH$.

Suppose that $e_w = vu \in \overline{M}$ for some $v \in U^- \setminus Y_w^-$ and $u \in Y_w^+$,
and to the contrary that there exists a source component $H(M)[Z]$ of $H(M)$
with $Z \cap Y_w \neq \emptyset$.
Then, by the definition of $S^-$,
there exists a vertex $v' \in S^- \cap Z$ with $e' = rv' \in E_r$.
If $v' \in Y_w$, then $e'$ enters $Y_w$ in $\hH$,
which contradicts that only $e_w \neq e'$ enters $Y_w$.
Otherwise, since $H(M)[Z]$ is strongly connected,
for any vertex $z \in Z \cap Y_w \neq \emptyset$,
there exists a $v'$--$z$ path in $H(M)[Z]$.
Such a path must traverse $e_w = vu$
(since only $e_w$ enters $Y_w$), and hence $\{u, v\} \subseteq Z$.
In this case, we can expand $Y_w$ to $Y_w \cup Z \supsetneq Y_w$
by rechoosing $e_w$ as $e'$, which contradicts the maximality of $Y_w$.
\end{proof}

\subsection{Running time analysis}\label{sec:EPMR}
In this section, we see that Procedure EPM$(H)$ runs in ${\rm O}(nm)$ time,
where $n := |U^-|$ and $m := |E|$ (note that $|U| = {\rm O}(n)$ since $|U^+| < |U^-|$).
Since the isolated vertices in $H$ can be ignored in the procedure
(which are added to $W$ in Step 0 and just discarded in Step 2.1),
we may assume $n = {\rm O}(m)$.

In Step 0, a perfect matching $M \subseteq E$ in $H$
can be found in ${\rm O}(nm)$ time even by a na\"{i}ve augmenting-path algorithm
(in fact, before calling this procedure,
one has been obtained in the course of computing the DM-decomposition).
In Step 1, since the strongly connected components of the auxiliary graph $H(M)$
are obtained in linear time,
an augmented auxiliary graph $\hH(M)$ is constructed in ${\rm O}(m)$ time.
Since $W$ is monotonically reduced in Step 2.1,
the number of iterations of Step 2 is $|W| = {\rm O}(n)$.
Step 2.2 can be done by performing the breadth first search twice
(i.e., by a na\"{i}ve augmenting-path algorithm originated by Ford and Fulkerson \cite{FF1956}),
which requires ${\rm O}(m)$ time.
The update of $M$ and $\hH$ along a path $P$ in Step 2.3 takes ${\rm O}(n)$ time.
Thus we conclude that the total computational time is bounded by ${\rm O}(nm)$.

\section{Applications}\label{sec:Applications}
We show possible applications of Problem (DMI)
in game theory and in control theory
(see \cite{CB2004} and \cite[Section 6.4]{Murota2000}, respectively, for the details).

\subsection{Bargaining in a two-sided market}
Consider bargaining in a two-sided market
with the seller set $S$ and the buyer set $B$
in which the tradable pairs are exogenously
given as a bipartite graph $G = (S, B; E)$,
where each edge in $E$ represents a tradable pair.
Each seller has an indivisible good and each buyer has money.
The bargaining process is repeated as described in the next paragraph,
and the utility received from a successful trade is defined as follows:
for a prescribed constant $\delta \in (0, 1)$,
if the trade is done at price $p$ at period $t \in \{0, 1, 2, \ldots \,\}$,
then the seller receives $\delta^tp$ and the buyer does $\delta^t(1 - p)$.
Note that all the sellers share one utility function, and so do all the buyers.

The bargaining process is as follows
(see \cite[Section 2.2]{CB2004} for the precise formulation).
All the sellers and all the buyers alternately offer prices in $[0, 1]$
for trade as the {\em proposers}.
Each agent in the other side accepts exactly one offered price
or rejects all of them as a {\em responder},
where the responders do not care with which specific proposer they trade.
For each price $p$ accepted by some responder,
restrict ourselves to the subgraph induced by the agents offering or accepting the price $p$,
and trade is done at price $p$ according to a maximum matching in the subgraph.
Note that there may be several possible choices of maximum matchings.
If there are multiple possibilities,
then one is chosen so that the set of matched agents
is lexicographically minimum in terms of the agent indices given in advance.
Note also that we are not concerned with
which specific edges are used in the maximum matching,
because the utility of each agent depends only on the price $p$ and the period $t$.
Remove all the agents who have traded from the graph,
and repeat the above process for the remaining graph until it has no edge.

A {\em subgame perfect equilibrium} in such a repeated game is, roughly speaking,
a strategy profile (i.e., in the above bargaining game,
the offering prices and the responses to offered prices
of all the agents at all the possible situations)
in which every agent has no incentive to change his or her action at any possible situation.
Corominas-Bosch \cite{CB2004} investigated the utility profile
in each subgame perfect equilibrium in the above game,
which is denoted by PEP for short (standing for a subgame Perfect Equilibrium Payoff).
She captured a typical utility profile
extending unique PEPs in several small markets,
called it the {\em reference solution},
and characterized when the reference solution is indeed a PEP
and moreover when it is a unique PEP.

\begin{theorem}[{Corominas-Bosch \cite[Theorem 1]{CB2004}}]\label{thm:CB2004_Thm1}
  Consider the above bargaining game on a bipartite graph $G = (S, B; E)$.\vspace{-.5mm}
  \begin{itemize}
    \setlength{\itemsep}{.5mm}
  \item
    When $G$ is unbalanced, the reference solution is a PEP
    if and only if $G$ is DM-irreducible.
  \item
    When $G$ is balanced, the reference solution is a PEP
    if and only if $G$ is perfectly matchable.
  \end{itemize}
\end{theorem}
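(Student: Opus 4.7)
The plan is to prove both directions of each bullet by first pinning down the reference solution precisely in terms of the DM-decomposition and then invoking the one-shot deviation principle for subgame perfect equilibria. Explicitly, for a bipartite graph $G = (S, B; E)$ with DM-decomposition $(V_0; V_1, \ldots, V_k; V_\infty)$, the reference solution assigns utility profiles component by component: in $V_0$ (where sellers strictly outnumber their neighborhood) sellers receive $0$ and the matched buyers receive $1$; in $V_\infty$ (the symmetric situation) buyers receive $0$ and matched sellers receive $1$; and in each balanced matching-covered component $V_i$ ($i \in [k]$) all agents receive $\tfrac{1}{2}$. When $G$ is DM-irreducible (unbalanced case) or perfectly matchable (balanced case), this prescription is consistent and corresponds to a single utility level per side; otherwise it assigns inconsistent payoffs across components, which will be the source of the deviation incentives.

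For the \emph{if} direction, I would proceed by induction on the number of periods at which an agent remains active. In the unbalanced DM-irreducible case with $|S| < |B|$, Lemma~\ref{lem:DM-irreducibility} gives $|\Gamma_G(X^+)| \geq |X^+| + 1$ for every nonempty $X^+ \subsetneq S$. Hence for every set of sellers acting as responders, a strict Hall-type surplus of buyer-proposers exists, so after a one-period deviation by any buyer the remaining graph still satisfies the strict Hall condition and the deviator is left unmatched in the lexicographic tiebreaking — the buyer thus prefers to accept the reference price $1$ now rather than be excluded and continue with discounted payoff $0$. Sellers, on the other hand, have no incentive to lower their demand below $1$ because they are guaranteed to be matched by the Hall surplus. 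The balanced perfectly matchable case is analogous, where the symmetry of the $\tfrac{1}{2}$ split makes both sides indifferent between accepting immediately and waiting one period to receive $\delta \cdot \tfrac{1}{2} < \tfrac{1}{2}$.

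For the \emph{only if} direction, I would exploit a violating subset produced by the DM-decomposition. If $G$ is unbalanced but not DM-irreducible, then (by Lemma~\ref{lem:DM-irreducibility} again) there exists a nonempty $X^+ \subseteq S$ with $|\Gamma_G(X^+)| \leq |X^+|$; the agents in $\Gamma_G(X^+)$ form a buyer-scarce sub-market relative to $X^+$, and the reference solution would insist they receive $0$, yet by confining their offers inside $\Gamma_G(X^+)$ they can secure strictly positive utility against the prescribed seller responses. An analogous argument in the balanced non-perfectly-matchable case uses a Hall-violating set on the deficient side together with Condition~5 of Theorem~\ref{thm:DM} (every maximum matching is a union of perfect matchings on components) to locate agents whose reference utility is inconsistent with their bargaining position. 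In each case the deviation is constructed by one responder switching from the reference acceptance rule, so the one-shot deviation principle suffices.

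The main obstacle will be the \emph{if} direction, specifically handling the lexicographic tiebreaking in the matching step of the stage game together with the infinite-horizon continuation. A clean deviation in the \emph{only if} direction is essentially a Hall-violator argument and should be routine once the reference solution is fixed, but verifying that no sequence of correlated off-equilibrium moves across periods destabilizes the proposed profile requires a careful induction that tracks how the active subgraph evolves after each round and shows that DM-irreducibility (respectively perfect matchability) is preserved. I would address this by noting that removing a matched pair $uv$ with $uv \in M$ from a DM-irreducible graph yields a graph whose DM-decomposition is controlled by $f_G$ from \eqref{eq:deficiency}; combined with the eligibility-type matching constructed in Procedure EPM of Section~\ref{sec:EPM}, this gives a maximum matching whose removal keeps the residual graph in the relevant class, closing the induction.
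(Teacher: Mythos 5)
This statement is not proved in the paper at all: it is quoted verbatim from Corominas-Bosch \cite[Theorem~1]{CB2004} and used as a black box in the applications section, so there is no ``paper's own proof'' to compare against. Judged on its own merits, your reconstruction has genuine gaps. First, the reference solution is never actually pinned down, only guessed, and the guess is off in the balanced components: in an alternating-offers game with discount factor $\delta$ the split in a balanced matching-covered component is $\tfrac{1}{1+\delta}$ for the proposing side and $\tfrac{\delta}{1+\delta}$ for the responders, not $\tfrac12$ for everyone, so the indifference computation ``$\delta\cdot\tfrac12<\tfrac12$'' that carries your \emph{if} direction is built on the wrong payoffs. Second, the \emph{only if} direction has the scarcity logic inverted: if $X^+\subseteq S$ satisfies $|\Gamma_G(X^+)|\leq|X^+|$, then the sellers in $X^+$ are on the long side and it is they, not the buyers in $\Gamma_G(X^+)$, whom the reference solution would push to $0$; as written, the deviating agents you identify are exactly the ones who already receive the favorable payoff, so no profitable deviation is exhibited.

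Beyond these local errors, the argument is a plan rather than a proof. A PEP is a payoff profile supported by some subgame perfect strategy profile, so the \emph{if} direction requires constructing full strategies (including off-path behavior under the lexicographic tiebreaking) before the one-shot deviation principle can even be invoked; you explicitly flag this as ``the main obstacle'' and do not resolve it. The proposed fix --- invoking Procedure EPM and eligible perfect matchings from Section~\ref{sec:EPM} --- is a non sequitur: those are devices for the augmentation algorithm's optimality certificate and have no bearing on how the active subgraph of the bargaining game evolves after a round of trade. If you want a proof of this theorem, the place to look is the original source \cite{CB2004}, not this paper.
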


\begin{theorem}[{Corominas-Bosch \cite[Proposition 6]{CB2004}}]\label{thm:CB2004_Prop6}
  Consider the above bargaining game on a bipartite graph $G = (S, B; E)$,
  and suppose that the game starts with the sellers' proposes.
  Then, the restriction of any PEP to $G_0$ is the reference solution to $G_0$,
  where $G_0 = (S_0, B_0; E_0)$ denotes the DM-irreducible component of $G$ with $|S_0| > |B_0|$.
  In particular, if $|S| > |B|$ and $G$ is DM-irreducible,
  then there exists a unique PEP, which is the reference solution.
\end{theorem}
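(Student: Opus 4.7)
The plan is to combine a subgame-perfect-equilibrium analysis of the bargaining game with the Hall-type expansion property of DM-irreducible graphs proved in Lemma~\ref{lem:DM-irreducibility}. First I would identify what the reference solution prescribes on $G_0$ with $|S_0| > |B_0|$: because the seller side is in strict excess supply in $G_0$, Bertrand-style competition among sellers should drive trading prices inside $G_0$ to zero, so the reference solution assigns utility $0$ to every seller in $S_0$, utility $1$ to every buyer in $B_0$, and settles all $|B_0|$ trades of $G_0$ at price $0$ already in period $0$. The target is to show that in any PEP of the full game on $G$, the payoffs of the agents of $G_0$ coincide with these values; the ``in particular'' clause will follow at once once the first clause is established.

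The heart of the argument is a one-shot-deviation argument driven by the expansion property. Fix a candidate PEP and let $u_v$ denote each agent's continuation utility; set $T := \{\, s \in S_0 \mid u_s > 0 \,\}$. Subgame perfection forces every $s \in T$ to trade in period $0$, so $T$ is matched one-to-one to some $T' \subseteq B_0$ at equilibrium prices $p_{b} > 0$ for $b \in T'$. Applying Lemma~\ref{lem:DM-irreducibility} on the buyer side of $G_0$ (where the smaller side of size $|B_0| < |S_0|$ plays the role of $V^+$) yields $|\Gamma_{G_0}(T')| \geq |T'| + 1$, so there exists a seller $s^\ast \in \Gamma_{G_0}(T') \setminus T$ adjacent to some $b^\ast \in T'$. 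A one-shot deviation in which $s^\ast$ quotes the uniform price $p_{b^\ast} - \varepsilon$ for sufficiently small $\varepsilon > 0$ attracts $b^\ast$ (whose payoff strictly improves from $1 - p_{b^\ast}$ to $1 - p_{b^\ast} + \varepsilon$) and yields $s^\ast$ a strictly positive utility, contradicting $u_{s^\ast} = 0$. Hence $T = \emptyset$, every seller in $S_0$ has utility $0$, and then every buyer in $B_0$ must trade at price $0$ in period $0$; feasibility of this outcome is guaranteed because the strong Hall condition on $G_0$ provided by Lemma~\ref{lem:DM-irreducibility} gives a perfect matching of $B_0$ into $S_0$.

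The main obstacle will be twofold. First, the lexicographic tie-breaking built into the protocol could, a priori, leave a solitary undercutter unmatched; this is resolved by observing that at the deviated price $p_{b^\ast} - \varepsilon$ no other seller quotes exactly that price, so the subgraph induced at that price is the single edge $s^\ast b^\ast$, whose unique maximum matching is $\{s^\ast b^\ast\}$. Second, a seller in $S_0$ may have neighbors outside $B_0$, raising the worry that $s^\ast$ prefers to trade externally rather than to perform the undercut; here one exploits the DM-decomposition structure, in particular that maximum matchings decompose along components by Theorem~\ref{thm:DM}(5), to confine the relevant competitive pressure to within $G_0$, so that $|S_0|$ sellers genuinely compete for the $|B_0|$ buyer slots of $G_0$ and external opportunities cannot relax this competition. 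After these technical points are checked, iterating across periods closes the induction, because once the period-$0$ trades remove all of $B_0$ the residue of $G_0$ consists only of isolated sellers; and when $G = G_0$ the PEP is then fully determined, giving uniqueness.
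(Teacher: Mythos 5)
This statement is not proved in the paper at all: it is imported verbatim from Corominas-Bosch \cite[Proposition~6]{CB2004} as background for the application in Section~\ref{sec:Applications}, so there is no in-paper argument to compare yours against --- within this paper the ``proof'' is the citation. Judged on its own as a reconstruction of the game-theoretic result, your sketch has the right flavour: the expansion property of Lemma~\ref{lem:DM-irreducibility}, applied to the short side $B_0$ of $G_0$, is exactly what makes Bertrand undercutting bite when sellers are in excess. But several steps are genuinely incomplete. The claim that subgame perfection forces every seller with $u_s>0$ to trade in period $0$ is unjustified: a discounted trade in a later period also yields $u_s>0$, so $T$ need not be matched in period $0$ and the undercutting target $p_{b^\ast}$ need not exist there. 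Likewise, the passage from ``every seller in $S_0$ gets $0$'' to ``every buyer in $B_0$ trades at price $0$ in period $0$'' (no delay, so buyers get exactly $1$ rather than $\delta^t$) is asserted rather than proved, and in equilibrium arguments of this type that no-delay step is where most of the work lies.

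Your structural worry also points in the wrong direction. By the construction of the DM-decomposition ($V_0^-=\Gamma_G(V_0^+)$; cf.\ Section~\ref{sec:DM-decomposition} and Observation~\ref{obs:source_sink}), every neighbour of a seller in $S_0$ already lies in $B_0$, so sellers of $S_0$ have no external opportunities to control. It is the buyers of $B_0$ who may be adjacent to sellers outside $S_0$, and it is \emph{their} outside options --- the possibility of trading elsewhere at a price below $p_{b^\ast}-\varepsilon$ after the deviation --- that could defeat the undercutting step; Condition~5 of Theorem~\ref{thm:DM} constrains maximum matchings of $G$, not off-path behaviour in subgames, so it does not dispose of this. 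Finally, the uniqueness clause also requires existence of a PEP equal to the reference solution, which is the content of Theorem~\ref{thm:CB2004_Thm1} and is not supplied by your argument. None of this is fatal to the overall strategy, but as written the hard parts (no delay, cross-component edges at buyers, and the usual open-set issues of continuous-price undercutting) are left open, and in any case this is a result to cite, not to re-derive from the paper's machinery.
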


Based on the above characterizations, for the unbalanced case,
our result gives a minimum number of additional tradable pairs
to make such a bargaining game admit a unique PEP,
which is the reference solution.
On the other hand, for the balanced case, the uniqueness of a PEP
is just guaranteed for the complete bipartite graphs \cite[Proposition 5]{CB2004}.
She also gave an example enjoying multiple PEPs,
in which the bipartite graph is not DM-irreducible.
What role the DM-decomposition of perfectly-matchable balanced bipartite graphs plays
in such bargaining has been left as an interesting question.

\subsection{Structural controllability of a linear system}
Consider a linear time-invariant system $(K,A,B)$ in a descriptor form 
$$K\dot{x}=Ax+Bu$$
with state variable $x$ and input variable $u$.
Under the genericity assumption that the set of nonzero entries in $K$, $A$, and $B$
are algebraically independent over $\QQ$, the system $(K, A, B)$
is said to be {\em structurally controllable} if the matrix pencil $A - sK$ is regular
(i.e., $\det(A - sK)\neq 0$ over the polynomial ring $\RR[s]$, where $s$ is an indeterminate)
and $[A - zK \mid B]$ is of row-full rank for every $z\in \CC$.

For a matrix pencil $D(s)$, let $G(D(s))$ denote the associated 
bipartite graph. The both-side vertex sets are the row set and the column set 
of $D(s)$, respectively, and the edges correspond to the nonzero entries of $D(s)$. 

\begin{theorem}[Murota {\cite[Corollary 6.4.8]{Murota2000}}]
  Let $(K,A,B)$ be a linear time-invariant system in a descriptor form 
  with nonsingular $K$. Under the genericity assumption, $(K,A,B)$ is 
  structurally controllable if and only if the following two conditions 
  hold.\vspace{-.5mm}
  \begin{itemize}
    \setlength{\itemsep}{.5mm}
  \item The bipartite graph $G([A\mid B])$ has a perfect matching.
  \item The bipartite graph $G([A-sK\mid B])$ is DM-irreducible.
  \end{itemize}
\end{theorem}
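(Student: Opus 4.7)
The plan is to reduce structural controllability to a pair of matrix-rank conditions on the pencil $[A - sK \mid B]$ and then translate each into its bipartite-graph reformulation using the genericity assumption. Since $K$ is nonsingular, $\det(A - sK)$ is a polynomial in $s$ of degree exactly $n$ with leading coefficient $(-1)^n \det K \neq 0$, so $A - sK$ is automatically regular; structural controllability thus reduces to the condition that $[A - zK \mid B]$ have row-full rank $n$ for every $z \in \CC$. Under the genericity assumption, the rank of a matrix whose nonzero entries are algebraically independent over $\QQ$ coincides with the term rank of its bipartite graph; specializing at $z = 0$ immediately shows that condition (i) is equivalent to ``$[A \mid B]$ has row-full rank'', giving the necessity of (i). The remaining task is to show, assuming (i), that row-full rank of $[A - zK \mid B]$ at every $z \in \CC$ is equivalent to DM-irreducibility of $G' := G([A - sK \mid B])$.

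For the necessity of (ii), suppose $G'$ is not DM-irreducible, and pick a balanced DM-component $V_i$ minimal in the DM partial order $\sqsubseteq$ (such $V_i$ exists since (i) forces the component $V_0$ of the DM-decomposition to be empty). The block-triangular structure induced by the DM-decomposition makes $M_i(s) := [A - sK \mid B]_{V_i^+, V_i^-}$ a square diagonal block, and row-full rank at every $z \in \CC$ forces $p_i(s) := \det M_i(s)$ to have no complex roots, i.e., to be a nonzero constant. By genericity, $\deg p_i = 0$ forces every edge of $G'[V_i]$ lying in an $A$-column to correspond to a zero $K$-entry, so the restriction of $K$ to rows $V_i^+$ and $A$-columns within $V_i^-$ vanishes. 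The minimality of $V_i$ restricts the support of rows $V_i^+$ in $[A - sK \mid B]$ to $V_i^-$ itself, and together these two facts force every row of $K$ indexed by $V_i^+$ to be identically zero, contradicting the nonsingularity of $K$.

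For the sufficiency, assume (i) and (ii) and suppose for contradiction that $[A - z_0 K \mid B]$ has rank below $n$ at some $z_0 \in \CC$. A nonzero left null vector $y$ with support $X^+ \subseteq V^+$ yields a linear dependence among the rows $X^+$ of $[A - z_0 K \mid B]_{\cdot,\, \Gamma_{G'}(X^+)}$; Lemma~\ref{lem:DM-irreducibility} together with (ii) gives $|\Gamma_{G'}(X^+)| \ge |X^+| + 1$, so at $z_0$ every $|X^+| \times |X^+|$ minor of $[A - sK \mid B]_{X^+,\, \Gamma_{G'}(X^+)}$ would have to vanish simultaneously, while under genericity and DM-irreducibility this family of minors---viewed as polynomials in $s$---shares no common complex root. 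The main obstacle is justifying this ``no common complex root'' assertion rigorously; the cleanest route is the combinatorial canonical form (CCF) for matrix pencils developed in \cite[Chapter~6]{Murota2000}, which localizes finite rank-drop points of a pencil precisely to the balanced components of the DM-decomposition of its bipartite graph---components that are absent under DM-irreducibility.
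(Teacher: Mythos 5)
The paper does not prove this statement: it is imported verbatim from Murota's book as background for the application in Section~\ref{sec:Applications}, so there is no in-paper proof to compare your argument against. Judged on its own terms, your sketch is essentially sound on the necessity side: regularity from $\det K\neq 0$, the identification of condition (i) with full row rank of $[A\mid B]$ via term rank, and the argument that a minimal balanced DM-component $V_i$ (which exists once (i) forces $V_0=\emptyset$) gives a square block whose determinant must be a nonzero constant, hence (using that DM-components are matching covered, so every edge of the block lies in some perfect matching of it --- a point you should state) a set of identically zero rows of $K$. This is correct modulo the standard genericity fact that the top-degree coefficient of such a determinant does not cancel.

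The sufficiency direction contains the gap you yourself flag, and it is the heart of the theorem rather than a technicality. The claim that the $|X^+|\times|X^+|$ minors of $[A-sK\mid B]$ restricted to rows $X^+$ and columns $\Gamma(X^+)$ share no common complex root is exactly what must be proved, and your appeal to the CCF is not the right statement: generic-pencil theory localizes the \emph{nonzero} finite rank-drop points to the square DM-components, but a common root at $s=0$ can occur even when $G([A-sK\mid B])$ is DM-irreducible. For instance, take $A=0$, $K$ generic with full support, and $B$ generic with $1\leq m<n$ input columns: the bipartite graph of the pencil is complete, hence DM-irreducible, yet every maximal minor is divisible by $s^{\,n-m}$ and the rank drops at $z=0$. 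It is condition (i) that excludes this, and your sketch never says where (i) enters the sufficiency argument. Without separating the possible root at $0$ (controlled by the term rank of $[A\mid B]$) from the nonzero roots (controlled by DM-irreducibility), the ``no common root'' assertion as you state it is false; and even with that separation, the nonzero-root half still rests on Murota's Kronecker-form analysis of generic pencils, so what you have is a reduction to the cited source rather than an independent proof.
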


This characterization enables us to check efficiently 
if a given linear system is structurally controllable. 
If it turns out not to be, then a natural question is how 
to modify the system to make it structurally controllable. 
If $G([A\mid B])$ admits a perfect matching, our result
provides an answer to this question by identifying the minimum 
number of additional connections between the variables and the 
equations required to make the entire system structurally controllable. 

It would be more desirable if one can extend this approach to 
the case in which $G([A\mid B])$ may not have a perfect matching. 
It is also interesting to deal with the case of singular $K$. 
These problems are left for future investigation.

\section*{Acknowledgments}
We are grateful to L\'{a}szl\'{o}~A.~V\'{e}gh and Andr\'as Frank
for their insightful comments.
This work was supported
by the MTA-ELTE Egerv\'ary Research Group,
by the Hungarian National Research, Development and Innovation Office -- NKFIH grant K109240,
by JST CREST Grant Number JPMJCR14D2,
by JSPS KAKENHI Grant Number JP16H06931,
and by JST ACT-I Grant Number JPMJPR16UR.

\begin{appendix}
\section{On Reduction of Unbalanced Case to Balanced Case}\label{sec:app}
Although the unbalanced case is satisfactorily discussed via the reduction to matroid intersection in Section~\ref{sec:WMI},
we here provide an alternative discussion through the reduction
to the balanced case shown in Section \ref{sec:unbalanced_to_balanced}:
for an input unbalanced bipartite graph $G = (V^+, V^-; E)$ with $|V^+| < |V^-|$,
we construct a balanced bipartite graph $G' = (V^+ \cup Z^+, V^-; E')$
by adding a set $Z^+$ of new vertices that are adjacent to all the vertices in $V^-$,
i.e., $E' = E \cup (Z^+ \times V^-)$.

\subsection{Alternative proof of the min-max duality (Theorem~\ref{thm:duality_unbalanced})}\label{sec:app2}
In this section, we derive the min-max duality theorem for the unbalanced case
(Theorem~\ref{thm:duality_unbalanced}) from that for the balanced case (Theorem~\ref{thm:duality}).
First, we see the following weak duality as a corollary of Lemma~\ref{lem:weakdual}
(the weak duality in the balanced case) via the reduction. 

\begin{corollary}\label{cor:weakdual}
  Let $G = (V^+, V^-; E)$ be a bipartite graph with $|V^+| < |V^-|$.
  Then, for any edge set $F \subseteq (V^+ \times V^-) \setminus E$
  such that $G + F$ is DM-irreducible
  and any subpartition $\cX^+$ of $V^+$, we have $|F| \geq \tau_G(\cX^+)$.
\end{corollary}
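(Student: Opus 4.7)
The plan is to derive this weak duality from the balanced-case weak duality (Lemma~\ref{lem:weakdual}) via the reduction in Lemma~\ref{lem:balanced}. First I would introduce the balanced companion $G' = (V^+ \cup Z^+, V^-; E')$ of $G$ from Lemma~\ref{lem:balanced}, where $|Z^+| = |V^-| - |V^+|$ and $E' = E \cup (Z^+ \times V^-)$. The key observation is that this reduction behaves well under edge additions in $V^+ \times V^-$: for any $F \subseteq (V^+ \times V^-) \setminus E$, we have $F \subseteq ((V^+ \cup Z^+) \times V^-) \setminus E'$ (since the new edges of $E'$ live inside $Z^+ \times V^-$), and moreover $G' + F$ is exactly the balanced companion of $G + F$. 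Hence Lemma~\ref{lem:balanced} applied to $G + F$ yields that $G + F$ is DM-irreducible if and only if $G' + F$ is.

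Next I would treat $\cX^+$ as a subpartition of the larger ground set $V^+ \cup Z^+$. It is trivially proper in the sense of Lemma~\ref{lem:weakdual}: either $\cX^+ = \emptyset$ (in which case $\tau_G(\cX^+) = 0 \le |F|$ and the statement is immediate), or $\cX^+$ is a subpartition of $V^+ \subsetneq V^+ \cup Z^+$, so $\cX^+ \neq \{V^+ \cup Z^+\}$. The crucial numerical point is that $\tau_{G'}$ and $\tau_G$ agree on such subpartitions: for every $X^+ \subseteq V^+$ we have $\Gamma_{G'}(X^+) = \Gamma_G(X^+)$, because the only additional edges of $E'$ emanate from $Z^+$ and therefore do not expand the neighborhood of any $X^+ \subseteq V^+$. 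Summing over $X^+ \in \cX^+$ gives $\tau_{G'}(\cX^+) = \tau_G(\cX^+)$.

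With these two observations in hand, Lemma~\ref{lem:weakdual} applied to the balanced graph $G'$, the edge set $F$, and the proper subpartition $\cX^+$ of $V^+ \cup Z^+$ yields $|F| \geq \tau_{G'}(\cX^+) = \tau_G(\cX^+)$, which is the desired inequality. The argument is essentially routine once the reduction is set up correctly; I expect the only point that needs a careful sentence is the double check that $G' + F$ literally coincides with the balanced companion $(G + F)'$, so that Lemma~\ref{lem:balanced} may legitimately be invoked to transport DM-irreducibility from $G + F$ to $G' + F$.
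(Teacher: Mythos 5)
Your proposal is correct and follows exactly the route the paper intends: the paper states Corollary~\ref{cor:weakdual} as an immediate consequence of Lemma~\ref{lem:weakdual} via the reduction of Lemma~\ref{lem:balanced}, and your write-up simply fills in the (routine but worth checking) details that $G'+F=(G+F)'$, that $\cX^+$ is a proper subpartition of $V^+\cup Z^+$, and that $\tau_{G'}$ agrees with $\tau_G$ on subpartitions of $V^+$. All three verifications are accurate, so nothing is missing.
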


We now start to prove Theorem~\ref{thm:duality_unbalanced}.
Let $G = (V^+, V^-; E)$ be a bipartite graph with $|V^+| < |V^-|$.
By Corollary~\ref{cor:weakdual},
it suffices to construct a subpartition $\cX^+$ of $V^+$ with $\tau_G(\cX^+) = {\rm opt}(G)$.
If $|V^-| = 1$, then $G$ itself is DM-irreducible,
and $\cX^+ := \emptyset$ is a subpartition of $V^+$ with $\tau_G(\cX^+) = 0 = {\rm opt}(G)$.
In what follows, we assume $|V^-| \geq 2$.

Let $G' = (V^+ \cup Z^+, V^-; E')$ be the balanced bipartite graph
that is constructed above. 
By Theorem~\ref{thm:duality},
there exists a proper subpartition $\cY$ of $V^+ \cup Z^+$ or of $V^-$
such that $\tau_{G'}(\cY) = {\rm opt}(G') = {\rm opt}(G)$.
Suppose that $\cY$ is a proper subpartition of $V^+ \cup Z^+$.
Since every vertex in $Z^+$ is adjacent to all the vertices in $V^-$,
for each $X^+ \subseteq V^+ \cup Z^+$ with $X^+ \cap Z^+ \neq \emptyset$,
we have $|X^+| - |\Gamma_{G'}(X^+)| + 1 = |X^+| - |V^-| + 1 \leq 0$.
By the maximality of $\tau_{G'}(\cY)$,
we may assume that $\cY$ contains no such $X^+$,
i.e., $\cY$ is a subpartition of $V^+$.
We then obtain a desired subpartition $\cX^+ := \cY$ of $V^+$
with $\tau_G(\cX^+) = \tau_{G'}(\cY) = {\rm opt}(G)$.

Otherwise, $\cY$ is a nonempty proper subpartition of $V^-$.
Suppose that $\cY$ contains two distinct elements $X^-, Y^- \in \cY$.
By the definition of $E'$,
we have $\emptyset \neq Z^+ \subseteq \Gamma_{G'}(X^-) \cap \Gamma_{G'}(Y^-)$,
which implies $|\Gamma_{G'}(X^- \cup Y^-)| = |\Gamma_{G'}(X^-) \cup \Gamma_{G'}(Y^-)| \leq |\Gamma_{G'}(X^-)| + |\Gamma_{G'}(Y^-)| - 1$.
Hence,
\[\left(|X^-| - |\Gamma_{G'}(X^-)| + 1\right) + \left(|Y^-| - |\Gamma_{G'}(Y^-)| + 1\right) \leq |X^- \cup Y^-| - |\Gamma_{G'}(X^- \cup Y^-)| + 1.\]
This enables us to replace $X^-$ and $Y^-$ with $X^- \cup Y^-$
without reducing the value of $\tau_{G'}(\cY)$.
Thus, by the maximality of $\tau_{G'}(\cY)$,
we may assume $\cY = \{Y^-\}$ for some nonempty $Y^- \subsetneq V^-$.
If $\Gamma_{G'}(Y^-) = V^+ \cup Z^+$,
then $\tau_{G'}(\cY) = |Y^-| - |V^+ \cup Z^+| + 1 = |Y^-| - |V^-| + 1 \leq 0$,
and hence $\cX^+ := \emptyset$ is a desired subpartition of $V^+$.
Otherwise, let $X^+ := V^+ \setminus \Gamma_G(Y^-) = (V^+ \cup Z^+) \setminus \Gamma_{G'}(Y^-) \neq \emptyset$ and $\cX^+ := \{X^+\}$.
We then see
\begin{align*}
  \tau_G(\cX^+) &= |X^+| - |\Gamma_{G}(X^+)| + 1\\[1mm]
  &= \left(|V^+ \cup Z^+| - |\Gamma_{G'}(Y^-)|\right) - |\Gamma_{G'}(X^+)| + 1\\[1mm]
  &= \left(|V^-| - |\Gamma_{G'}(X^+)|\right) - |\Gamma_{G'}(Y^-)| + 1\\[1mm]
  &\geq |Y^-| - |\Gamma_{G'}(Y^-)| + 1 = \tau_{G'}(\cY),
\end{align*}
which concludes that $\cX^+$ is a desired subpartition of $V^+$.

\subsection{Running time of Algorithm DMI}\label{sec:unbalanced}
The reduction to the balanced case increases the size of the input graph.
In particular, $G'$ may have an essentially larger number of edges than $G$,
i.e., $|E'| \neq {\rm O}(m)$, where $|V^+| < |V^-| = n$ and $|E| = m$.
While Algorithm DMI$(G')$ is just guaranteed
to run in ${\rm O}(n|E'|)$ time in Section~\ref{sec:Time},
it actually requires ${\rm O}(nm)$ time.
The following observation is useful to the analysis.

\begin{observation}\label{obs:balanced}
  Let $(V_0; V_1, V_2, \ldots, V_k; V_\infty)$ be the DM-decomposition of $G'$,
  and $M' \subseteq E'$ a maximum matching in $G'$.
  Then the following conditions hold.\vspace{-.5mm}
  \begin{itemize}
    \setlength{\itemsep}{.5mm}
  \item
    $M'$ consists of a maximum matching in $G$
    and a perfect matching in $Z^+ \times V^-$.
  \item
    $Z^+$ is included in a single strongly connected component of $G'(M') = G' + \overline{M'}$,
    which is a unique source component, and hence $s(G'(M')) = 1$.
  \item
    If $V_\infty \neq \emptyset$, then $Z^+ \subseteq V_\infty^+$, and hence $G' - V_\infty = G - V_\infty$.
    In particular, $G'[V_0] = G[V_0]$.
  \end{itemize}
\end{observation}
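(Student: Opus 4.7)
The plan is to verify each of the three bullets in turn, repeatedly exploiting that every vertex in $Z^+$ is adjacent to every vertex of $V^-$ in $G'$.

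For the first bullet, I would compute $\nu(G')$ via the deficiency form of K\"onig's theorem: every $X \subseteq V^+ \cup Z^+$ meeting $Z^+$ satisfies $\Gamma_{G'}(X) = V^-$ and hence $|X| - |\Gamma_{G'}(X)| \le |V^+ \cup Z^+| - |V^-| = 0$, while for $X \subseteq V^+$ the deficiency reduces to $|X| - |\Gamma_G(X)|$. Therefore $\nu(G') = (|V^+| + |Z^+|) - \max_{X \subseteq V^+}(|X| - |\Gamma_G(X)|) = \nu(G) + |Z^+|$. Splitting $M' = M_1 \cup M_2$ with $M_1 := M' \cap E$ and $M_2 := M' \setminus M_1 \subseteq Z^+ \times V^-$, the trivial bounds $|M_1| \le \nu(G)$ and $|M_2| \le |Z^+|$ combined with $|M'| = \nu(G) + |Z^+|$ force equality throughout, so $M_1$ is a maximum matching in $G$ and $M_2$ is a perfect matching in $Z^+ \times V^-$.

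For the second bullet, let $C^\ast$ be the strongly connected component of $G'(M')$ containing any fixed $z^\ast \in Z^+$. For any $z, z' \in Z^+$ the two-step walk $z \to M_2(z') \to z'$ (a forward edge followed by the reverse of an $M_2$-edge) exists because each element of $Z^+$ is adjacent to every vertex of $V^-$ and $M_2$ saturates $Z^+$; consequently $Z^+$ lies in a single SCC, which also contains $\partial^- M_2$. To confirm $C^\ast$ is a source component, I would check that no forward $E'$-edge can enter $Z^+$ (since $Z^+ \cap V^- = \emptyset$) and the only reverse edges into $Z^+$ are $\overline{M_2}$-edges from $\partial^- M_2 \subseteq C^\ast$; the analogous check at $\partial^- M_2$ rules out any external edge entering $C^\ast$. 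Uniqueness follows by showing every other SCC is reachable from $C^\ast$: each $v \in V^-$ is reached directly via some forward edge $z \to v$, and each $u \in \partial^+ M'$ is reached via $z \to M'(u) \to u$.

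For the third bullet, suppose $V_\infty \neq \emptyset$. Applied to the balanced graph $G'$, Theorem~\ref{thm:DM} tells us this is equivalent to $G'$ admitting no perfect matching, i.e.\ $V^- \setminus \partial^- M' \neq \emptyset$. Any exposed $w \in V^- \setminus \partial^- M'$ is reachable from every $z \in Z^+$ via the direct edge $zw \in E'$, so the auxiliary-graph characterization of $V_\infty$ recalled in Section~\ref{sec:DM-decomposition} places $Z^+$ in $V_\infty^+$. Hence $V_0 \cap Z^+ = \emptyset$, and since $E' \setminus E \subseteq Z^+ \times V^-$, no edge of $E' \setminus E$ is induced by $V_0$, yielding $G'[V_0] = G[V_0]$ and, symmetrically, $G' - V_\infty = G - V_\infty$.

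The delicate step is the uniqueness clause of the second bullet in the case $\nu(G) < |V^+|$: any $M_1$-unmatched $u \in V^+$ has no incoming edge in $G'(M')$ at all, so $\{u\}$ by itself looks like a stand-alone source component. I expect this must be reconciled by invoking the third bullet to place every such $u$ in $V_0$, strictly below $Z^+ \subseteq V_\infty^+$ in the DM partial order, and reading the ``unique source component'' assertion within the relevant $V_\infty$-block where the EPM procedure is actually applied; disentangling the source structure of $G'(M')$ from that of its $V_\infty$-restriction will be the main work.
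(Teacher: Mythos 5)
The paper states this as an Observation and supplies no proof, so there is no argument of the authors to compare against; your verification of the first and third bullets is correct and is surely what was intended. For the second bullet, the two-step walk $z \to M_2(z') \to z'$ does place $Z^+$ (together with $\partial^- M_2$) in one strongly connected component $C^\ast$, but your source-component check is incomplete: $C^\ast$ can properly contain $Z^+ \cup \partial^- M_2$ (any $M_1$-covered $u \in V^+$ with a neighbour in $C^\ast \cap V^-$ is absorbed via $z \to M_1(u) \to u$), so you must exclude edges entering \emph{every} vertex of $C^\ast \cap V^-$, not only $\partial^- M_2$. The one dangerous case is a forward edge $uw$ with $u \in V^+$ exposed by $M_1$ and $w \in C^\ast \cap V^-$; it is ruled out because it would give a directed path $u \to w \leadsto z \to w'$ in $G'(M')$ to an exposed $w' \in V^-$ (such $w'$ exists precisely when such $u$ does, since both sides have $|V^+| - |M_1|$ exposed vertices), i.e., an $M'$-augmenting path, contradicting maximality of $M'$. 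With that patch, $C^\ast$ is indeed a source component.

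Your closing worry is justified and cannot be reconciled: the uniqueness clause and the conclusion $s(G'(M')) = 1$ are false whenever $V_0 \neq \emptyset$. Every $u \in V^+ \setminus \partial^+ M'$ has no entering edge in $G'(M')$ and is a singleton source component --- the paper itself records this in Observation~\ref{obs:source_sink} and exploits the resulting partition of source components in the proof of Claim~\ref{cl:cX^-} --- so $s(G'(M')) \geq 1 + (|V^+| - |M_1|) \geq 2$ in that case. Concretely, $V^+ = \{a,b\}$, $V^- = \{c,d,e\}$, $E = \{ac, bc\}$, $Z^+ = \{z\}$, $M' = \{ac, zd\}$ yields the two source components $\{b\}$ and $\{z,d\}$. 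What is true, and what Appendix~\ref{sec:unbalanced} actually uses, is that $C^\ast$ is the unique source component of $G'(M')[V_\infty]$: by the third bullet $C^\ast \subseteq V_\infty$, no edge enters $V_\infty$, and any other source component inside $V_\infty$ would contain no exposed left vertex (those lie in $V_0$), hence would meet $V^-$ and receive an edge from $Z^+$. So your proposed repair --- reading the uniqueness inside the $V_\infty$-block --- is exactly the correct statement; the second bullet of the Observation should be amended accordingly rather than proved as written.
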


By Observation~\ref{obs:balanced},
a maximum matching $M' \subseteq E$ in $G'$
consists of a maximum matching in $G$ and a perfect matching in $G'[Z^+ \cup V^-]$.
Hence, we can find a maximum matching in $G'$ in ${\rm O}(nm)$ time
just by doing so in $G$ and adding an arbitrary perfect matching
between the exposed vertices in $G'[Z^+ \cup V^-]$.
In addition, since $Z^+$ is included in a single strongly connected component of $G'(M')$,
we can regard $Z^+$ as a single vertex in computing the strongly connected component of $G'(M')$.
This makes it possible to obtain the strongly connected components of $G'(M')$
in ${\rm O}(n + m)$ time, which concludes that Step 0 can be done in ${\rm O}(nm)$ time.

Since Step 4 is also done in ${\rm O}(n + m)$ time by the same argument,
it suffices to bound the running time of Step 2 by ${\rm O}(nm)$.
If $V_\infty = \emptyset$, then we do not reach Step 2.
Otherwise, by Observation~\ref{obs:balanced},
we see $Z^+ \subseteq V_\infty^+$ and $G'[V_0] = G[V_0]$.
Hence, one can find an eligible perfect matching in $G'[V_0]$ in ${\rm O}(nm)$ time by Procedure EPM.
In addition, since no edge enters $V_\infty$ in $G'(M')$ by Observation~\ref{obs:source_sink},
the strongly connected component including $Z^+$ is a unique source component
also in $G'(M')[V_\infty]$, and hence $s(G'(M')[V_\infty]) = 1$.
This condition does not depend on the choice of $M'$,
which means that all the perfect matchings in $G'[V_\infty]$ is eligible.
Hence, we do not need to use Procedure EPM for finding an eligible perfect matching in $G'[V_\infty]$,
which concludes that Step 2 can be done in ${\rm O}(nm)$ time.

\section{Finding an Optimal Subpartition}\label{sec:app1}
In our min-max duality theorems (Theorems \ref{thm:duality} and \ref{thm:duality_unbalanced}),
we take the maximum of
\begin{align*}
  \tau_G(\cX) = \sum_{X \in \cX}\left(|X| - |\Gamma_G(X)| + 1\right),
\end{align*}
over all (proper) subpartitions $\cX$ of $V^+$ (and of $V^-$).
This situation is generalized as follows.
Given an intersecting supermodular function $g \colon 2^S \to \RR$ with $g(\emptyset) = 0$
over some finite set $S$, find a (proper) subpartition $\cX$ of $S$ that maximizes
\[\tau_g(\cX) := \sum_{X \in \cX} g(X).\]
With the aid of efficient submodular function minimization algorithms,
one can find such a maximizer $\cX$ in polynomial time as follows.

Let $Q(g)$ be the associated polyhedron defined by
\[Q(g) = \{\, z \mid z \in \RR_{\geq 0}^S,\ z(X) \geq g(X) \ (\forall X \subseteq S) \,\},\]
where $z(X) := \sum_{v \in X} z_v$.
Note that for any $z \in Q(g)$ and any subpartition $\cX$ of $S$,
we have $z(S) \geq \tau_g(\cX)$.
Consider the following algorithm.
\begin{description}
  \setlength{\itemsep}{.5mm}
\item[Step 0.]
  Take an arbitrary vector $z \in Q(g)$.
  Set $U \leftarrow S$ and $j \leftarrow 0$.

\item[Step 1.]
  While $z(U) > 0$ do the following.
  
  \setlength{\leftskip}{5mm}

\item[Step 1.1.]
  Select an arbitrary element $v \in U$ with $z_v > 0$.

\item[Step 1.2.]
  Compute $\alpha := \min\{\, z(X) - g(X) \mid v \in X \subseteq U \,\}$.
  If $\alpha < z_v$, then $j \leftarrow j + 1$,
  let $X_j$ be a unique maximal minimizer,
  $z_v \leftarrow z_v - \alpha$, and $U \leftarrow U \setminus X_j$.
  Otherwise, $z(v) \leftarrow 0$.
\end{description}\vspace{1mm}

Let $k$ be the value of $j$ at the end of this algorithm.
Then, $\cX := \{X_1, X_2, \ldots, X_k\}$ is a subpartition of $S$.
The vector $z$ remains in $Q(g)$ throughout the algorithm.
At the end of the algorithm, we have $z(X_j) = g(X_j)$ for every $j \in [k]$, and $z(U) = 0$.
Thus we obtain $z(S) = \tau_g(\cX)$, which implies that $\cX$ maximizes $\tau_g(\cX)$
over all subpartitions of $S$.

In order to find an optimal ``proper'' subpartition of $S$,
one can use the above algorithm to obtain an optimal subpartition of $S \setminus \{v\}$ for each $v \in S$,
and take the best among all the obtained subpartitions.
\end{appendix}


\begin{thebibliography}{99}



\bibitem{BCG1988}
C.~Brezovec, G.~Cornu\'ejols, F.~Glover:
\newblock{A matroid algorithm and its application to the efficient solution of two optimization problems on graphs}.
\newblock{\em Mathematical Programming}, {\bfseries 42} (1988), pp.~471--487.

\bibitem{CB2004}
M.~Corominas-Bosch:
\newblock{Bargaining in a network of buyers and sellers}.
\newblock{\em Journal of Economic Theory},
{\bfseries 115} (2004), pp.~35--77.

\bibitem{DER1986}
I.~S.~Duff, A.~M.~Erisman, J.~K.~Reid:
\newblock{\em Direct Methods for Sparse Matrices},
Oxford: Clarendon Press, 1986.

\bibitem{DM1958}
A.~L.~Dulmage, N.~S.~Mendelsohn:
\newblock{Coverings of bipartite graphs}.
\newblock{\em Canadian Journal of Mathematics},
{\bfseries 10} (1958), pp.~517--534.

\bibitem{DM1959}
A.~L.~Dulmage, N.~S.~Mendelsohn:
\newblock{A structure theory of bipartite graphs of finite exterior dimension}.
\newblock{\em Transactions of the Royal Society of Canada, Ser.~\Rn{3}},
{\bfseries 53} (1959), pp.~1--13.

\bibitem{Edmonds1970}
J.~Edmonds:
\newblock{Submodular functions, matroids, and certain polyhedra}.
\newblock{\em Proceedings of the Calgary International Conference on Combinatorial Structures and Their Applications}, pp.~69--87, 1970;
also in {\em Combinatorial Optimization --- Eureka, You Shrink!}
(M.~J\"unger, G.~Reinelt, G.~Rinaldi, eds.),
Springer-Verlag, pp.~11--26, 2003.

\bibitem{ET1976}
K.~P.~Eswaran, R.~E.~Tarjan:
\newblock{Augmentation problems}.
\newblock{\em SIAM Journal on Computing}, {\bfseries 5} (1976), pp.~653--665.

\bibitem{FF1956}
L.~R.~Ford, Jr., D.~R.~Fulkerson:
\newblock{Maximal flow through a network}.
\newblock{\em Canadian Journal of Mathematics}, {\bfseries 8} (1956), pp.~399--404.


\bibitem{Frank2011}
A.~Frank:
\newblock{\em Connections in Combinatorial Optimization},
Oxford University Press, 2011.

\bibitem{FJ1995}
A.~Frank, T.~Jord\'an:
\newblock{Minimal edge-coverings of pairs of sets}.
\newblock{\em Journal of Combinatorial Theory, Ser.~B}, {\bfseries 65} (1995), pp.~73--110.

\bibitem{FV2008}
A.~Frank, L.~V\'{e}gh:
\newblock{An algorithm to increase the node-connectivity of a digraph by one}.
\newblock{\em Discrete Optimization}, {\bfseries 5} (2008), pp.~677--684.

\bibitem{Fujishige2005}
S.~Fujishige:
\newblock{\em Submodular Functions and Optimization}, 2nd ed.,
Elsevier, 2005.


\bibitem{GX1989}
H.~N.~Gabow, Y.~Xu:
\newblock{Efficient algorithms for independent assignments on graphic and linear matroids}.
\newblock{\em Proceedings of the 30th Annual Symposium on Foundations of Computer Science $($FOCS 1989$)$}, pp.~106--111, 1989.

\bibitem{GX1996}
H.~N.~Gabow, Y.~Xu:
\newblock{Efficient theoretic and practical algorithms for linear matroid intersection problems}.
\newblock{\em Journal of Computer and System Sciences}, {\bfseries 53} (1996), pp.~129--147.


\bibitem{HKK2016}
C.-C.~Huang, N.~Kakimura, N.~Kamiyama:
\newblock{Exact and approximation algorithms for weighted matroid intersection}.
\newblock{\em Proceedings of the 27th Annual ACM-SIAM Symposium on Discrete Algorithms $($SODA 2016$)$}, pp.~430--444, 2016.


\bibitem{Krishnamoorthy1975}
M.~S.~Krishnamoorthy:
\newblock{An NP-hard problem in bipartite graphs}.
\newblock{\em ACM SIGACT News}, {\bfseries 7} (1975), p.~26.


\bibitem{LP1986}
L.~Lov\'{a}sz, M.~D.~Plummer:
\newblock{\em Matching Theory},
Akad\'{e}miai Kiad\'{o}, 1986.

\bibitem{Menger1927}
K.~Menger:
\newblock{Zur allgemeinen Kurventheorie}.
\newblock{\em Fundamenta Mathematicae}, {\bfseries 10} (1927), pp.~96--115.

\bibitem{Murota2000}
K.~Murota:
\newblock{\em Matrices and Matroids for Systems Analysis},
Springer-Verlag, 2000. 

\bibitem{Schrijver2003}
A.~Schrijver:
\newblock{\em Combinatorial Optimization --- Polyhedra and Efficiency},
Springer-Verlag, 2003.

\bibitem{Tarjan1972}
R.~E.~Tarjan:
\newblock{Depth-first search and linear graph algorithms}.
\newblock{\em SIAM Journal on Computing}, {\bfseries 1} (1972), pp.~146--160.

\bibitem{VB2008}
L.~V\'{e}gh, A.~Bencz\'{u}r:
\newblock{Primal-dual approach for directed vertex connectivity augmentation and generalizations}.
\newblock{\em ACM Transactions on Algorithms}, {\bfseries 4} (2008), No.~20.

\end{thebibliography}
\end{document}